\newtheorem{prop}{Proposition}
\newtheorem{lem}[prop]{Lemma}
\newtheorem{thm}[prop]{Theorem}
\newtheorem{cor}[prop]{Corollary}
\theoremstyle{definition}
\newtheorem{defi}[prop]{Definition}
\newtheorem{ex}[prop]{Example}
\newtheorem{rem}[prop]{Remark}
\newcommand{\PG}{\mathcal{P}_q(n)}
\newcommand{\rk}{\mathrm{rank}}
\newcommand{\wt}{\mathrm{wt}}
\newcommand{\Gr}{\mathcal{G}_q(k,n)}
\newcommand{\field}[1]{\mathbb{#1}}
\newcommand{\C}{\field{C}}
\newcommand{\F}{\field{F}}
\newcommand{\cF}{{\cal F}}
\newcommand{\cA}{{\cal A}}
\newcommand{\cC}{{\cal C}}
\newcommand{\cG}{{\cal G}}
\newcommand{\cP}{{\cal P}}
\newcommand{\deff}{\mbox{$\stackrel{\rm def}{=}$}}
\newcommand{\Gauss}[2]{
\left[\begin{array}
{c}#1\\#2\end{array}\right]_{q}
}
\begin{document}

\title{Subspace Codes based on Graph Matchings,\\ Ferrers Diagrams and Pending Blocks}

\author{
Natalia Silberstein and Anna-Lena Trautmann
\thanks{N.~Silberstein is with the Department of Computer Science, Technion --- Israel Institute of Technology, Haifa 32000, Israel (email: natalys@cs.technion.ac.il).}
\thanks{A.-L.~Trautmann is with the Institute of Mathematics, University of Zurich, Switzerland (email: anna-lena.trautmann@math.uzh.ch).}

\thanks{The first author is supported in part at the Technion by a Fine Fellowship.}
\thanks{The second author was partially supported by Forschungskredit of the University of Zurich, grant no. 57104103, and Swiss National Science Foundation Fellowship no. 147304.}
\thanks{Parts of this work were presented at ISIT 2013 in Istanbul, Turkey.}
\thanks{Copyright (c) 2014 IEEE.
}

}

\maketitle

\begin{abstract}
  This paper provides new constructions and lower bounds for subspace codes, using 
  Ferrers diagram rank-metric codes from matchings of the complete graph and pending  blocks.
  We present different constructions for constant dimension codes with minimum injection distance $2$ or $k-1$, where $k$ is the constant dimension. Furthermore, we present a construction of new codes from old codes for any minimum distance. Then we construct non-constant dimension codes from these codes.
  Some examples of codes obtained by these constructions are the  largest known codes for the given parameters.
  \end{abstract}


\begin{IEEEkeywords}
Constant dimension codes,
Ferrers diagram rank-metric codes,
graph matchings,
Grassmannian,
subspace codes.
\end{IEEEkeywords}


\section{Introduction}
\label{sec:intro}

Let $\F_q$ be the finite field of size $q$. Given two integers $k,n$, such that $0\leq k\leq n$, the set of all $k$-dimensional subspaces of $\F_q^n$ forms the Grassmannian over $\F_q$, denoted by $\Gr$. It is well known that the cardinality of the Grassmannian is given by the $q$-\emph{ary Gaussian coefficient}
\[|\Gr|=\Gauss{n}{k} \deff\ \prod_{i=0}^{k-1}\frac{q^{n-i}-1}{q^{k-i}-1}.
\]

The set of all subspaces of $\F_q^n$ is denoted by $\PG$. It holds that $\PG=\bigcup_{k=0}^n \Gr$.

Both the  \emph{subspace distance}, defined as
\begin{equation}
\label{def_subspace_distance}
d_S (X,\!Y) \,\ \deff\ \dim X + \dim Y -2 \dim\bigl( X\, {\cap}Y\bigr),
\end{equation}
 and the \emph{injection distance}, defined as
 \begin{equation}
\label{def_injection_distance}
d_I (X,\!Y) \,\ \deff\ \max\{\dim X , \dim Y\} - \dim\bigl( X\, {\cap}Y\bigr),
\end{equation}
for any two distinct subspaces $X$ and $Y$ in $\PG$,
  are metrics on $\PG$, and hence also on $\Gr$. Note that for $X,Y\in \Gr$ it holds that $d_S(X,Y)=2 d_I(X,Y)$.

We say that $\C\subseteq \Gr$ is an $(n,M,d,k)_q$ \emph{code in
the Grassmannian}, or \emph{constant-dimension code}, if $M =
|\C|$ and
the minimum injection distance of the code is $\min\{d_I(X,Y) \mid X,Y \in \C, X\neq Y\}=d$.
Since $d_S(X,Y)=2 d_I(X,Y)$ for $X,Y\in \Gr$, the minimum subspace distance of a constant-dimension code is twice the minimum injection distance of the code, thus one can equivalently use the subspace distance instead of the injection distance. Both notations, with the injection or with the subspace minimum distance, can be found in the literature.
Furthermore, we call $\C\subseteq \PG$ an $(n,M,d)_q^S$ \emph{subspace code}, or \emph{projective space code}, if $M =
|\C|$ and the minimum subspace distance of the code is $\min\{d_S(X,Y) \mid X,Y \in \C, X\neq Y\}=d$.
If we use the injection distance instead of the subspace distance, we call  $\C\subseteq \PG$ an $(n,M,d)_q^I$ subspace code.
$A_q(n,d,k)$ will denote the maximum size of an $(n,M,d,k)_q$ code.
By $A^*_q(n,d,k)$ we denote the size of the largest known $(n,M,d,k)_q$ code.

Subspace codes, and constant dimension codes in particular, have drawn significant attention in the last six years due to
the work by Koetter and Kschischang~\cite{ko08}, where they
presented an application of such codes for error correction in
random network coding. Constructions and bounds for constant dimension codes were
given e.g.\ in~\cite{bo09,et09,et12,et11,si11Enum,si11Lexi,ga11,ga10,kh09,ko08p,ma08p,sk10,tr10}. For non-constant dimension codes some results can be found in~\cite{et09,kh09,kh09master,et11,kh09p}.

One notes that the codes obtained by a simple construction based on \emph{lifting} of \emph{maximum rank distance} (MRD) codes~\cite{si08j} are almost optimal, i.e., asymptotically attain the known upper bounds~\cite{et11,ko08}. However, it is of interest to provide constructions of constant dimension codes which are larger than the lifted MRD codes. The first step in this direction was done in~\cite{et09}, where the \emph{multilevel} construction was presented. This construction generalizes the lifted  MRD codes construction by introducing a new family of rank-metric codes having a given shape of their codewords, namely, \emph{Ferrers diagram rank-metric codes}. Further, some other constructions were presented in~\cite{bo09,ga11,ga10,et12,et11,kh09,sk10,tr10}. Most of them provide constant dimension codes which contain a lifted MRD code as a subcode. Another type of constructions includes \emph{orbit} or \emph{cyclic} codes~\cite{et11,ko08p,tr10p}. In~\cite{et12}, an upper bound on the cardinality of codes which contain a lifted MRD code was presented for some sets of parameters. For constant dimension $k=3$ this bound was attained by using a generalization of a \emph{pending dots} based construction, presented in~\cite{tr10}.

In this paper, we continue with this direction of constructing large constant dimension codes which contain lifted MRD codes. We present new families of codes which have the largest known cardinality. The ideas for these constructions generalize the ideas presented in~\cite{si08j,et09,tr10,et12}.

First, we present new $(n,M,k-1,k)_q$ codes. These codes have the second largest possible injection distance $k-1$  (codes  having the largest possible injection distance $k$ are called \emph{(partial) spread} codes and were considered in e.g.~\cite{be75,go13,ma08p}).
 This case corresponds to the largest error correction, where the code cardinality can be improved.
 Our new codes are based on a two-dimensional generalization of pending dots, which we call \emph{pending blocks}. Based on this approach we construct $(n,M,k-1,k)_q$ codes of cardinality
\begin{equation}
\label{eq:cardIntro}
M =  q^{2(n-k)}+\sum_{j=3}^{k-1} q^{2(n-\sum_{i=j}^k i)}+\Gauss{n-\frac{k^2+k-6}{2}}{2}.
\end{equation}
Note that our new construction requires the field size $q$ to be large enough, namely, $q^2+q+1\geq n-\frac{k^2+k-6}{2}$.
For smaller fields however, we slightly modify the construction and obtain codes that have almost the same cardinality as in~(\ref{eq:cardIntro}).

Next, we focus on codes with the smallest non-trivial injection distance $d_I=2$ (a code with the smallest possible distance $d_I=1$ is the trivial code which contains the whole Grassmannian). It was shown in~\cite{et12} that the gap between the cardinality of a lifted MRD code and the known upper bounds increases for smaller values of the minimum distance. Thus the  minimum distance $d_I=2$ corresponds to the case where the most significant improvement in the code cardinality is possible.
We start with the multilevel construction of~\cite{et09}. The main drawback of this construction is that it depends on the choice of the underlying constant weight code, but the best choice for such a code is still unknown. As a consequence, the cardinality of constant dimension codes obtained by the multilevel construction can not be written in a general form. We consider a specific choice of a constant weight code for the multilevel construction. This constant weight code is based on an one-factorization of a complete graph.
The cardinality of the proposed $(n,M,2,k)_q$ code can be derived and recursively gives the lower bound
\begin{eqnarray}
\label{eq:card_B}
&A_q(n,2,k) \; \geq \hspace{0.4cm} \sum_{i=1}^{\lfloor\frac{n-2}{k}\rfloor -1} \Big( q^{(k-1)(n-ik)}+\nonumber \left.\frac{(q^{2(k-2)}-1)(q^{2(n-ik-1)}-1)}{(q^4-1)^2} q^{(k-3)(n-ik-2)+4}\right).
\end{eqnarray}

Then, we combine the idea of one-factorization based constant weight codes with the pending blocks construction and present a new family of $(n,M,2,k)_q$  codes.  Here, we use the one-factorization of a specific node labelling of the complete graph to provide codes with large cardinality.

In addition,  we present a simple way to construct a new constant dimension code from an old one, with the same minimum distance. Surprisingly, for some parameters this construction provides the largest known codes  (see Table II in Section~\ref{sec:Tables}).
In particular, we derive the following recursive formula for the maximum cardinality of a constant dimension code, for any $n\geq 3k$ and
$n\geq \Delta  \geq k$:
\[A_q(n,d,k) \geq q^{\Delta (k-d+1)} A_q(n-\Delta ,d,k) + A_q(\Delta,d,k)  .\]

We compare our constructions with other known constructions of constant dimension codes. For this we first analyze the difference of the cardinalities of our first three constructions with the cardinality formula of the \emph{multicomponent construction} \cite{ga11,tr13phd}, which is the largest known general construction with a closed cardinality formula. We show that the improvement of our construction compared to the multicomponent construction grows exponentially in the dimension $n$ of the ambient space, for the relevant cases with $d=2$ or $d=k-1$.
Next we compare our constructions for some parameter sets with the multilevel construction.
This construction does not have a closed cardinality formula, but gives rise to the largest known constant dimension codes for many parameter sets. One can see that in some cases our constructions beat the multilevel construction, while in other cases they do not (see Tables I and II in Section~\ref{sec:Tables}).

Finally, we consider non-constant dimension codes.
We use the constant dimension codes constructed in this paper as well as the largest codes from~\cite{et09,et12} and apply the puncturing method~\cite{et09} to obtain large codes for both the subspace and the injection metric.


The rest of this paper is organized as follows.
In Section~\ref{sec:preliminaries} we introduce the necessary definitions and two known constructions which will be the starting points to our new constructions. In Section~\ref{sec:recursion} we introduce the notation of pending blocks and present a construction for an $(n, M, k-1,k)_q$ code.
In Section~\ref{sec:constructionsk=4} we consider properties of Ferrers diagrams  arising from matchings of complete graphs and discuss the constructions for $(n, M, 2,k)_q$ codes. In Section~\ref{sec:new-old} we present a construction of a new code from a given one. Section~\ref{sec:Tables} presents the comparison between the new codes obtained in the paper and some previously known codes. We consider constructions of non-constant dimension codes in Section~\ref{sec:non-constant} and conclude with Section~\ref{sec:conclusion}.


\section{Preliminaries and Related Work}
\label{sec:preliminaries}

 In this section we briefly provide the definitions and previous results used in our constructions. More details can be found in~\cite{et09,et12,tr10}.

\subsection{Representations of Subspaces and Multilevel Construction}

Let $X$ be a $k$-dimensional subspace of $\F_q^n$. We represent $X$ by the matrix $\mbox{RE}(X)$ in reduced
row echelon form, such that the rows of $\mbox{RE}(X)$ form a basis of $X$. The \emph{identifying
vector} of $X$, denoted by $v(X)$, is the binary vector
of length $n$ and weight $k$, where the  $k$ \emph{ones} of $v(X)$ are exactly in the
positions where $\mbox{RE}(X)$ has the leading coefficients (the
pivots).
All the binary vectors of length $n$ and weight $k$
can be considered as  the identifying vectors of all the subspaces
in $\Gr$.
These $\binom{n}{k}$ vectors partition  $\Gr$
into $\binom{n}{k}$ different classes, where each class, also called a \emph{cell} of $\Gr$,
consists of all subspaces in $\Gr$ with the same identifying
vector.

Recall that the Hamming metric on $\F_q^n$ is defined as $d_H(u,v)\deff \wt(u-v)$, where $\wt(w)$ denotes the number of nonzero entries in the vector $w$. The asymmetric metric on $\F_2^n$ is defined as $d_{asym}(u,v)\deff \max\{N(u,v), N(v,u)\}$, where $N(u,v)$ denotes the number of coordinates $i$ where $u_i=1$ and $v_i=0$ \cite{kh09}.
The following results are useful tools for constructions of subspace codes.

\begin{prop}[\cite{et09,kh09,kh09p}]\label{prop3}
For
$X,Y\in\PG$ we have
\begin{itemize}
  \item $d_S(X,Y)\geq d_H(v(X),v(Y))$  ,
  \item $d_I(X,Y) \geq d_{asym}(v(X),v(Y))$  .
\end{itemize}
\end{prop}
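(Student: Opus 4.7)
The plan is to reduce both inequalities to a single geometric fact about the pivot sets of the two subspaces. Write $P_X \subseteq \{1, \ldots, n\}$ for the set of pivot columns of $\mbox{RE}(X)$, so that $v(X)$ is exactly the indicator vector of $P_X$ and $|P_X| = \dim X$. With this notation I would immediately rewrite the two right-hand sides as
\[
d_H(v(X), v(Y)) = \dim X + \dim Y - 2|P_X \cap P_Y|, \qquad
d_{asym}(v(X), v(Y)) = \max\{\dim X, \dim Y\} - |P_X \cap P_Y|.
\]
Comparing with the definitions (\ref{def_subspace_distance}) and (\ref{def_injection_distance}) of $d_S$ and $d_I$, both inequalities of the proposition collapse to the single geometric claim $\dim(X \cap Y) \leq |P_X \cap P_Y|$.

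To establish this, I would first prove the auxiliary lemma that pivot sets are monotone under inclusion: if $Z \subseteq X$ then $P_Z \subseteq P_X$. The key is the intrinsic characterization of pivots, namely that an index $i$ lies in $P_X$ exactly when some vector $x \in X$ has its first nonzero coordinate in position $i$. This follows from the uniqueness of the reduced row echelon form, since every element of $X$ is a linear combination of the rows of $\mbox{RE}(X)$ and the leading position of any nonzero such combination must coincide with the leading position of one of those rows. Given $i \in P_Z$, a witness $z \in Z \subseteq X$ with first nonzero entry at $i$ then shows $i \in P_X$.

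Applying the lemma to $Z = X \cap Y$, which is contained in both $X$ and $Y$, yields $P_{X \cap Y} \subseteq P_X \cap P_Y$ and hence $\dim(X \cap Y) = |P_{X \cap Y}| \leq |P_X \cap P_Y|$. Substituting this inequality into the two identities from the first paragraph gives both claims of the proposition simultaneously. The only non-routine step is the pivot-inheritance lemma $P_Z \subseteq P_X$; I expect it to be the sole technical obstacle, and once it is in hand, everything else is bookkeeping with set cardinalities.
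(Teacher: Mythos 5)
Your proof is correct and complete. The paper does not prove Proposition~\ref{prop3} at all---it is quoted from \cite{et09,kh09,kh09p}---so there is nothing internal to compare against; your reduction of both inequalities to the single claim $\dim(X\cap Y)\leq |P_X\cap P_Y|$, justified by the intrinsic characterization of pivot positions as leading positions of nonzero vectors of the subspace, is exactly the standard argument used in those references.
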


The {\it Ferrers tableaux form} of a subspace $X$, denoted by
$\cF(X)$, is obtained from $\mbox{RE}(X)$ first by removing from
each row of $\mbox{RE}(X)$ the {\it zeros} to the left of the
leading coefficient; and after that removing the columns which
contain the leading coefficients. All the remaining entries are
shifted to the right. The \emph{Ferrers diagram} of $X$, denoted
by $\cF_X$, is obtained from $\cF(X)$ by replacing the entries of
$\cF(X)$ with dots.

 Given $\cF(X)$, the unique corresponding
subspace ${X\in\Gr}$ can easily be found. Also given $v(X)$, the unique corresponding $\cF_X$ can be found. When we fill the dots of a Ferrers diagram by elements of $\F_q$, we obtain a $\cF(X)$ for some $X\in \Gr$.

\begin{ex}  Let $X$ be the subspace in $\mathcal G_2(3,7)$ with
the  following  generator matrix in reduced row echelon form:

\begin{footnotesize}
$$\mbox{RE}(X)=\left( \begin{array}{ccccccc}
\bf{1} & \color{red}0 & 0 & 0 & \color{red} 1  & \color{red} 1& \color{red} 0\\
0 & 0 & \bf{1} & 0 & \color{red} 1 & \color{red}0 & \color{red}1 \\
0 & 0 & 0 &  \bf{1} &\color{red} 0& \color{red} 1 & \color{red} 1
\end{array}
\right) ~.$$
\end{footnotesize}
Its identifying vector is $v(X)=1011000$, and its Ferrers tableaux
form and Ferrers diagram are given by

\begin{footnotesize}
$$\begin{array}{cccc}
0 & 1 & 1 & 0 \\
&1 & 0 & 1  \\
&0 & 1 & 1
\end{array},~~~
\;
\begin{array}{cccc}
 \bullet & \bullet & \bullet & \bullet \\
  & \bullet & \bullet & \bullet   \\
  & \bullet & \bullet & \bullet  \\
\end{array},$$
\end{footnotesize}
respectively.
\end{ex}

In the following we will consider Ferrers diagram rank-metric codes, which are closely related to constant dimension codes.
For two $m \times \ell$ matrices $A$ and $B$ over $\F_q$ the {\it
rank distance}, $d_R(A,B)$, is defined by
$$
d_R (A,B)  \deff\text{rank}(A-B).
$$
\begin{prop}[\cite{et09,kh09}]
\label{pr:equal id}
For
$X,Y\in\PG$ we have that
 if $v(X)=v(Y)$ then
 \begin{itemize}
   \item $d_S(X,Y)=2d_R(\mbox{RE}(X),\mbox{RE}(Y))$,
   \item $d_I(X,Y)=d_R(\mbox{RE}(X),\mbox{RE}(Y))$.
 \end{itemize}
\end{prop}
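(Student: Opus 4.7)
The plan is to reduce both equalities to a single dimension computation and then invoke the definitions of $d_S$ and $d_I$. Set $A := \mbox{RE}(X)$ and $B := \mbox{RE}(Y)$. Since $v(X) = v(Y)$, the matrices $A$ and $B$ have pivots in exactly the same set of $k$ columns, and in each such pivot column both $A$ and $B$ display a standard basis vector (a single $1$ in the pivot row, zeros elsewhere). Consequently $A$ and $B$ agree on every pivot column, so the difference $A - B$ is identically zero on those $k$ coordinates.

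Next, I would compute $\dim(X+Y)$ by stacking $A$ on top of $B$ to form a $2k \times n$ matrix $M$ and reading off its rank. Applying the invertible block row operation that replaces the bottom $k$ rows of $M$ by $B - A$ does not change the rank. The key structural claim is that the row space of $A$ and the row space of $B - A$ intersect trivially, so $\rk M = \rk A + \rk(B-A) = k + d_R(A,B)$. To verify the claim, project onto the $k$ pivot coordinates: the row space of $A$ projects onto all of $\F_q^k$ because its pivot columns form the identity, whereas the row space of $B - A$ projects to $\{0\}$ by the preceding paragraph. Any vector lying in both row spaces must therefore project to zero, and since the projection restricted to the row space of $A$ is a bijection onto $\F_q^k$, such a vector must itself be zero.

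With this, the modular law yields $\dim(X \cap Y) = \dim X + \dim Y - \dim(X+Y) = 2k - (k + d_R(A,B)) = k - d_R(A,B)$, and substituting into the definitions immediately gives $d_S(X,Y) = 2k - 2(k - d_R(A,B)) = 2\,d_R(A,B)$ and $d_I(X,Y) = k - (k - d_R(A,B)) = d_R(A,B)$, which are the two claimed identities.

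The only genuinely nontrivial step is the trivial-intersection claim used to split the rank of $M$; I expect this pivot-coordinate projection argument to be the main obstacle, while the rest is bookkeeping from the definitions of $d_S$, $d_I$, reduced row echelon form and rank distance.
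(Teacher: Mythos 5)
Your proof is correct and complete; note that the paper itself states this proposition as a cited result from the literature and gives no proof of its own. Your argument — stacking $\mbox{RE}(X)$ over $\mbox{RE}(Y)$, replacing the lower block by the difference, and using the common pivot columns to split the rank as $k+d_R$ — is exactly the standard argument from the cited references, and it is the same rank computation the paper carries out explicitly in its proof of Theorem~\ref{thm4}, so no further comparison is needed.
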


Let $\cF$ be a Ferrers diagram with $m$ dots in the rightmost
column and $\ell$ dots in the top row. A code $\cC_{\cF}$ is an
$[\cF,\rho,d ]$ {\it Ferrers diagram rank-metric (FDRM) code}  if
all codewords of $\cC_{\cF}$ are $m\times \ell$ matrices in which
all entries not in $\cF$ are {\it zeros}, they form a linear subspace of dimension $\rho$ of
$\F_q^{m \times \ell}$, and for any two distinct codewords $A$
and $B$, $d_R (A,B) \geq d $.
If $\cF$ is a rectangular $m\times \ell$ diagram with $m \ell$ dots then the FDRM code is a classical rank-metric code~\cite{ga85a,ro91}.
The following theorem provides an upper bound on the cardinality of $\cC_{\cF}$.

\begin{thm}[\cite{et09}]\label{thm1}
Let $\cF$ be a Ferrers diagram and $\cC_{\cF}$ the corresponding $[\cF,\rho,d ]$ FDRM code.
Then
$ \rho \leq \min_i\{w_i\}$,
where $w_i$ is the number of dots in $\cF$ which are not contained in the first
$i$ rows and the rightmost $d -1-i$ columns ($0\leq i\leq d -1$).
\end{thm}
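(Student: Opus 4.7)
The plan is to carry out a Singleton-type puncturing argument: for each fixed $i$ with $0 \leq i \leq d-1$, I will construct a linear map from $\cC_{\cF}$ to an $\F_q$-vector space of dimension $w_i$ and show that this map is injective, forcing $\rho \leq w_i$.

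First, fix $i \in \{0,1,\dots,d-1\}$. Let $S_i \subseteq \cF$ denote the set of dot positions that are \emph{not} contained in the first $i$ rows and \emph{not} contained in the rightmost $d-1-i$ columns, so $|S_i| = w_i$ by definition. Define the puncturing map
\[
\phi_i : \cC_{\cF} \longrightarrow \F_q^{S_i}, \qquad A \longmapsto (A_{(r,c)})_{(r,c) \in S_i},
\]
which simply reads off the entries of a codeword at the positions in $S_i$. This map is $\F_q$-linear and its target space has dimension $w_i$.

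The key step is to show that $\phi_i$ is injective. Suppose $A \in \cC_{\cF}$ lies in $\ker \phi_i$. Then $A_{(r,c)} = 0$ for every $(r,c) \in S_i$. Since by definition $A$ is already zero outside $\cF$, this means every nonzero entry of $A$ must lie either in one of the first $i$ rows or in one of the rightmost $d-1-i$ columns. Thus $A$ can be written as a sum of a matrix supported on $i$ rows and a matrix supported on $d-1-i$ columns, which gives the rank bound
\[
\rk(A) \leq i + (d-1-i) = d-1.
\]
Because $\cC_{\cF}$ is a linear code with minimum rank distance $d$, every nonzero codeword has rank at least $d$. Hence $A = 0$, proving $\ker \phi_i = \{0\}$ and consequently $\rho = \dim \cC_{\cF} \leq w_i$. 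Taking the minimum over $i \in \{0,1,\dots,d-1\}$ yields the claim.

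I do not anticipate a serious obstacle: the only nontrivial observation is that a matrix whose support is contained in the union of $i$ rows and $d-1-i$ columns has rank at most $i + (d-1-i) = d-1$, which is immediate by writing $A$ as a sum of a row-supported and a column-supported matrix and applying subadditivity of rank. The linearity of $\cC_{\cF}$, asserted in the definition of an $[\cF,\rho,d]$ FDRM code, is essential so that the minimum distance coincides with the minimum weight (i.e.\ minimum rank of a nonzero codeword).
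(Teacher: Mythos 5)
Your proof is correct and is essentially the standard argument from the cited source \cite{et09}: project onto the dots outside the first $i$ rows and rightmost $d-1-i$ columns, and use the fact that a matrix supported on the union of $i$ rows and $d-1-i$ columns has rank at most $d-1$ to conclude injectivity from the minimum distance. The paper itself only cites this result without reproving it, and your write-up fills in exactly the intended reasoning with no gaps.
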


A code which attains the bound of Theorem~\ref{thm1} is
called a \emph{Ferrers diagram maximum rank distance (FDMRD) code}.
%
Maximum rank distance (MRD) codes are a class of $[\cF,\ell(m-d +1),d ]$ FDMRD codes, $\ell\geq m$, with a full $m\times \ell$ diagram $\cF$, which attain the bound of Theorem~\ref{thm1}~\cite{ga85a,ro91}.

It was proved in~\cite{et09} that for general diagrams the bound of Theorem~\ref{thm1} is attained for $d =1,2$:
\begin{thm}\label{thm2}
For any Ferrers diagram $\cF$ there exists an $[\cF,\rho,d ]$ FDMRD code for $d=1$ or $d=2$.
\end{thm}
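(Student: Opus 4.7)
The plan is to treat the two distance values separately. For $d=1$, Theorem~1 degenerates to $\rho \leq w_0 = |\cF|$ (no row or column is removed), and the bound is attained by the trivial code consisting of all $m\times\ell$ matrices supported in $\cF$: it is linear of dimension $|\cF|$, and any two distinct codewords have nonzero difference, hence rank distance at least $1$.

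The substantial case is $d=2$. Since $\cF$ is right-aligned, every one of its $m$ rows reaches the rightmost column and the top row contains all $\ell$ dots of its level, so $w_0 = |\cF|-m$, $w_1 = |\cF|-\ell$, and Theorem~1 gives $\rho \leq |\cF| - \max(m,\ell)$. First I would invoke the existence of a Gabidulin-type MRD code $\cC_{\mathrm{MRD}} \subseteq \F_q^{m\times\ell}$ with minimum rank distance $2$ and $\F_q$-dimension
\[
\max(m,\ell)\bigl(\min(m,\ell)-1\bigr) \;=\; m\ell - \max(m,\ell);
\]
this is classical~\cite{ga85a,ro91}, with the case $m>\ell$ obtained by transposing a Gabidulin code on $\ell\times m$ matrices.

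Next I would form
\[
\cC_\cF \;:=\; \cC_{\mathrm{MRD}} \cap V_\cF,
\]
where $V_\cF$ is the $|\cF|$-dimensional subspace of $\F_q^{m\times\ell}$ of matrices that vanish outside $\cF$. By construction each codeword is supported in $\cF$, and each nonzero codeword has rank at least $2$ as an element of $\cC_{\mathrm{MRD}}$. Applying the standard inequality $\dim(U\cap V) \geq \dim U + \dim V - m\ell$ in the ambient $m\ell$-dimensional space yields
\[
\dim \cC_\cF \;\geq\; \bigl(m\ell - \max(m,\ell)\bigr) + |\cF| - m\ell \;=\; |\cF| - \max(m,\ell) \;=\; \rho,
\]
while Theorem~1 gives the reverse inequality. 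Hence $\cC_\cF$ is an $[\cF,\rho,2]$ FDMRD code.

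No genuine obstacle is expected: everything collapses to the intersection-dimension inequality combined with the existence of a rectangular MRD code. The only point to double-check is that the MRD code retains enough slack after restricting to an arbitrary Ferrers support, which is precisely what the uniform formula $m\ell - \max(m,\ell)$ guarantees in both orientations ($m\le\ell$ and $m>\ell$) via transposition.
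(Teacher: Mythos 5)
Your argument is correct and is essentially the same as the proof in the cited source~\cite{et09}, which the paper itself does not reproduce: there the $[\cF,\rho,2]$ code is obtained as the subcode of a rectangular $d=2$ MRD code consisting of all codewords that vanish outside $\cF$, which is exactly your intersection $\cC_{\mathrm{MRD}}\cap V_\cF$ together with the same dimension count $\dim(U\cap V)\geq \dim U+\dim V-m\ell$ against the Theorem~\ref{thm1} bound $\rho\leq|\cF|-\max(m,\ell)$. Your explicit handling of the $m>\ell$ orientation by transposition is the right way to make the formula $m\ell-\max(m,\ell)$ uniform.
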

Some special cases, when this bound is attained for $d >2$, can also be found in~\cite{et09}.

For a codeword $A \in \cC_{\cF} \subseteq \F_q^{k \times (n-k)}$
let $A_{\cF}$ denote the part of
$A$ related to the entries of $\cF$ in $A$.

\begin{defi}
Given an
 FDMRD code $\cC_{\cF}$, a \emph{lifted FDMRD code}
$\C_{\cF}$ is defined as follows:
$$\C_{\cF} = \{X\in \Gr :
\cF(X)=A_{\cF},~ A \in \cC_{\cF} \}.
$$
\end{defi}
This definition is the generalization of the definition of a lifted
MRD code~\cite{si08j}. Note, that all the codewords of a lifted MRD code have the same identifying vector of the type $(\underset{k}{\underbrace{11...1}}\underset{n-k}{\underbrace{000...00}})$.
The following theorem~\cite{et09} is the generalization of the result
given in~\cite{si08j}.

\begin{thm}
\label{lem:dist_lift} If $\cC_{\cF} \subset \F_q^{k \times (n-k)}$
is an $[ \cF , \rho , d  ]$ Ferrers diagram
rank-metric code, then its lifted code $\C_{\cF}$ is an $(n, q^\rho ,
d  , k)_q$ constant dimension code.
\end{thm}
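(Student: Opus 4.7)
The plan is to verify the four parameters $(n, q^\rho, d, k)_q$ in turn; three of them are immediate from the construction, and only the distance claim requires real work.

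\textbf{Ambient space, dimension, and cardinality.} By definition every element of $\C_{\cF}$ is a subspace of $\F_q^n$ whose reduced row echelon form has $k$ rows (one for each row of the underlying FDRM codeword), so $\C_{\cF} \subseteq \Gr$. For cardinality I would show the lifting map $A \mapsto X$ is a bijection from $\cC_{\cF}$ to $\C_{\cF}$. Injectivity is clear because the reduced row echelon form of $X$ is uniquely determined by the identifying vector (fixed by $\cF$) together with the entries filling the Ferrers diagram (which are exactly $A_{\cF}$); distinct $A \in \cC_{\cF}$ produce distinct $A_{\cF}$ since entries outside $\cF$ are forced to zero by hypothesis. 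Since $\cC_{\cF}$ is an $\F_q$-subspace of dimension $\rho$, this gives $|\C_{\cF}| = q^\rho$.

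\textbf{Minimum distance.} The key observation is that every $X \in \C_{\cF}$ shares the same identifying vector $v$ (the one whose Ferrers diagram is $\cF$). By Proposition~\ref{pr:equal id}, for any two distinct $X, Y \in \C_{\cF}$ we have
\[
d_I(X,Y) \;=\; d_R\bigl(\mbox{RE}(X), \mbox{RE}(Y)\bigr).
\]
Now consider $\mbox{RE}(X) - \mbox{RE}(Y)$ entrywise. In the pivot columns both matrices carry the identical columns of $I_k$, so these cancel. In the non-pivot positions outside $\cF$ both matrices are forced to zero by the reduced row echelon form (these are the zeros to the left of each leading coefficient, which were removed precisely when producing $\cF(X)$). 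The only surviving entries lie inside $\cF$, and there the difference equals $A_{\cF} - B_{\cF}$, where $A, B \in \cC_{\cF}$ are the preimages of $X, Y$. Padding entries outside $\cF$ with zero, this is exactly $A - B \in \F_q^{k \times (n-k)}$, so $d_R(\mbox{RE}(X), \mbox{RE}(Y)) = d_R(A,B)$. Because $\cC_{\cF}$ has minimum rank distance $d$, we get $d_I(X,Y) \geq d$, with equality whenever $A,B$ achieve the minimum of $\cC_{\cF}$. Hence the minimum injection distance of $\C_{\cF}$ is exactly $d$.

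\textbf{Main obstacle.} The argument is essentially bookkeeping once the correspondence between reduced row echelon matrices and Ferrers tableaux is made precise. The only delicate step is justifying that the non-pivot, non-Ferrers entries of $\mbox{RE}(X)$ and $\mbox{RE}(Y)$ really do cancel in the difference; this requires invoking the defining property of reduced row echelon form (forced zeros to the left of each leading coefficient), which is exactly what guarantees these entries are zero in both matrices and therefore contribute nothing to the rank distance.
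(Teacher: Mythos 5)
Your proof is correct. Note that the paper itself gives no proof of this statement --- it is quoted from Etzion--Silberstein (reference [et09]) as a known generalization of the lifted-MRD result of [si08j] --- so there is no in-paper argument to compare against; your argument (same identifying vector for all lifted subspaces, hence $d_I = d_R$ by Proposition~\ref{pr:equal id}, pivot columns cancel in the difference and the non-pivot part of $\mbox{RE}(X)-\mbox{RE}(Y)$ is exactly $A-B$, plus the bijection giving $|\C_{\cF}|=q^{\rho}$) is precisely the standard proof from that reference and is complete.
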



 The multilevel construction~\cite{et09} for constant dimension codes is based on Proposition~\ref{prop3} and Theorem~\ref{lem:dist_lift}:

{\textbf{Multilevel Construction.}}
 First,  a binary constant weight code of
length $n$, weight $k$, and Hamming distance $2 d$ is
chosen to be the set of the identifying vectors
for~$\C$. Then, for each identifying vector a corresponding lifted
 FDRM code with minimum injection distance $d$
is constructed. The union of these lifted 
 FDRM
codes is an $(n, M, d , k)_q$ code.


\subsection{One-Factorization of Complete Graphs and the Pending Dots Construction}
\label{subsec:Pending Dots Preliminaries}
In the construction provided in~\cite{et12}, for $k=3$ and $d =2$, in the stage of choosing the identifying vectors for a code $\C$, a set of vectors with minimum (Hamming) distance $2d -2=2$ is allowed, by using a method based on pending dots in a Ferrers
diagram~\cite{tr10}, which will be explained in the following.


The \emph{pending dots} of a Ferrers diagram $\cF$ are the leftmost dots in the
first row of $\cF$ whose removal has no impact on the size of the
corresponding Ferrers diagram rank-metric code. The following lemma follows from~\cite{tr10}.

\begin{lem}\label{lm:pending dots}
Let $X$ and $Y$  be two subspaces in $\Gr$ with
$d_H(v(X),v(Y))=2d -2$, such that the leftmost \emph{one} of
$v(X)$ is in the same position as the leftmost  \emph{one} of
$v(Y)$.  Let $P_X$ and $P_Y$ be the sets of pending dots of $X$ and $Y$, respectively.
If $P_X\cap P_Y\neq \varnothing$
and the entries in $P_X\cap P_Y$ (of their Ferrers tableaux forms) are
assigned with different values in at least one position, then
$d_S(X,Y) = 2d_I(X,Y) \geq 2d .$
\end{lem}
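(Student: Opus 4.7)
The plan is to show $\dim(X \cap Y) \leq k-d$, equivalently $d_I(X,Y) \geq d$. The strategy is to leverage the pending-dot hypothesis to place $X \cap Y$ inside the $(k-1)$-dimensional subspace $\tilde X$ spanned by all rows of $\mbox{RE}(X)$ except the first, and then invoke Proposition~\ref{prop3} on the pair $(\tilde X, Y) \in \PG \times \PG$.

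Let $p_0$ denote the common leftmost pivot position of $X$ and $Y$, and let $p \in P_X \cap P_Y$ be a position at which the first-row entries of $\mbox{RE}(X)$ and $\mbox{RE}(Y)$ differ. By the definition of pending dots as the leftmost dots of the first row of the Ferrers diagram, $p$ is a non-pivot position for both $X$ and $Y$ and lies strictly to the left of the second pivot of each. Consequently, in both $\mbox{RE}(X)$ and $\mbox{RE}(Y)$ only the first row can have a nonzero entry at column $p$; every other row has a leading zero there. Justifying this structural observation from the pending-dot definition is the main technical hurdle.

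Now let $v \in X \cap Y$ and expand $v = \sum_i \alpha_i x_i = \sum_j \beta_j y_j$ in the reduced row echelon bases of $X$ and $Y$, respectively, where $x_1$ and $y_1$ denote the rows with pivot at $p_0$. Reading the coordinate at column $p_0$ yields $\alpha_{p_0} = v_{p_0} = \beta_{p_0}$, since $p_0$ is a pivot column of both $X$ and $Y$ and only $x_1$ (respectively $y_1$) is nonzero there. Reading the coordinate at column $p$ and using the observation above yields $v_p = \alpha_{p_0}(x_1)_p = \beta_{p_0}(y_1)_p$, so $\alpha_{p_0}\bigl((x_1)_p - (y_1)_p\bigr) = 0$. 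Since $(x_1)_p \neq (y_1)_p$ by hypothesis, $\alpha_{p_0} = 0$, and hence $v \in \tilde X$. This proves $X \cap Y \subseteq \tilde X$, so $X \cap Y = \tilde X \cap Y$.

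Finally, apply Proposition~\ref{prop3} to $\tilde X$ and $Y$. The identifying vector $v(\tilde X)$ is obtained from $v(X)$ by zeroing its $p_0$-th coordinate, so $N(v(Y), v(\tilde X)) = N(v(Y), v(X)) + 1 = d$, yielding $d_{asym}(v(\tilde X), v(Y)) \geq d$. Proposition~\ref{prop3} then gives $d_I(\tilde X, Y) \geq d$, i.e., $\dim(\tilde X \cap Y) \leq k - d$. Therefore $\dim(X \cap Y) \leq k - d$, which is equivalent to $d_S(X,Y) = 2 d_I(X,Y) \geq 2d$.
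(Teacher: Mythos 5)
Your proof is correct, and it is worth noting that the paper itself does not prove this lemma at all --- it is imported from \cite{tr10} --- so the closest in-paper comparison is Theorem~\ref{thm4}, the pending-\emph{block} generalization, which is proved by row-reducing the stacked matrix $\left[\begin{smallmatrix}\mbox{\scriptsize RE}(X)\\ \mbox{\scriptsize RE}(Y)\end{smallmatrix}\right]$ and counting pivots. Your route is genuinely different: instead of a rank computation you show $X\cap Y\subseteq \tilde X:=\langle x_2,\dots,x_k\rangle$ and then apply Proposition~\ref{prop3} to the pair $(\tilde X,Y)$ of \emph{unequal} dimensions, using $N(v(Y),v(\tilde X))=N(v(Y),v(X))+1=d$. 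All the individual steps check out: in RREF the pivot column $p_0$ forces $\alpha_{p_0}=v_{p_0}=\beta_{p_0}$; the column $p$ forces $\alpha_{p_0}\bigl((x_1)_p-(y_1)_p\bigr)=0$; and $d_I(\tilde X,Y)=k-\dim(\tilde X\cap Y)\geq d$ gives exactly $\dim(X\cap Y)\leq k-d$. The stacked-matrix argument scales to blocks of several rows (which is why the paper uses it for Theorem~\ref{thm4}), whereas your argument is tailored to a single distinguished row but is arguably more transparent about \emph{why} the pending entry kills one dimension of the intersection.

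The one point you rightly flag --- that at a position $p\in P_X\cap P_Y$ only the first row of $\mbox{RE}(X)$ and of $\mbox{RE}(Y)$ can be nonzero --- deserves emphasis, because it is not a formality: it is exactly the content of the pending-dot notion as used in \cite{tr10}, where pending dots are the first-row dots lying in columns of the Ferrers diagram containing a \emph{single} dot, i.e.\ original-matrix columns strictly between the first and second pivots (the ``overhang'' of the first row). Read literally, the paraphrased definition in this paper (``leftmost first-row dots whose removal does not change the code size'') can also admit dots in multi-dot columns --- e.g.\ for a full $2\times 3$ diagram with $d=2$ the leftmost first-row dot qualifies --- and for such a dot both the structural claim and the lemma itself fail (one can build $2$-dimensional $X,Y\leq\F_2^5$ with $v(X)=11000$, $v(Y)=10100$, differing ``pending'' entries, and $d_S(X,Y)=2$). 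So your proof is complete only under the restrictive (intended) definition; it would be worth stating that reading explicitly rather than leaving it as an acknowledged hurdle.
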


\begin{ex}
Let $X$ and $Y$  be subspaces in $\mathcal{G}_q(3,6)$ which are given by
the following generator matrices:
\begin{footnotesize}
$$\left(\begin{array}{cccccc}
1 &\textcircled{\raisebox{-0.9pt}{0}} &0  & v_1 & v_2 &  0\\
0 & 0 & 1  & v_3 & v_4 &  0 \\
0 & 0 & 0 &  0 &  0 & 1
\end{array}\right),\:
\left(\begin{array}{cccccc}
1 &\textcircled{\raisebox{-0.9pt}{1}} &u_1 & 0 &  u_2  & 0\\
0 & 0 & 0  & 1  & u_3 & 0  \\
0 & 0 & 0 & 0 &  0  & 1
\end{array}\right)
$$
\end{footnotesize}
where $v_i ,u_i\in \F_q$, and the pending dots are emphasized by
circles. Their identifying vectors are $v(X)=101001$ and
$v(Y)=100101$. Clearly, $d_H(v(X), v(Y))=2$, while $d_S(X,Y)\geq 4$.
\end{ex}

The following results from the area of graph theory will be useful in the following code constructions.
We denote by $K_m$ the \emph{complete graph} with $m$ nodes. A \emph{matching} of $K_m$ is a set of non-adjacent edges of $K_m$. A \emph{perfect (resp. nearly perfect) matching} is a matching that covers all (resp. all but one) nodes of $K_m$.
A \emph{one-factorization (OF)} (resp. \emph{near one-factorization (NOF)}) of $K_m$ is a partition of all edges into perfect (resp. nearly perfect) matchings of $K_m$.
If one labels all nodes of $K_m$ with the numbers from $1,\dots, m$, then one can easily see the  $1-1$-correspondence between the edges of the graph and the weight-$2$ vectors of $\F_2^m$ by assigning the two ones of the vector in the coordinates labelled by the numbers of the two nodes in the graph which are connected by the corresponding edge.

The following lemma, which  follows from a one-factorization and near-one-factorization of  a complete graph~\cite{tu84,li01b}, will be used in our  constructions.

\begin{lem}\label{lm:1-factorization}
Let $D$ be the set of all binary vectors  of length~$m$ and weight
$2$.
\begin{itemize}
\item If $m$ is even, $D$ can be partitioned into $m-1$ classes,
each of $\frac{m}{2}$ vectors with pairwise disjoint positions of \emph{ones};
\item If $m$ is odd, $D$ can be partitioned into $m$ classes,
each of $\frac{m-1}{2}$ vectors with pairwise disjoint positions of \emph{ones}.
\end{itemize}
\end{lem}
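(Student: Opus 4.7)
The plan is to reduce the claim to the classical existence of one-factorizations of complete graphs. Identify each weight-$2$ binary vector of length $m$ with the edge of $K_m$ whose endpoints are the positions of its two \emph{ones}; then classes of pairwise disjoint-support vectors correspond to matchings of $K_m$, and the desired partition of $D$ becomes a one-factorization (for $m$ even, into $m-1$ perfect matchings of size $m/2$) or a near-one-factorization (for $m$ odd, into $m$ nearly perfect matchings of size $(m-1)/2$). Both existence results are standard and can be found in~\cite{tu84,li01b}, so the proof could simply invoke them.

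To keep the argument self-contained, I would exhibit the explicit round-robin construction. For even $m=2t$, designate vertex $m$ as a pivot and label the remaining vertices by $0, 1, \ldots, m-2$, read modulo $m-1$. For each $i \in \{0, \ldots, m-2\}$, set
\[
M_i = \bigl\{\{m,i\}\bigr\} \cup \bigl\{\{i+j,\,i-j\} : j = 1, \ldots, t-1\bigr\},
\]
where all arithmetic is modulo $m-1$. For odd $m=2t+1$, label the vertices $0, 1, \ldots, m-1$ modulo $m$ and set
\[
M_i = \bigl\{\{i+j,\,i-j\} : j = 1, \ldots, t\bigr\}, \quad i = 0, \ldots, m-1.
\]
By construction, each $M_i$ is a matching of the prescribed size, covering every vertex in the even case and every vertex except $i$ in the odd case.

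It remains to verify that the $M_i$ partition the edge set of $K_m$. Edge counting shows that each case uses exactly $\binom{m}{2}$ edges, so disjointness suffices. The pivot edges $\{m,i\}$ in the even case are trivially distinct across $i$. For a non-pivot edge $\{a,b\}$, one checks that it lies in $M_i$ precisely when $a+b \equiv 2i$ modulo $m-1$ (even case) or modulo $m$ (odd case); since $m-1$, respectively $m$, is odd, $2$ is a unit in the corresponding cyclic group, so this congruence has the unique solution $i = 2^{-1}(a+b)$. This parity step is the only delicate point in the argument and constitutes the main (though routine) obstacle.
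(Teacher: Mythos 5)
Your proof is correct and takes essentially the same route as the paper: the paper simply invokes the classical (near) one-factorization of $K_m$ via the edge/weight-$2$-vector correspondence, with the geometric ``circle of parallel edges'' construction sketched in its Theorem~\ref{circleconstruction}. Your explicit round-robin construction with the congruence $a+b\equiv 2i$ is just the algebraic form of that same circle argument, spelled out with full verification.
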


The following construction for $k=3$ and $d =2$ based on pending dots from~\cite{et12} will be used as the base step of our recursive construction proposed in the sequel.

{\textbf{Pending Dots Construction.}}\label{sec:const0}
 Let $n\geq 8$ and $q^2+q+1\geq \ell$, where $\ell=n-4$ for odd $n$ and $\ell=n-3$ for even~$n$.
 In addition to the lifted MRD code (which has the identifying vector $v_0=(11100\ldots0)$), the final code $\C$ will contain the codewords with identifying vectors of the form $(x||y)$, where the prefix $x\in \F_2^3$ is of weight $1$ and the suffix $y\in \F_2^{n-3}$ is of weight~$2$. By Lemma~\ref{lm:1-factorization}, we partition the set of suffixes into $\ell$  classes $P_1,P_2,\ldots,P_{\ell}$ and define the following three sets:
\[\cA_1=\{(001||y):y\in P_1\},\]
\[\cA_2=\{(010||y):y\in P_i, 2\leq i\leq \min\{q+1,\ell\}\},\]
\[\cA_3= \left\{\begin{array}{cc}
                \{(100||y):y\in P_i,~ q+2\leq i\leq \ell\} & \textmd{if }\ell>q+1 \\
                \varnothing & \textmd{if }\ell\leq q+1 \\
              \end{array}\right..
\]
Elements with the same prefix and distinct suffixes from the same class $P_{i}$ have Hamming distance~$4$. When we use the same prefix for two different classes $P_i, P_j$, we assign different values in the pending dots of the Ferrers tableaux forms.
Then the corresponding lifted FDMRD codes of injection distance~$2$ are constructed, and their union with the lifted MRD code forms the final code $\C$ of size $q^{2(n-3)}+\Gauss{n-3}{2}$.

In the following sections we will generalize this construction in various ways and obtain codes for any $k\geq4$ with minimum injection distance $d =2$ or with $d =k-1$, or equivalently minimum subspace distance $2d =4$ or with $2d =2(k-1)$.


\section{Construction for $(n,M,k-1,k)_{q}$ Codes}
\label{sec:recursion}

In this section we provide a recursive construction for  $(n,M,k-1,k)_{q}$  codes, which uses the Pending Dots construction 
described in Section~\ref{sec:preliminaries} as an initial step. Codes obtained by this construction contain a lifted MRD code. The upper bound on the cardinality of such codes is derived  in~\cite{et12} and given in the following theorem.

\begin{thm}[\cite{et12}]
\label{trm:upper bound from Steiner Structure}
If an $(n,M,k-1,k)_q$ code $\mathbb{C}$, $k \geq 3$, contains
an $(n,q^{2(n-k)},k-1,k)_q$ lifted MRD code then
$$M\leq
q^{2(n-k)}+ A_q (n-k,k-2,k-1).$$
\end{thm}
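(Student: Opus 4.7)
My plan is to build an injection $\phi: \mathbb{C}' \to \mathcal{G}_q(k-1, n-k)$ whose image has minimum injection distance at least $k-2$, where $\mathbb{C}' := \mathbb{C} \setminus \CMRD$; this immediately yields $|\mathbb{C}'| \leq A_q(n-k, k-2, k-1)$ and hence the claimed bound. Let $E_T := \F_q^k \times \{0\}^{n-k}$, which is the ``zero'' codeword in $\CMRD$, let $E_B := \{0\}^k \times \F_q^{n-k}$, let $\pi_B$ be the projection to the last $n-k$ coordinates, let $\mathcal{M}$ denote the underlying MRD code of rank distance $k-1$, and for $A \in \F_q^{k \times (n-k)}$ write $L_A$ for the row space of $[I_k \mid A]$. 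The map is $\phi(X) := \pi_B(X)$.

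The first step is to show $\dim(X \cap E_T) = 1$ for every $X \in \mathbb{C}'$, which ensures $\phi(X) \in \mathcal{G}_q(k-1, n-k)$. The upper bound is immediate from $d_I(X, E_T) \geq k-1$. If $X \cap E_T = 0$, I would split into two subcases: if also $X \cap E_B = 0$, then $X = L_B$ for some $B \notin \mathcal{M}$, and by maximality of $\mathcal{M}$ (it saturates the Singleton bound $q^{2(n-k)}$) there is $B' \in \mathcal{M}$ with $\mathrm{rank}(B - B') \leq k-2$, contradicting $d_I(X, L_{B'}) \geq k-1$; if $X \cap E_B \neq 0$, a dimension count on the affine system $\{A \in \mathcal{M} : c_j A \in S(c_j) + X \cap E_B,\ j = 1, 2\}$, where $c_1, c_2$ are linearly independent in $\pi_T(X)$ and $S : \pi_T(X) \to \F_q^{n-k}$ is a lift, produces an MRD codeword $A$ with $\dim(L_A \cap X) \geq 2$, again a contradiction.

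The main difficulty is showing $\dim(\phi(X) \cap \phi(X')) \leq 1$ for distinct $X, X' \in \mathbb{C}'$, which will give both injectivity (since $\phi(X) = \phi(X')$ forces intersection dimension $k - 1 \geq 2$ when $k \geq 3$) and the distance bound. I would assume for contradiction that two linearly independent $y_1, y_2 \in \pi_B(X) \cap \pi_B(X')$ exist and lift them to $x_i = (s_i, y_i) \in X$, unique modulo the line $X \cap E_T = \langle (a_X, 0) \rangle$. If one can choose lifts with $s_1, s_2 \in \F_q^k$ linearly independent, I will invoke the interpolation property of $\mathcal{M}$: the evaluation map $\mathrm{ev}_{s_1, s_2} : \mathcal{M} \to \F_q^{n-k} \oplus \F_q^{n-k}$, $A \mapsto (s_1 A,\ s_2 A)$, is a bijection of sets of size $q^{2(n-k)}$ because two MRD codewords agreeing on both $s_j$ would differ by a matrix of rank at most $k - 2$. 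This yields $A \in \mathcal{M}$ with $s_i A = y_i$, hence $(s_1, y_1), (s_2, y_2) \in L_A \cap X$ and a contradiction. In the degenerate case where no choice of lifts makes the $s_i$ independent, the forced relations give $(0, y_i) \in X \cap E_B$; applying the same analysis to $X'$ either reaches the previous contradiction or also yields $(0, y_i) \in X' \cap E_B$, in which case $\dim(X \cap X') \geq 2$, still contradicting the minimum distance. The main obstacle is precisely this last step---balancing the MRD interpolation argument with a careful treatment of the degenerate configurations where the natural lifts collapse.
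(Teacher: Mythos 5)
The paper states this theorem without proof (it is imported from~\cite{et12}), so I am comparing your argument against the proof given there. Your proposal has a genuine gap, and it sits at the very first step: the claim that $\dim(X\cap E_T)=1$ for every $X\in\mathbb{C}'=\mathbb{C}\setminus\CMRD$ is false, so $\phi=\pi_B$ need not land in $\mathcal{G}_q(k-1,n-k)$ at all. Your case analysis only excludes the configurations with $\dim(X\cap E_B)\leq k-2$: subcase (a) is $\dim(X\cap E_B)=0$, and subcase (b) requires two linearly independent $c_1,c_2\in\pi_T(X)$, which exist only when $\dim\pi_T(X)=k-\dim(X\cap E_B)\geq 2$. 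What you have actually proved is the dichotomy ``$\dim(X\cap E_T)=1$ or $\dim(X\cap E_B)\geq k-1$,'' and the second alternative is not only possible but typical. For instance, when $n\geq 2k$ any $k$-dimensional $X\subseteq E_B$ meets every $L_A$ trivially, so $\CMRD\cup\{X\}$ is a valid code of minimum injection distance $k-1$ with $X\cap E_T=0$ and $\dim\pi_B(X)=k$; more tellingly, the non-MRD codewords of the paper's own Construction A and of the Pending Dots construction have identifying vectors with a single \emph{one} among the first $k$ positions, hence $\dim(X\cap E_B)=k-1$ and, generically, $X\cap E_T=0$. Your Step 2 inherits the same defect, since it presupposes that lifts $(s_i,y_i)$ of $\pi_B(X)$ are determined modulo a line $X\cap E_T$.

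The repair is already contained in your subcase (b), pointed in the right direction. For an arbitrary $X\in\mathbb{C}'$ with $\dim(X\cap E_B)\leq k-2$, pick independent $c_1,c_2\in\pi_T(X)$ with lifts $(c_j,S(c_j))\in X$; since $A\mapsto(c_1A,c_2A)$ is a bijection from the MRD code onto $\F_q^{n-k}\times\F_q^{n-k}$ (two codewords agreeing on $c_1$ and $c_2$ differ by a matrix of rank at most $k-2$, and the cardinalities match), there is $A$ with $c_jA=S(c_j)$, whence $\dim(L_A\cap X)\geq 2$ and $d_I(X,L_A)\leq k-2$, a contradiction. This yields the structural lemma actually used in~\cite{et12}: $\dim(X\cap E_B)\geq k-1$ for every $X\in\mathbb{C}'$. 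Now send $X$ to a $(k-1)$-dimensional subspace $Y_X\subseteq X\cap E_B$, regarded as an element of $\mathcal{G}_q(k-1,n-k)$. For distinct $X,X'$ one has $\dim(Y_X\cap Y_{X'})\leq\dim(X\cap X')\leq k-(k-1)=1$, so $d_I(Y_X,Y_{X'})\geq k-2\geq 1$ for $k\geq 3$; the map is therefore injective and its image is an $(n-k,\cdot,k-2,k-1)_q$ code, giving $M-q^{2(n-k)}\leq A_q(n-k,k-2,k-1)$ as required.
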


Note that for $k=3$ this bound is given by
$$
M\leq q^{2(n-3)}+\Gauss{n-3}{2},
$$
which is attained by the Pending Dots construction.
Our recursive construction provides a new lower bound on the cardinality of such codes for general $k$.
%
%

To present the construction
we first need to extend the definition of pending dots of~\cite{tr10} to a two-dimensional setting, which we will do in the following subsection.

\vspace{0.3cm}

\subsection{Pending Blocks}
\label{sec:pendingBlocks}

\begin{defi}
Let $\cF$ be a Ferrers diagram with $m$ dots in the rightmost column and $\ell$ dots in the top row. We say that the $\ell_1<\ell$ leftmost columns of $\cF$ form a \emph{pending block} (of length $\ell_1$) if the upper bound on the size of FDMRD code $\cC_{\cF}$  from Theorem~\ref{thm1} is equal to the upper bound on the size of $\cC_{\cF}$ without the $\ell_1$ leftmost columns.
\end{defi}

\begin{ex}\label{ex:pending_block}
Consider the following Ferrers diagrams:
\begin{footnotesize}
\[ \cF_1=\begin{array}{cccccc} \bullet & \bullet &\bullet&\bullet&\bullet \\
 &\bullet& \bullet & \bullet &\bullet\\
& &\bullet &\bullet&\bullet
\end{array}\quad, \quad
\cF_2=\begin{array}{cccccc}\bullet&\bullet&\bullet \\
  \bullet & \bullet &\bullet\\
\bullet &\bullet&\bullet
\end{array} .\]
\end{footnotesize}

For $d =3$ by Theorem~\ref{thm1} both codes $\cC_{\cF_1}$ and $\cC_{\cF_2}$ have $|\cC_{\cF_i}|\leq q^3$, $i=1,2$.
The diagram $\cF_1$ has the pending block $ \begin{array}{cccccc} \bullet & \bullet\\&\bullet
\end{array}$ and the diagram $\cF_2$ has no pending block.
\end{ex}

\begin{defi}
Let $\cF$ be a Ferrers diagram with $m$ dots in the rightmost column and $\ell$ dots in the top row, and let $\ell_1< \ell$, and $m_1< m$. If the $(m_1+1)$st row of $\cF$ has less dots than the $m_1$th row of $\cF$ and at most $m-\ell_1$ dots, then the $\ell_1$ leftmost columns of $\cF$ are called a \emph{quasi-pending block} (of size $m_1\times\ell_1$).
\end{defi}

Note that a pending block is also a quasi-pending block.
We can now generalize Lemma \ref{lm:pending dots} from pending dots to pending blocks.

\begin{thm}\label{thm4}
Let $X,Y\in \Gr$, such that $\rm RE(X)$ and $\rm RE(Y)$ have a quasi-pending block of size $m_{1}\times \ell_{1}$ in the same position and $d_{H}(v(X),v(Y))=  2d$.   Denote the submatrices of $\cF(X)$ and $\cF(Y)$ corresponding to the quasi-pending blocks by $B_X$ and $B_Y$, respectively.
Then
$d_{I}(X,Y) \geq  d+\rk(B_X- B_Y) $ or equivalently
$d_{S}(X,Y) \geq 2 d+2\rk(B_X- B_Y) $.
\end{thm}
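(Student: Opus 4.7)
The plan is to translate the claim via $d_I(X,Y) = \dim(X+Y) - k$ and prove $\dim(X+Y) \geq k + d + \rk(B_X - B_Y)$ by analysing the rank of the $2k \times n$ stacked matrix $M = \binom{\mathrm{RE}(X)}{\mathrm{RE}(Y)}$ in two stages. Let $P_X = \{p_1 < \cdots < p_k\}$ and $P_Y = \{q_1 < \cdots < q_k\}$ be the pivot supports of $v(X), v(Y)$, put $T = P_X \cup P_Y$ and $S = P_X \cap P_Y$; the hypothesis $d_H(v(X),v(Y)) = 2d$ together with $\wt(v(X)) = \wt(v(Y)) = k$ forces $|T| = k + d$ and $|S| = k - d$. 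First I would observe that the $2k \times (k + d)$ submatrix $M_T$ of $T$-columns has rank exactly $k + d$: each $T$-column carries a pivot from either $\mathrm{RE}(X)$ or $\mathrm{RE}(Y)$, and the reduced echelon structure of both matrices makes any linear dependence among those columns collapse to zero. This already yields $\dim(X+Y) \geq k + d$ (equivalent to the asymmetric-distance lower bound from Proposition~\ref{prop3}) and reduces the task to producing $\rk(B_X - B_Y)$ further independent rows in the kernel of the projection from the row space of $M$ onto its $T$-coordinates.

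Next I would parameterize that kernel. Vectors $\alpha \in \F_q^{2k}$, with the first $k$ entries indexing rows of $\mathrm{RE}(X)$ and the last $k$ indexing rows of $\mathrm{RE}(Y)$, satisfying $\alpha^T M_T = 0$ form a subspace of dimension $2k - (k + d) = k - d$. Reading the constraints column by column, each shared pivot $p = p_i = q_j \in S$ contributes a single equation $\alpha_i + \alpha_{k + j} = 0$, while the $2d$ equations from non-shared pivot columns uniquely determine the $2d$ non-shared-pivot-row coordinates of $\alpha$ in terms of the $k - d$ shared-pivot-row coordinates. Hence $(\alpha_l)_{l \in I_S}$, with $I_S := \{i : p_i \in S\}$, freely parameterizes this left null space. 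The assumption that the quasi-pending block of size $m_1 \times \ell_1$ sits in the same position in both $\mathrm{RE}(X)$ and $\mathrm{RE}(Y)$ forces the first $m_1$ pivots of both matrices to coincide, so $\{1, \ldots, m_1\} \subseteq I_S$, and forces the block columns $c_1 < \cdots < c_{\ell_1}$ to be common non-pivot positions, hence to lie in $T^c$. For each $i \in \{1, \ldots, m_1\}$ I would let $\alpha^{(i)}$ be the unique null-space vector with $\alpha^{(i)}_l = \delta_{il}$ for every $l \in I_S$; the shared-pivot equation at $p_i = q_i$ then forces $\alpha^{(i)}_{k + i} = -1$ and $\alpha^{(i)}_{k + l} = 0$ for the remaining shared-pivot-row indices of $\mathrm{RE}(Y)$.

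Finally I would evaluate $\alpha^{(i) T} M$ at each block column $c_l$. The quasi-pending block condition guarantees that every row of $\mathrm{RE}(X)$ or $\mathrm{RE}(Y)$ with index exceeding $m_1$ has its pivot strictly to the right of $c_{\ell_1}$, so all such rows contribute $0$ at the block columns. Combined with the prescribed structure of $\alpha^{(i)}$ on the first $m_1$ row indices, the sum collapses to the $(i, l)$-entry of $B_X - B_Y$. Consequently the span of $\alpha^{(1) T} M, \ldots, \alpha^{(m_1) T} M$, restricted to the block columns, realises the full row space of $B_X - B_Y$ and therefore contributes $\rk(B_X - B_Y)$ independent dimensions to the kernel of the $T$-projection. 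Adding these to the $k + d$ dimensions from $M_T$ yields $\dim(X + Y) \geq k + d + \rk(B_X - B_Y)$, hence $d_I(X, Y) \geq d + \rk(B_X - B_Y)$, and the subspace-distance version follows from $d_S = 2 d_I$ on $\Gr$. The main technical hurdle is the second paragraph: justifying that $(\alpha_l)_{l \in I_S}$ freely parameterize the null space amounts to showing that the $2d$-dimensional non-shared-pivot subsystem is non-singular, which in turn is equivalent to the full column rank of $M_T$ established in the first paragraph, so once the pivot count is pinned down the rest is bookkeeping on the block.
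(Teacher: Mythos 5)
Your proof is correct, and at its core it performs the same computation as the paper's: both arguments lower-bound $\dim(X+Y)=\rk\left[\begin{array}{c}\mathrm{RE}(X)\\ \mathrm{RE}(Y)\end{array}\right]$ by cancelling the $m_1$ shared leading pivots so that the difference $B_X-B_Y$ survives in the (common, non-pivot) block columns, while the $2d$ unshared pivot columns contribute the extra $d$. The difference is in the bookkeeping. The paper permutes columns, subtracts the lower block of the stacked matrix from the upper block, and then reads off pivots with the terse conclusion that ``the additional pivots \dots are still in different columns, hence it follows''; you instead decompose $\rk(M)$ as $\rk(M_T)$ plus the dimension of the space of row-space vectors vanishing on $T$, and exhibit explicit left-null-space vectors $\alpha^{(1)},\dots,\alpha^{(m_1)}$ whose products with $M$ restrict on the block columns to the rows of $B_X-B_Y$. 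This cleanly separates the $k+d$ contribution (which is exactly the bound of Proposition~\ref{prop3}) from the $\rk(B_X-B_Y)$ contribution, and it makes rigorous precisely the step the paper leaves implicit. One small imprecision: the nonsingularity of the $2d\times 2d$ subsystem in the unshared-pivot coordinates is not literally ``equivalent to'' the full column rank of $M_T$; what you need is the separate observation that if $\alpha$ lies in the left null space of $M_T$ and vanishes on all shared-pivot coordinates, then it is supported on the $2d$ unshared-pivot rows, and those rows restricted to the columns of $T$ have pairwise distinct leading positions and are therefore linearly independent, forcing $\alpha=0$. That is the same echelon-form reasoning you already invoke for $\rk(M_T)=k+d$, so it is a one-line fix rather than a gap.
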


\begin{proof}
Since the quasi-pending blocks are in the same position, the first $h$ pivots of $\mbox{RE}(X)$ and $\mbox{RE}(Y)$ are in the same columns.
To compute the rank of $\left[ \begin{array}{cc}\mbox{RE}(X)\\\mbox{RE}(Y)\end{array} \right] $ we permute the columns such that the $h$ first pivot columns are to the very left, then the columns of the pending block, then the other pivot columns and then the rest:

\[\rk\left[ \begin{array}{cc}\mbox{RE}(X)\\\mbox{RE}(Y)\end{array} \right]\]\[
=
\rk
\begin{footnotesize}\left[ \begin{array}{cccccccccccccc}
1&\dots&0 &\cellcolor{gray}  &\cellcolor{gray}&\cellcolor{gray} &\cellcolor{gray}&\cellcolor{gray} &0&\dots\\
\vdots&\ddots &\vdots&&&\cellcolor{gray}\ddots&\cellcolor{gray}B_X&\cellcolor{gray}\vdots&\vdots&\vdots\\
0&\dots&1&0&\dots&0&\cellcolor{gray} &\cellcolor{gray} &0&\dots\\
0&\dots&0&0&\hdots&0&\hdots&0&1&\hdots\\
\vdots&&&&&&&&\vdots\\
\hline
1&\dots&0 &\cellcolor{gray}  &\cellcolor{gray}&\cellcolor{gray} &\cellcolor{gray}&\cellcolor{gray} &0&\dots\\
\vdots&\ddots &\vdots&&&\cellcolor{gray}\ddots&\cellcolor{gray}B_Y&\cellcolor{gray}\vdots&\vdots&\vdots\\
0&\dots&1&0&\dots&0&\cellcolor{gray} &\cellcolor{gray} &0&\dots\\
0&\dots&0&0&\hdots&0&\hdots&0&1&\hdots\\
\vdots&&&&&&&&\vdots
\end{array}\right]
\end{footnotesize}
.\]
Now we subtract the lower half of the obtained matrix from the upper half and write the result in the lower half of the new matrix to get
\[=\rk
\begin{footnotesize}\left[ \begin{array}{cccccccccccccc}
1&\dots&0 &\cellcolor{gray}  &\cellcolor{gray}&\cellcolor{gray} &\cellcolor{gray}&\cellcolor{gray} &0&\dots\\
\vdots&\ddots &\vdots&&&\cellcolor{gray}\ddots&\cellcolor{gray}B_X&\cellcolor{gray}\vdots&\vdots&\vdots\\
0&\dots&1&0&\dots&0&\cellcolor{gray} &\cellcolor{gray} &0&\dots\\
0&\dots&0&0&\hdots&0&\hdots&0&1&\hdots\\
\vdots&&&&&&&&\vdots\\
\hline
0&\dots&0 &\cellcolor{gray}  &\cellcolor{gray}&\cellcolor{gray} &\cellcolor{gray}&\cellcolor{gray} &0&\dots\\
\vdots&\ddots &\vdots&&&\cellcolor{gray}\ddots&\cellcolor{gray}B_X-B_Y&\cellcolor{gray}\vdots&\vdots&\vdots\\
0&\dots&0&0&\dots&0&\cellcolor{gray} &\cellcolor{gray} &0&\dots\\
0&\dots&0&0&\hdots&0&\hdots&0&0&\hdots\\
\vdots&&&&&&&&\vdots
\end{array}\right]
\end{footnotesize}   .\]
The additional pivots of $\mbox{RE}(X)$ and $\mbox{RE}(Y)$ (to the right in the above representation) that were in different columns in the beginning are still in different columns, hence it follows that
$\rk\left[ \begin{array}{cc}\rm RE(X)\\\rm RE(Y)\end{array} \right] \geq  k+\rk(B_X-B_Y) +d $,
which implies the statement.
\end{proof}
This theorem implies that for the construction of an $(n,M,d ,k)_q$ code, by filling the (quasi-) pending blocks with a suitable Ferrers diagram rank metric code, one can choose a set of identifying vectors with lower minimum Hamming distance than $2d $.




\subsection{The Construction.}

The following two lemmas will be useful for our construction.
\begin{lem}\label{lem14}
\label{lm:gen_k-2}
 Let $n-k-2\geq n_1\geq k-2$ and $v$ be an identifying vector of length $n$ and weight $k$, such that there are $k-2$ many ones in the first $n_1$ positions of $v$. Then the Ferrers diagram arising from $v$ has more or equally many dots in any of the first $k-2$ rows  than in the last column,
 and the upper bound for the dimension of a Ferrers diagram code with minimum rank distance $k-1$ is the number of dots in the lower two rows.
\end{lem}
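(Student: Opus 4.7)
\medskip

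\noindent\textbf{Proof plan.} The plan is to translate the hypothesis on $v$ into numerical bounds on the row-lengths of the Ferrers diagram $\cF$ arising from $v$, and then read off both statements directly from Theorem~\ref{thm1}.

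If the pivots of $v$ sit at positions $p_1<p_2<\cdots<p_k$, the $i$-th row of $\cF$ contains $r_i:=n-p_i-(k-i)$ dots. The hypothesis that $k-2$ of the ones lie in the first $n_1$ positions gives $p_{k-2}\le n_1$, and by the strict monotonicity of the pivots this propagates to $p_j\le n_1-(k-2-j)$ for each $j\le k-2$. Substituting yields $r_j\ge n-n_1-2$, and combining with $n_1\le n-k-2$ strengthens this to $r_j\ge k$ for every $j\le k-2$.

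The first statement is then immediate: the rightmost column of $\cF$ contains at most one dot per nonempty row, hence at most $k$ in total, while each of the first $k-2$ rows has at least $k$ of them. For the second statement I invoke Theorem~\ref{thm1} with $d=k-1$, for which $0\le i\le k-2$ and
\[
w_i=\sum_{j=i+1}^{k}\max\{0,\,r_j-(k-2-i)\}.
\]
Taking $i=k-2$ gives $w_{k-2}=r_{k-1}+r_k$, exactly the number of dots in the lower two rows, so it suffices to show $w_i\ge w_{k-2}$ for every $i\le k-3$.

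To compare the two, I split the sum for $w_i$ at $j=k-2$. The terms with $i+1\le j\le k-2$ involve no truncation, since $r_j\ge k\ge k-2-i$; the identity $r-\max\{0,r-a\}=\min\{r,a\}$ rewrites the remaining two terms and produces
\[
w_i-w_{k-2}=\sum_{j=i+1}^{k-2}\bigl(r_j-(k-2-i)\bigr)-\min\{r_{k-1},k-2-i\}-\min\{r_k,k-2-i\}.
\]
Using $r_j\ge n-n_1-2$ on the left sum and $\min\{\cdot,k-2-i\}\le k-2-i$ on the two negative terms, the right-hand side is at least $(k-2-i)(n-n_1-k+i-2)$, which is nonnegative thanks to $n-n_1\ge k+2$ and $i\ge 0$. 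The subtlety I expect is handling the last two rows, whose lengths $r_{k-1},r_k$ are not controlled below by the hypothesis: the $\min$-identity above absorbs the truncation $\max\{0,\cdot\}$ uniformly and avoids a case split on whether $r_{k-1}$ or $r_k$ exceeds $k-2-i$.
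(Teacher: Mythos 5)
Your proof is correct, and its first half (bounding the pivot positions to get $r_j\ge n-n_1-2\ge k$ for $j\le k-2$, then comparing with the last column) is the same as the paper's. Where you genuinely diverge is in the minimization over $i$ in Theorem~\ref{thm1}: the paper argues by a stepwise induction, claiming that passing from $w_i$ to $w_{i-1}$ "adds a row and erases a column, which results in more points," i.e.\ that $w_0\ge w_1\ge\cdots\ge w_{k-2}$. That monotonicity can actually fail — for the full $k\times k$ square diagram (which satisfies the hypothesis with $n_1=n-k-2$) one has $w_i=(k-i)(i+2)$, so $w_0=2k=w_{k-2}$ but $w_1=3k-3>w_0$ for $k>3$ — although the conclusion $\min_i w_i=w_{k-2}$ still holds. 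Your direct comparison of each $w_i$ against $w_{k-2}$, with the $\max/\min$ identity absorbing the truncation in the last two (uncontrolled) rows and the final bound $(k-2-i)(n-n_1-k+i-2)\ge 0$, sidesteps this issue entirely and is the more robust argument; the paper's version is shorter but relies on a step that is not literally true.
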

 \begin{proof}
 Naturally, the last column of the Ferrers diagram has at most $k$ many dots. It holds that any column has at most as many dots as the last one.
 Since there are $k-2$ many ones in the first $n_{1}$ positions of $v$, it follows that there are $n-n_{1}-2$ zeros in the last $n-n_{1}$ positions of $v$. Thus, there are at least $n-n_{1}-2$ many dots in any but the lower two rows of the Ferrers diagram arising from $v$. Therefore, if $n-n_{1}-2\geq k \iff n-k-2\geq n_{1}$ the Ferrers diagram arising from $v$ has more than or equally many dots in any of the first $k-2$ rows  than in the last column, and hence than in any column.

 From Theorem \ref{thm1} we know that the bound on the dimension of the FDRM code is given by the minimum number of dots not contained in the first $i$ rows and last $k-2-i$ columns for $i=0,\dots,k-2$. If we start with $i=k-2$ we get that the dimension of the code is at most the number of dots in the last two rows of the diagram. Inductively, if we decrease $i$ by one, we add a row (of the first $k-2$ rows) and erase a column of the previous diagram, which results in more points, hence  the minimum is attained for $i=k-2$.
 \end{proof}

\begin{rem}\label{rem5}
If an $m\times \ell$-Ferrers diagram has $d-1$  rows with $\ell$ dots each, then the construction of~\cite{et09} provides respective FDMRD codes of minimum distance $d$
attaining the bound of Theorem~\ref{thm1}.
\end{rem}


We need yet another special case of Ferrers diagrams where we can attain the upper bound on the dimension of the code size.

\begin{lem}\label{lemk-3}
For an $m\times \ell$-Ferrers diagram where the $j$th row has at least $x$ more dots than the $(j+1)$th row for $1\leq j \leq m-1$ and the lowest row has $x$ many dots, there is a FDMRD code with minimum rank distance $m$ and cardinality~$q^{x}$.
\end{lem}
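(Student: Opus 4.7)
The plan is to construct an explicit code meeting the Theorem \ref{thm1} bound and then verify both the bound and the rank-distance claim by hand. Throughout write $r_j$ for the number of dots in row $j$, so the hypothesis reads $r_j-r_{j+1}\ge x$ for $1\le j\le m-1$ and $r_m=x$; iterating yields $r_j\ge (m-j+1)x$. Setting $i=m-1$ in Theorem \ref{thm1}, the quantity $w_{m-1}$ counts only the dots in the last row, so $w_{m-1}=r_m=x$ and hence $\rho\le x$; the task therefore reduces to exhibiting an FDRM code of dimension exactly $x$ with minimum rank distance $m$.

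For the construction, for each $i=1,\dots,x$ I would let $M_i$ be the $m\times\ell$ matrix whose only nonzero entries are a $1$ in position $(j,\ell-r_j+i)$ for every $j=1,\dots,m$. Since $i\le x\le r_j$, all these positions lie in $\cF$, so $M_i$ is supported on $\cF$; the $M_i$ clearly have pairwise disjoint supports and are hence $\F_q$-linearly independent. Let $\cC_{\cF}$ be their $\F_q$-span, so that $|\cC_{\cF}|=q^x$.

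The remaining work, which I expect to be the only real obstacle, is verifying the rank distance. For a nonzero codeword $M=\sum_{i=1}^x\lambda_iM_i$, row $j$ of $M$ equals $(\lambda_1,\dots,\lambda_x)$ placed in columns $\ell-r_j+1,\dots,\ell-r_j+x$ with zeros elsewhere, and is therefore nonzero. The key point is that the hypothesis $r_j-r_{j+1}\ge x$ makes the column range of row $j$ lie strictly to the left of that of row $j+1$: indeed $\ell-r_j+x\le\ell-r_{j+1}<\ell-r_{j+1}+1$. Iterating, the $m$ rows of $M$ have pairwise disjoint supports, so they are linearly independent and $\rk(M)=m$. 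Combined with the first step this shows that $\cC_{\cF}$ is an FDMRD code of cardinality $q^x$ and minimum rank distance $m$.
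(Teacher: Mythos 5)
Your construction is exactly the paper's: fill the leftmost $x$ dots of every row with the same vector $(\lambda_1,\dots,\lambda_x)$ and zeros elsewhere, observe that the hypothesis $r_j-r_{j+1}\ge x$ forces the column supports of distinct rows to be disjoint so every nonzero codeword has rank $m$, and match this against the $i=m-1$ case of Theorem~\ref{thm1}. The argument is correct and takes the same route as the paper, just written out with an explicit basis.
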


\begin{proof} The construction is as follows: For each codeword take a different $w\in \F_{q}^{x}$ and fill the first $x$ dots of every row with this vector, whereas all other dots are filled with zeros.
The minimum distance follows easily from the fact that the positions of the $w$'s in each row have no column-wise intersection. Since they are all different, any difference of two codewords has a non-zero entry in each row and it is already row-reduced.

The cardinality is clear, hence it remains to show that this attains the bound of Theorem \ref{thm1}. Plugging in $i=k-1$ in Theorem \ref{thm1} we get that the dimension of the code is less than or equal to the number of dots in the last row, which is achieved by this construction.
\end{proof}

We now have all the machinery to describe the new construction for $(n,M,k-1,k)_q$ codes.

\noindent\textbf{Construction A.}
\\
Let $k\geq 4$,
$s := \sum_{i=3}^{k}i=\frac{k^2+k-6}{2}$,
$n\geq s+2+k=\frac{k^2+3k-2}{2}$ and
$q^2+q+1\geq \ell$, where $\ell:=n-s=n-\frac{k^2+k-6}{2}$ for odd $n-s$ (or $\ell:= n-s-1=n-\frac{k^2+k-4}{2}$ for even $n-s$).
 \noindent


\emph{ Identifying vectors:}
 In addition to the identifying vector $v_{00}^k=(11\ldots1100\ldots0)$ of the lifted MRD code $\C^k_*$ (of size $q^{2(n-k)}$ and minimum subspace distance $2(k-1)$), the other identifying vectors of the codewords are defined as follows.
First, by Lemma~\ref{lm:1-factorization}, we partition the weight-$2$ vectors of $\F_{2}^{n-s}$ into classes $P_{1},\dots, P_{\ell}$ of size $\frac{\bar \ell}{2}$  (where $\ell=\bar \ell -1=n-s-1$ if $n-s$ even and $\ell=\bar \ell +1=n-s$ if $n-s$ odd) with pairwise disjoint positions of the ones.
 We define the sets of identifying vectors  by a recursion. Let $v_0\in \F_q^{n-s+3}$ and $\cA_1,\cA_2,\cA_3\subseteq\F_q^{n-s+3}$,  as defined in the Pending Dots construction (see Section~\ref{subsec:Pending Dots Preliminaries}). Then $v_{00}^3=v_0$,
\[\cA_0^3= \emptyset\textmd{, } \cA_i^3=\cA_i\textmd{, } 1\leq i\leq 3.
\]
For $k\geq 4$ we define:
 \[\cA_0^k=\{v^k_{01},\dots,v^k_{0k-3}\},\]
where $v_{0j}^k=(000\; w^k_j \;|| v_{0j-1}^{k-1})$ ($1\leq j\leq k-3$),
 such that the $w_j^k$ are all different weight-$1$ vectors of $\F_2^{k-3}$.
Furthermore we define:
\begin{align*}
\cA_1^k&=\{(0010\dots 00 || z)  : z \in \cA_1^{k-1}\},\\
\cA_2^k&=\{(0100\dots 00 || z)  : z \in \cA_2^{k-1}\},\\
\cA_{3}^k&=\{(1000\dots00 || z )  : z \in \cA_3^{k-1}\},
\end{align*}
such that the prefixes of the vectors in $\bigcup_{i=0}^{3}\cA_i^k$ are  vectors of $\F_2^{k}$ of weight $1$.
Note, that the suffix $y\in\F_q^{n-s}$ (from the Pending Dots construction)
in all the vectors from $\cA_1^k$ belongs to $P_1$, the suffix $y$ in all the vectors from $\cA_2^k$ belongs to $\bigcup_{i=2}^{\min\{q+1,\ell\}} P_i$, and the suffix $y$ in all the vectors from $\cA_3^k$ belongs to $\bigcup_{i=q+2}^{\ell} P_i$ (the set $\cA_3^k$ is empty if $\ell\leq q+1$).

\vspace{.3cm}
\emph{Pending blocks:}
\begin{itemize}
  \item All Ferrers diagrams that correspond to the vectors  in $\cA_1^k$ have a common pending block with $k-3$ rows and 
  $ \sum_{i=3}^{k-j}i$ dots in the 
  $j$th row, for $1\leq j\leq k-3$. We fill each of these pending blocks with a different element of a suitable FDMRD code with minimum rank distance $k-3$ and size $q^3$, according to Lemma \ref{lemk-3}.
  Note, that the initial conditions always imply that $q^{3}\geq \bar \ell $.
  \item All Ferrers diagrams that correspond to the vectors  in $\cA_2^k$ have a common pending block with $k-2$ rows and $\sum_{i=3}^{k-j}i+1$ dots in the 
$j$th row, $1\leq j\leq k-2$.
Every vector which has a suffix $y$ from the same $ P_i$ will have the same value $a_i\in \F_q$ in the first entry in each row of the common pending block, such that the vectors with suffixes from the different classes will have different values in these entries.  (This corresponds to a FDMRD code of distance $k-2$ and size $q$.)
  Given the filling of the first entries of every row, all the other entries of the pending blocks are filled by a FDMRD code with minimum distance $k-3$, according to Lemma \ref{lemk-3}.
  \item All Ferrers diagrams that correspond to the vectors  in $\cA_3^k$ have a common pending block with $k-2$ rows and  $\sum_{i=3}^{k-j}i +2$ dots in the $j$th row, ${1\leq j\leq k-2}$. The filling of these pending blocks is analogous to the previous case, but for the suffixes from the different $P_{i}$-classes we fix the first two entries in each row of a pending block.

\end{itemize}

\emph{Ferrers tableaux forms:}
On the dots corresponding to the last $n-s-2$ columns of the Ferrers diagrams for each vector $v_j$ in a given $\cA^k_i$, $0\leq i\leq 3$,  we construct a FDMRD code with minimum distance $k-1$ (according to Remark \ref{rem5}) and lift it to obtain $\C_{i,j}^k$. We define $\C_i^k=\bigcup_{j=1}^{|\cA^k_i|}\C_{i,j}^k$.

\emph{Code:}
The final code is defined as
$$\C^k=\bigcup_{i=0}^{3}\C_{i}^k\cup \C^k_*.$$

\begin{thm}
\label{trm:recursive_parameters}
The code $\C^k$ obtained by Construction A has minimum injection distance $k-1$ and cardinality $|\C^k|=q^{2(n-k)}+q^{2(n-(k+(k-1)))}+\ldots
+q^{2(n-\frac{k^2+k-6}{2})}+\Gauss{n-\frac{k^2+k-6}{2}}{2} $.
\end{thm}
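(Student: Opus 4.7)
The plan is to split the argument into three parts: (i) each Ferrers-diagram rank-metric code used in the construction exists with the claimed dimension, (ii) any two distinct codewords of $\C^k$ satisfy $d_I\geq k-1$, and (iii) the cardinalities sum to the stated formula. I would induct on $k$, using the Pending Dots construction of Section~\ref{subsec:Pending Dots Preliminaries} as the base case $k=3$.

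For (i), the identifying vectors $v_{0j}^k\in\{v_{00}^k\}\cup\cA_0^k$ have all $k$ ones packed in the first $\sum_{i=k-j}^{k}i$ positions (after fully unfolding the recursion), so the rightmost portion of their Ferrers diagram is a $k\times(n-\sum_{i=k-j}^{k}i)$ rectangle admitting the classical MRD code of rank distance $k-1$ and dimension $2(n-\sum_{i=k-j}^{k}i)$; Lemma~\ref{lem14} confirms that this matches the upper bound of Theorem~\ref{thm1}. For $v\in\cA_1^k\cup\cA_2^k\cup\cA_3^k$, the pending block has a staircase shape in which the $j$-th row has $k-j$ more dots than the $(j+1)$-th, so Lemma~\ref{lemk-3} yields an FDMRD of distance $k-3$ and cardinality $q^3$, which is enough to separate the $\bar\ell\leq q^3$ identifying vectors that share the pending block; for $\cA_2^k,\cA_3^k$ the leading one or two columns likewise provide FDMRDs of distance $k-2$ separating the distinct suffix classes. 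The rightmost $k\times(n-s-2)$ portion is filled via Remark~\ref{rem5}.

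For (ii), I do a case analysis on where $X,Y$ lie among $\C^k_*,\C^k_0,\C^k_1,\C^k_2,\C^k_3$. If $v(X)=v(Y)$, Proposition~\ref{pr:equal id} together with the FDMRD distance of the filling yields $d_I(X,Y)\geq k-1$. If exactly one of $X,Y$ lies in $\C^k_*$, the identifying vector $v_{00}^k=(1^k0^{n-k})$ puts all ones in the first $k$ positions while every other identifying vector has at most one pivot among the first $k$ coordinates, forcing the asymmetric distance of Proposition~\ref{prop3} to be at least $k-1$. If both $X,Y$ lie in the same $\C_i^k$ with $i\in\{1,2,3\}$ but have distinct identifying vectors, they share a common pending block; the inductive Hamming distance of their identifying vectors within $\cA_i^k$ is $4$ for $i=1$ and at least $2$ for $i\in\{2,3\}$, while the rank of the pending-block difference is at least $k-3$ (for $\cA_1^k$) or at least $k-2$ (for $\cA_2^k,\cA_3^k$, thanks to the leading $a_i$ column(s) separating suffix classes), so Theorem~\ref{thm4} delivers $d_I\geq k-1$. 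The main technical obstacle is the case when $X$ and $Y$ come from two different $\cA$-families (including $\cA_0^k$): here I would unfold the recursion and observe that at each of the $k-3$ recursive levels the weight-one prefixes of the various $\cA$-families occupy fixed but distinct positions, so the asymmetric distance of the two identifying vectors accumulates across levels to at least $k-1$, with the innermost Pending Dots level contributing the last step.

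For (iii), $\C^k_*$ contributes $q^{2(n-k)}$; each $v_{0j}^k\in\cA_0^k$ ($1\leq j\leq k-3$) contributes $q^{2(n-\sum_{i=k-j}^{k}i)}$, producing the intermediate terms; and $\bigcup_{i=1}^{3}\C_i^k$ contributes $\Gauss{n-s}{2}$. For the last equality, each identifying vector with suffix $y=e_i+e_j$ yields a FDMRD of dimension $2(n-s)-i-j-1$ on its rightmost rectangle (with the pending block fixed per identifying vector as a separator), and the explicit identity $\sum_{1\leq i<j\leq n-s}q^{2(n-s)-i-j-1}=\Gauss{n-s}{2}$ (a standard expansion of the $q$-binomial coefficient) holds regardless of how the one-factorization distributes the pairs among $\cA_1^k,\cA_2^k,\cA_3^k$. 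Summing the three contributions yields the claimed cardinality.
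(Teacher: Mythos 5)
Your proposal is correct and, for the distance part, follows essentially the same route as the paper: a case analysis on which components $X,Y$ come from, with Proposition~\ref{pr:equal id} handling equal identifying vectors, the Hamming/asymmetric distance of the identifying vectors handling the cross-family cases, and Theorem~\ref{thm4} handling two codewords in the same $\cA_i^k$, $i\in\{1,2,3\}$. One small imprecision there: for $\cA_2^k,\cA_3^k$ you pair ``Hamming distance at least $2$'' with ``pending-block rank at least $k-2$,'' but these two lower bounds do not hold simultaneously in all sub-cases --- when the two suffixes lie in the \emph{same} class $P_t$ the leading entries $a_t$ coincide and the pending-block rank may be only $k-3$, while the Hamming distance is then $4$; the paper splits explicitly into ``same class'' ($2+(k-3)$) and ``different classes'' ($1+(k-2)$), and you should do the same. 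Where you genuinely diverge is the cardinality: the paper proves it by induction on $k$, showing via Lemma~\ref{lem14} that only $v_{00}^k$ contributes new codewords so that $|\C^k|=|\C^{k-1}|+q^{2(n-k)}$ with the Pending Dots code as base case, whereas you compute the contribution of $\bigcup_{i=1}^3\cA_i^k$ directly through the cell-decomposition identity $\sum_{1\leq i<j\leq n-s}q^{2(n-s)-i-j-1}=\Gauss{n-s}{2}$. Your direct count is valid (all weight-$2$ suffixes are used exactly once since the classes $P_1,\dots,P_\ell$ partition them, and the bottom two rows of each diagram lie entirely in the filled columns), and it has the advantage of making transparent why the answer is exactly a Gaussian coefficient independent of the one-factorization; the paper's induction is shorter and reuses the already-established size of the $k=3$ base code.
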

\begin{proof}
We will first prove the cardinality by induction on $k$.
Observe that
the only  identifying vector that contributes additional codewords in $\C^k$ compared to $\C^{k-1}$ is $v_{00}^k$, since for all the other identifying vectors, the additional line of dots of the corresponding Ferrers diagrams does not increase the cardinality due to Lemma~\ref{lm:gen_k-2}, and thus $|\C^k|=|\C^{k-1}|+q^{2(n-k)}$ for any $k\geq 4$. Solving this recursively results in the above formula.

Next we prove that the minimum injection distance of $\C^k$ is $k-1$.
Let $X,Y\in \C^k$, $X\neq Y$. If $v(X)=v(Y)$, then by Proposition~\ref{pr:equal id}, $d_S(X,Y)\geq 2(k-1)$, i.e.\ $d_I(X,Y)\geq k-1$. Now we assume that $v(X)\neq v(Y)$. Note, that according to the definition of identifying vectors, $d_I(X,Y)\geq d_H(v(X,v(Y))/2=k-1$  for $(X,Y)\in \C_*^k\times \C_i^k$, $0\leq i\leq 3$, for $(X,Y)\in \C_0^k\times \C_0^k$, and for  $(X,Y)\in \C_i^k\times\C_j^k$, $i\neq j$. Now let $X,Y\in \C_i^k$, for some $1\leq i\leq 3$.
\begin{itemize}
  \item If the suffixes of $X$ and $Y$ of length $n-3$ belong to the same class $P_t$, then $d_H(v(X), v(Y))=4$ and $d_R(B_X,B_Y)=k-3$, for the submatrices $B_X, B_Y$ of $\cF(X), \cF(Y)$  corresponding to the common pending blocks. Then by Theorem~\ref{thm4}, $d_I(X, Y)\geq 2+(k-3)=k-1$.
  \item If the suffixes of $X$ and $Y$ of length $n-3$ belong to different classes $P_{t_1},P_{t_2}$, then $d_H(v(X), v(Y))=2$ and $d_R(B_X,B_Y)=k-2$, for the submatrices $B_X, B_Y$ of $\cF(X), \cF(Y)$  corresponding to the common pending blocks. Then by Theorem~\ref{thm4}, $d_I(X, Y)\geq 1+(k-2)=k-1$.
\end{itemize}
Hence, for any $X,Y\in\C^k$ it holds that $d_I(X,Y)\geq k-1$.
\end{proof}

\begin{cor}
Let $n\geq \frac{k^2+3k-2}{2}$ and $q^2+q+1\geq \ell$, where $\ell= n-\frac{k^2+k-6}{2}$ for odd $n-\frac{k^2+k-6}{2}$ (or $\ell= n-\frac{k^2+k-4}{2}$ for even $n-\frac{k^2+k-6}{2}$). Then
%
\[A_{q}(n, k-1, k) \geq \]
\[q^{2(n-k)}+\sum_{j=3}^{k-1} q^{2(n-\sum_{i=j}^k i)}+\Gauss{n-\frac{k^2+k-6}{2}}{2} .\]
\end{cor}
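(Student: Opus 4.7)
The plan is to obtain this bound as a direct consequence of Theorem~\ref{trm:recursive_parameters}. My first step is to verify that the hypotheses of the corollary are exactly the hypotheses under which Construction A may be carried out. Writing $s := \frac{k^2+k-6}{2} = \sum_{i=3}^{k} i$, the inequality $n \geq \frac{k^2+3k-2}{2}$ is precisely the dimension constraint $n \geq s + k + 2$ built into the construction. The case split on the parity of $n - s$ determines whether $\ell = n - s$ or $\ell = n - s - 1$, matching the definition used in Construction A, and the condition $q^2 + q + 1 \geq \ell$ is exactly what is needed at the base of the recursion: it guarantees that in the Pending Dots construction one can distinguish the $\ell$ classes $P_{1},\dots,P_{\ell}$ produced by the one-factorization of Lemma~\ref{lm:1-factorization} using values from $\F_q$ in the pending-dot positions.

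Once the hypotheses are matched, the second step is simply to instantiate Construction A with the given parameters $(n,k,q)$ and to quote Theorem~\ref{trm:recursive_parameters}. This produces an explicit code $\C^k \subseteq \Gr$ of minimum injection distance $k-1$ whose cardinality is
\[ |\C^k| = q^{2(n-k)} + \sum_{j=3}^{k-1} q^{2\left(n - \sum_{i=j}^{k} i\right)} + \Gauss{n - \frac{k^2+k-6}{2}}{2}. \]
To see that this is the same expression appearing in the theorem, observe that for $j = k-1$ one has $\sum_{i=j}^{k} i = 2k-1 = k + (k-1)$, which reproduces the second summand of the formula in Theorem~\ref{trm:recursive_parameters}, while for $j = 3$ one recovers $s$, matching the last $q$-power term there.

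Since $A_q(n, k-1, k)$ is, by definition, the maximum size of an $(n, M, k-1, k)_q$ code, the existence of $\C^k$ forces $A_q(n, k-1, k) \geq |\C^k|$, which is the claimed inequality. No genuine obstacle arises at the level of the corollary: all the substantive work, namely the pending-block distance analysis (Theorem~\ref{thm4}), the FDMRD constructions of Lemma~\ref{lemk-3} and Remark~\ref{rem5}, and the inductive cardinality count using Lemma~\ref{lm:gen_k-2}, has already been absorbed into Theorem~\ref{trm:recursive_parameters}. The corollary is therefore a translation of that theorem into the standard extremal notation $A_q(n,d,k)$.
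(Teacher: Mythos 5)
Your proposal is correct and matches the paper's (implicit) argument: the corollary is stated as an immediate consequence of Theorem~\ref{trm:recursive_parameters}, obtained by checking that the stated hypotheses are exactly those of Construction A (with $s=\sum_{i=3}^{k}i=\frac{k^2+k-6}{2}$ and $n\geq s+k+2$) and then invoking the definition of $A_q(n,k-1,k)$. Your verification that the summation form $\sum_{j=3}^{k-1} q^{2(n-\sum_{i=j}^{k}i)}$ reproduces the telescoped exponents in the theorem's cardinality formula is the only bookkeeping needed, and it is done correctly.
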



\begin{ex}
Let $k=5$, $d=4$, $n=19$, and $q=2$.  The code $\C^5$ obtained by Construction A has cardinality $2^{28}+2^{20}+2^{14}+\Gauss{7}{2}=2^{28}+1067627$ (the largest previously known code is of cardinality $2^{28}+1052778$~\cite{et09}).
We now illustrate the construction:

First, we partition the set of suffixes $y\in \F_2^7$ of weight $2$ into $7$ classes, $P_1,\ldots,P_7$ of size $3$ each. The identifying vectors of the code are partitioned as follows:
\begin{align*}
v_{00}^5={}\hspace{0.12cm}&(11111||0000||000||0000000)\\
\cA_0^5=\{
&(00001||1111||000||0000000),(00010||0001||111||0000000)\}\\
\cA_1^5=\{&(00100||0010||001||y):y\in P_1\}\\
\cA_2^5=\{&(01000||0100||010||y):y\in \{P_2,P_3\}\}\\
\cA_3^5=\{&(10000||1000||100||y):y\in \{P_4,P_5,P_6,P_7\}\}
\end{align*}

To demonstrate the idea of the construction we will consider only the set $\cA_2^5$. All the codewords corresponding to $\cA_2^5$ have the following common pending block $B$:

 \begin{footnotesize}
  \[\begin{array}{cccccccc}
  \bullet & \bullet & \bullet & \bullet  & \bullet & \bullet& \bullet  &  \bullet\\
   &&&& \bullet & \bullet & \bullet & \bullet \\
   &&& &&&& \bullet
        \end{array}
\]
\end{footnotesize}

If the suffix $y\in P_2$, or $y\in P_3$ then to distinguish between these two classes we assign the following values to $B$, respectively:

 \begin{footnotesize}
  \[\begin{array}{cccccccc}
  1 & \bullet & \bullet & \bullet  & \bullet & \bullet& \bullet  &  \bullet\\
   &&&& 1 & \bullet & \bullet & \bullet \\
   &&& &&&& 1
        \end{array},\textmd{ or }
        \begin{array}{cccccccc}
  0 & \bullet & \bullet & \bullet  & \bullet & \bullet& \bullet  &  \bullet\\
   &&&& 0 & \bullet & \bullet & \bullet \\
   &&& &&&& 0
        \end{array}
\]
\end{footnotesize}
For the  identifying vectors with the suffixes $y$ from $P_i$, $i=2,3$, we
construct a FDMRD code of distance $2$ for the remaining dots of $B$ (here, $a=0$ or $a=1$), as follows:
\[\begin{array}{cccccccc}
  $a$ & 0 & 0 & 0  & 0 & 0& 0  & 0\\
   &&&& $a$ &0 & 0 & 0 \\
   &&& &&&& $a$
        \end{array},
        \begin{array}{cccccccc}
   $a$ & 1 & 0 & 0  & 0 & 0& 0  & 0\\
   &&&& $a$ &1 & 0 & 0 \\
   &&& &&&& $a$
        \end{array},
        \]
        \[
        \begin{array}{cccccccc}
   $a$ & 0 & 1 & 0  & 0 & 0& 0  & 0\\
   &&&& $a$ &0 & 1 & 0 \\
   &&& &&&& $a$
        \end{array},
        \begin{array}{cccccccc}
   $a$ & 0 & 0 & 1  & 0 & 0& 0  & 0\\
   &&&& $a$ &0 & 0 & 1 \\
   &&& &&&& $a$
        \end{array},
        \]
\[\begin{array}{cccccccc}
  $a$ & 1 & 1 & 0  & 0 & 0& 0  & 0\\
   &&&& $a$ &1 & 1 & 0 \\
   &&& &&&& $a$
        \end{array},
        \begin{array}{cccccccc}
   $a$ & 1 & 0 & 1  & 0 & 0& 0  & 0\\
   &&&& $a$ &1 & 0 & 1 \\
   &&& &&&& $a$
        \end{array},
        \]
        \[
        \begin{array}{cccccccc}
   $a$ & 0 & 1 & 1  & 0 & 0& 0  & 0\\
   &&&& $a$ &0 & 1 & 1 \\
   &&& &&&& $a$
        \end{array},
        \begin{array}{cccccccc}
   $a$ & 1 & 1 & 1  & 0 & 0& 0  & 0\\
   &&&& $a$ &1 & 1 & 1 \\
   &&& &&&& $a$
        \end{array}.
        \]
      Since $P_{i}$ contains only three elements, we only need to use three of the above tableaux.
 We proceed analogously for the pending blocks of $\cA_{1}^{5}, \cA_{3}^{5}$. Then we fill the Ferrers diagrams corresponding to the last $7$ columns of the identifying vectors with an FDMRD code of minimum rank distance $4$ and lift these elements. Moreover, we add the lifted MRD code corresponding to $v_{00}^{5}$, which has cardinality $2^{28}$. The number of codewords which corresponds to the set $\cA_0^5$  is $2^{20}+2^{14}$. The number of codewords that correspond to $\cA_1^5\cup\cA_2^5\cup\cA_3^5$ is $\Gauss{7}{2}$.
\end{ex}


\vspace{-.3cm}
For small alphabets, when $q^2+q+1<\ell$, we use as the initial step for the recursion the Modified Pending Dots construction (Construction II in~\cite{et12}), where the last $n-3$ coordinates of the identifying vectors are partitioned into sets of size $q^2+q+2$ and then the same  idea for the construction of  the identifying vectors is applied in each such set. This Modified Pending Dots construction generates an $(n,M,2,3)_q$ constant dimension code
with
$M=q^{2(n-3)}+\sum_{i=1}^{\alpha}\Gauss{q^2+q+2}{2}q^{2(n-3-(q^2+q+2)i)}$,
which contains the lifted MRD code, where $\alpha=\left\lfloor\frac{n-3}{q^2+q+2}\right\rfloor$.
Then the size of an $(n,M,k-1,k)_q$ constant dimension code $\C^k$ obtained from the modified recursive construction is given by
$$|\C^k|= q^{2(n-k)}+\sum_{j=3}^{k-1} q^{2(n-\sum_{i=j}^k i)}+\sum_{i=1}^{\alpha_k}\Gauss{q^2+q+2}{2}q^{2(n-\frac{k^2+k-6}{2}-(q^2+q+1)i)},
$$
where $\alpha_k=\left\lfloor\frac{n-\frac{k^2+k-6}{2}}{q^2+q+2}\right\rfloor$.
Then, we obtain the following corollary.

\begin{cor}
Let $n\geq \frac{k^2+3k-2}{2}$ and $q^2+q+1< \ell$, where $\ell= n-\frac{k^2+k-6}{2}$ for odd $n-\frac{k^2+k-6}{2}$ (or $\ell= n-\frac{k^2+k-4}{2}$ for even $n-\frac{k^2+k-6}{2}$). Then
$$A_{q}(n, k-1, k) \geq q^{2(n-k)}+\sum_{j=3}^{k-1} q^{2(n-\sum_{i=j}^k i)}+\sum_{i=1}^{\alpha_k}\Gauss{q^2+q+2}{2}q^{2(n-\frac{k^2+k-6}{2}-(q^2+q+1)i)},$$
where $\alpha_k=\left\lfloor\frac{n-\frac{k^2+k-6}{2}}{q^2+q+2}\right\rfloor$.
\end{cor}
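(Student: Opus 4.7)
The plan is to imitate the proof of Theorem~\ref{trm:recursive_parameters}, but to swap out the base of the recursion: instead of seeding with the Pending Dots construction (which requires $q^{2}+q+1\geq \ell$), we seed with the Modified Pending Dots construction described in the preceding paragraph, which is designed exactly for the small-$q$ regime. All the higher-$k$ machinery of Construction~A (the prefix conventions, the partition of suffixes according to Lemma~\ref{lm:1-factorization}, the common quasi-pending blocks with the fillings guaranteed by Lemmas~\ref{lem14} and \ref{lemk-3}, and the top-level FDMRD filling arising from Remark~\ref{rem5}) is applied verbatim on top of this modified seed.

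First I would define $\C^{3}$ to be the modified pending-dots code of cardinality $q^{2(n-3)}+\sum_{i=1}^{\alpha}\Gauss{q^{2}+q+2}{2}q^{2(n-3-(q^{2}+q+2)i)}$, and then recursively define $\C^{k}$ for $k\geq 4$ exactly as in Construction~A: the identifying vectors split into $v_{00}^{k}$ (the lifted-MRD spine), $\cA_{0}^{k}$ (the descending chain of weight-one prefixes), and $\cA_{1}^{k},\cA_{2}^{k},\cA_{3}^{k}$ (inheriting from the base through Lemma~\ref{lm:1-factorization}). The pending-block fillings are chosen as before using Lemma~\ref{lemk-3}, and the remaining free dots in each Ferrers diagram are filled with an FDMRD code of minimum rank distance $k-1$ according to Remark~\ref{rem5}.

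For the distance, I would reuse the case analysis from the proof of Theorem~\ref{trm:recursive_parameters}: for $v(X)=v(Y)$ apply Proposition~\ref{pr:equal id}; for $v(X)\neq v(Y)$ with $X,Y$ belonging to $\cA_{i}^{k}$ with suffixes in the same class $P_{t}$, invoke Theorem~\ref{thm4} with $d_{H}(v(X),v(Y))=4$ and rank difference $k-3$ to get $d_{I}\geq k-1$; for suffixes in different classes, invoke Theorem~\ref{thm4} with $d_{H}=2$ and rank difference $k-2$; and for cross-terms between $\cA_{i}^{k}$'s or against the spine, the Hamming distance between identifying vectors is already at least $2(k-1)$, which by Proposition~\ref{prop3} immediately yields the required bound. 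Nothing in these arguments uses the constraint $q^{2}+q+1\geq \ell$; the constraint was only needed for the identifying-vector partition at level~$3$, and it is precisely that partition which the Modified Pending Dots construction replaces.

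For the cardinality I would induct on $k$, showing as in the proof of Theorem~\ref{trm:recursive_parameters} that $|\C^{k}|=|\C^{k-1}|+q^{2(n-k)}$. The key observation is Lemma~\ref{lem14}: for every identifying vector other than $v_{00}^{k}$, each new row added when passing from level $k-1$ to level $k$ has fewer dots than the last column of the Ferrers diagram, so the bound of Theorem~\ref{thm1} on the FDMRD cardinality is unchanged. Telescoping from $k$ down to $3$ then produces the two sums $q^{2(n-k)}+\sum_{j=3}^{k-1}q^{2(n-\sum_{i=j}^{k}i)}$, and the base term $|\C^{3}|$ contributes the $\sum_{i=1}^{\alpha_{k}}\Gauss{q^{2}+q+2}{2}q^{\,\cdot\,}$ summand (after shifting indices to account for the $s=\frac{k^{2}+k-6}{2}$ columns consumed by the pending blocks, which explains the appearance of $n-\frac{k^{2}+k-6}{2}$ in place of $n-3$ and the corresponding $\alpha_{k}$). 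The only step that needs a little care is this bookkeeping shift between the $n-3$ of the base construction and the $n-s$ of the fully recursed code; this will be the main place to be careful with indices, but it is a routine verification rather than a conceptual obstacle.
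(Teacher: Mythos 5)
Your proposal is correct and is essentially identical to the paper's own (largely implicit) argument: the paper obtains this corollary precisely by re-seeding the recursion of Construction~A with the Modified Pending Dots construction and then reusing the cardinality recursion $|\C^{k}|=|\C^{k-1}|+q^{2(n-k)}$ and the distance analysis of Theorem~\ref{trm:recursive_parameters} without change. The index bookkeeping you flag (replacing $n-3$ by $n-\frac{k^2+k-6}{2}$ in the base term) is exactly how the paper arrives at the stated formula.
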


In the following, we compare the size of the codes obtained from Construction A (and its modification for small alphabets) to the bound in Theorem~\ref{trm:upper bound from Steiner Structure}.
In particular, we are interested in an estimation of the function $F(n,k,q)$ defined by $F(n,k,q):=\frac{\C^k-q^{2(n-k)}}{A_q(n-k,k-2,k-1)}$. The following bound on $A_q(n,d,k)$ was established in~\cite{wa03a,xi09,et11}:

\begin{equation*}
{A}_{q}(n,d,k)\leq\frac{\Gauss{n}{k-d+1}}{\Gauss{k}{k-d+1}}.
\label{eq:Johnson}
\end{equation*}
%
Then
\begin{eqnarray}
\label{eq:ratio}
&F(n,k,q)&=\frac{\C^k-q^{2(n-k)}}{A_q(n-k,k-2,k-1)}\geq \frac{\C^k-q^{2(n-k)}}{\begin{footnotesize}\Gauss{n-k}{2}/\Gauss{k-1}{2}\end{footnotesize}}.
\end{eqnarray}

One can show that $F(n,k,q)$ is an increasing function in $k$ and $q$ and that for $k\geq 10$, $n\geq \frac{k^2+3k-2}{2}$, it holds that $F(n,k,2)\geq 0.99$. Hence, Construction A asymptotically attains the bound of Theorem~\ref{trm:upper bound from Steiner Structure} for any $k$ and $q$. In fact it gets very close to the bound already for small values of $k$ and $q$.   In comparison, the lifted MRD construction attains the bound asymptotically as well, but is much further away from the bound for small parameters.

The comparison between the cardinality of codes obtained by Construction A and other known codes is given in Section~\ref{sec:Tables}, Table~\ref{table1}.

\section{Constructions for  $(n,M,2,k)_q$ Codes}
\label{sec:constructionsk=4}

In this section we present two constructions for $(n,M,2,k)_q$ codes with $k\geq 4$ and $n\geq 2k+2$.
These constructions will then give rise to new lower bounds on the size of constant dimension codes with minimum injection distance $2$ (or equivalently subspace distance $4$).
The first one (Construction B), which is a modification of the multilevel construction from~\cite{et09}, is based on a specific choice of a set of identifying vectors obtained from matchings and the complement of matchings of the corresponding complete graphs and is given for general $k\geq 4$.
The second one (Construction C) combines the results on pending blocks and
Ferrers diagrams arising from different (nearly) perfect matchings of the complete graph.
Since it improves the first construction only for the parameters $k=4$ and $k=5$, it will only be explained for these two cases.

\subsection{A Special Instance of the Multilevel Construction}

The multilevel construction (see Section \ref{sec:preliminaries}) is a general code construction which usually provides large codes. However, it does not give rise to a general formula for the cardinality of the arising codes, since this construction depends on the specific choice of a related constant weight code.
In the following, we will use a specific (nearly) perfect matching  and the complement of a matching  of  complete graphs of sizes $n-k$ and $k$,  respectively,
to produce a good choice of the constant weight code for the multilevel construction and
to get a closed formula for the constant dimension code cardinality. 

\vspace{.5cm}

We first need the following result, which is similar to Lemma~\ref{lem14}.


\begin{lem}
\label{trm:k-2 ones}
Let $n\geq 2k+2$.
 Let $v$ be an identifying vector of length $n$ and weight $k$, such that there are $k-2$ many ones in the first $k$ positions of $v$. Then the Ferrers diagram  arising from $v$ has more or equally many dots in the first row than in the last column,
 and the upper bound for the dimension of a Ferrers diagram code with minimum distance $2$ is the number of dots that are not in the first row.
\end{lem}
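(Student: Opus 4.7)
The plan is to split the statement into two parts and reduce the second to the first by applying Theorem~\ref{thm1}. Part (b) (the dimension bound) will follow from Theorem~\ref{thm1} specialized to $d=2$: the bound reads $\rho \leq \min(w_0,w_1)$, where $w_0$ is the number of dots not in the rightmost column and $w_1$ is the number of dots not in the first row. Once we know the first row has at least as many dots as the last column, we get $w_1 \leq w_0$, so the minimum is $w_1$, which is exactly the number of dots not in the first row. Thus the entire work is to establish part (a).

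For part (a), I would introduce two quantities: let $i_1$ be the position of the leftmost one of $v$, and $p$ the position of the rightmost zero of $v$. The hypothesis that $k-2$ of the ones lie in positions $1,\dots,k$ leaves only two zeros among those positions, so all positions before $i_1$ are zero and $i_1-1\leq 2$, giving $i_1\leq 3$. Similarly, every position to the right of $p$ carries a one, so $n-p\leq k$, i.e.\ $p\geq n-k$. Note also that $n\geq 2k+2>k$ guarantees that at least one zero exists, so $p$ is well-defined.

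Next I would count explicitly. The first row of the Ferrers diagram collects all zeros of $v$ lying strictly to the right of $i_1$; since there are $n-k$ zeros in total and exactly $i_1-1$ of them lie before $i_1$, the first row has $n-k-i_1+1$ dots. The last column, being right-aligned, has a dot in row $j$ precisely when the $j$-th pivot sits to the left of position $p$; the number of such pivots is $k-(n-p)=k+p-n$, which is therefore the height of the rightmost column.

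The desired inequality $n-k-i_1+1 \geq k+p-n$ rewrites as $2n-2k+1 \geq i_1+p$. Using the two bounds established above, $i_1\leq 3$ and $p\leq n$, we obtain $i_1+p\leq n+3$, and the hypothesis $n\geq 2k+2$ yields $n+3\leq 2n-2k+1$, closing the argument. I do not anticipate any real obstacle here; the only place to be careful is checking that $i_1\leq 3$ (which uses precisely the fact that only two zeros can precede the first one when $k-2$ of the $k$ ones lie in the first $k$ coordinates) and keeping track of the right/left alignment conventions of the Ferrers diagram.
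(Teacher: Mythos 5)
Your proof is correct and follows essentially the same route as the paper, which simply defers to the proof of Lemma~\ref{lem14}: compare the number of dots in the first row with the number in the last column, and then read off the minimum in Theorem~\ref{thm1} for $d=2$. The only difference is cosmetic — you compute the two counts exactly (as $n-k-i_1+1$ and $k+p-n$) where the paper is content with the bounds ``at least $n-k-2$'' and ``at most $k$'', which already suffice under $n\geq 2k+2$.
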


\begin{proof}
Analogous to the proof of Lemma~\ref{lem14}.
\end{proof}

From Theorems \ref{thm1} and \ref{thm2} the next statement follows.

\begin{cor}\label{cor:remove_row}
 The dimension of a Ferrers diagram code with minimum distance $2$ in the setting of Lemma~\ref{trm:k-2 ones} is the number of dots that are not in the first row.
\end{cor}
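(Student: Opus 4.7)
The plan is to simply combine the two ingredients already available in the paper. Lemma~\ref{trm:k-2 ones} tells us that the upper bound from Theorem~\ref{thm1} on the dimension of an FDRM code with minimum rank distance $2$, applied to the Ferrers diagram arising from $v$, equals the number of dots not contained in the first row. Theorem~\ref{thm2} tells us that for every Ferrers diagram this upper bound is actually attained when $d=2$, i.e.\ an FDMRD code with minimum rank distance $2$ exists. Putting these two facts together immediately yields the claim.

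More concretely, I would write: let $\cF$ be the Ferrers diagram arising from the identifying vector $v$ under the hypotheses of Lemma~\ref{trm:k-2 ones}, and let $N_1$ denote the number of dots of $\cF$ not in the first row. By Lemma~\ref{trm:k-2 ones}, the bound $\rho\leq \min_i\{w_i\}$ of Theorem~\ref{thm1} specialised to $d=2$ evaluates to $N_1$ (the minimum being attained at $i=1$, since removing the first row and no column leaves $N_1$ dots, while removing no row and the last column leaves at least $N_1$ dots by the dot-count inequality of Lemma~\ref{trm:k-2 ones}). Hence any FDRM code $\cC_{\cF}$ of minimum rank distance $2$ has dimension at most $N_1$.

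For the matching lower bound, apply Theorem~\ref{thm2} with $d=2$ to obtain an FDMRD code $\cC_{\cF}$ of dimension exactly $\min_i\{w_i\}=N_1$. This shows the bound is achieved, and the corollary follows. There is no real obstacle here: the work has been done in Lemma~\ref{trm:k-2 ones} (upper bound) and Theorem~\ref{thm2} (existence/achievability), and the corollary is just the combination. The only thing worth double-checking is the evaluation of $\min_i\{w_i\}$ in the Ferrers-diagram bound, which is exactly the content of the second half of Lemma~\ref{trm:k-2 ones}.
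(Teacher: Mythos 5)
Your proposal is correct and is exactly the paper's intended argument: the paper introduces the corollary with the single line ``From Theorems~\ref{thm1} and \ref{thm2} the next statement follows,'' i.e.\ the upper bound evaluated in Lemma~\ref{trm:k-2 ones} combined with the achievability for $d=2$ from Theorem~\ref{thm2}. Your explicit check that the minimum in Theorem~\ref{thm1} is attained at $i=1$ (using the row-versus-column dot count from Lemma~\ref{trm:k-2 ones}) is a welcome bit of added detail, but it is the same route.
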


Let $n\geq 2k+2$ and define the subset of $\F_2^{n-k}$
\[O_{n-k} :=\{(110\dots 0), (00110\dots 0), (0000110\dots 0),\dots\} ,\]
which has $\lfloor  \frac{n-k}{2}\rfloor$ elements (the two ones are always shifted to the right by two positions).
In other words, if we denote by $v_i(j)$ the $j$th coordinate of the vector $v_i$, the set $O_{n-k}$ contains binary vectors $v_i$ of length $n-k$ and weight $2$, such that $v_i(j)=1$ if and only if $\lceil\frac{j}{2}\rceil=i$. Note, that for odd $n-k$ the last entry of all vectors in $O_{n-k}$ is always zero.

Also, we define the subset of  $\F_2^{k}$
\[\bar O_{k} :=\{(11\dots 100), (11\dots 10011), (11\dots 1001111),\dots\},\]
which has $\lfloor  \frac{k}{2}\rfloor$ elements (the two zeros are always shifted to the left by two positions).
In other words, the set $\bar O_{k}$ contains binary vectors $u_i$ of length $k$ and weight $k-2$, such that $u_i(j)=0$ if and only if $\lceil\frac{k-j+1}{2}\rceil=i$. Note, that for odd $k$ the first entry of all vectors in $\bar O_{k}$ is always one.

\begin{rem}
The elements of $O_{n-k}$ and $\bar O_{k}$ form a (nearly) perfect matching of $K_{n-k}$ and the complement of a (nearly) perfect matching of $K_{k}$, respectively.
\end{rem}


\noindent{\textbf{Construction B.}}

Let $n\geq 2k+2$.
We use the following sets of identifying vectors for the multilevel construction:

\begin{align*}
\cA_0^k &= \{(11\dots 11111||0 \dots 0) \}\\
\cA_1^k &= \{(11\dots 11100|| v) \mid v\in O_{n-k}\}\\
\cA_2^k &= \{(11\dots 10011|| v) \mid v\in  O_{n-k}\}\\
\vdots 
\end{align*}
\[
\cA_{\lfloor  \frac{k}{2}\rfloor}^k = \Big\{(w|| v) \mid v\in  O_{n-k} , w= \left\{\begin{array}{ll} (0011\dots 1) & \textnormal{ if } k \textnormal{ even } \\ (10011\dots 1) & \textnormal{ if } k \textnormal{ odd }\end{array}\right. \Big\},
\]
where the prefixes are the different elements from $\bar O_{k}$ (except for $\cA_0^k$).
 Then we construct the corresponding lifted FDMRD codes with injection distance $2$. Note that the code corresponding to $\cA_0^k$ is the conventional lifted MRD code. Furthermore we add the largest known $(n-k,M,2,k)_q$ code, with $k$ zero columns appended in front of every codeword.

\begin{thm}
\label{thm:cardinalityB}
The code from Construction B has minimum injection distance $2$ and cardinality
\[q^{(k-1)(n-k)} +A^*_q(n-k,2,k)+\]
\[\left(\sum_{i=0}^{\lfloor \frac{k-3}{2}\rfloor} q^{(k-3)(n-k)-4i} + \epsilon(k-1) q^{(k-3)(n-k-2)}\right) \sum_{i=0}^{\lfloor\frac{n-k}{2}\rfloor -1} q^{2(2i+\epsilon(n-k))} , \]
where $\epsilon(i)=1$ if $i$ odd and $\epsilon(i)=0$ if $i$ even.
\end{thm}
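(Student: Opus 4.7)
For the minimum distance I would apply the multilevel paradigm. All identifying vectors produced by the construction have weight $k$, so if $X,Y$ share an identifying vector, Proposition~\ref{pr:equal id} reduces $d_I(X,Y)$ to the rank distance of the underlying FDMRD block, which is at least $2$ by construction (and at least $2$ by hypothesis for the appended block, which is moreover mapped isometrically into $\Gr$ by prepending zero columns). If the identifying vectors differ, Proposition~\ref{prop3} bounds $d_I$ below by the asymmetric distance of the two identifying vectors, which for weight-$k$ vectors equals $\tfrac12 d_H$, so it suffices to verify pairwise Hamming distance $\geq 4$. This follows from a small case analysis: within a single $\cA_i^k$, the two distinct suffixes are elements of $O_{n-k}$ with pairwise disjoint supports (distance $4$); across $\cA_i^k$ and $\cA_j^k$ with $i\neq j\geq 1$, the distinct prefixes come from $\bar O_k$ with pairwise disjoint zero-sets, already contributing $4$; the unique vector of $\cA_0^k$ differs from every other multilevel vector in at least $4$ positions; and any identifying vector of the appended block places all $k$ of its ones in the last $n-k$ coordinates, so it differs from any multilevel vector in at least $(k-2)+(k-2)=2k-4\geq 4$ positions for $k\geq 4$.

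For the cardinality I would split the code into three pieces. The $\cA_0^k$-piece is the standard lifted MRD code and contributes $q^{(k-1)(n-k)}$; the appended block contributes $A^*_q(n-k,2,k)$. For each identifying vector $(u_i\|v_j)$ with $i\geq 1$, Lemma~\ref{trm:k-2 ones} together with Corollary~\ref{cor:remove_row} and Theorem~\ref{thm2} shows that the corresponding FDMRD code has dimension equal to the number of dots of its Ferrers diagram that lie outside the first row. Counting, for each pivot, the number of zeros of the identifying vector strictly to its right, and then subtracting the first-row contribution, one sees that this quantity decomposes additively as $g(u_i)+h(v_j)$, where $g(u_i)=(k-3)(n-k)-4(i-1)$ whenever $u_i$ starts with a $1$ (equivalently $i<\lceil k/2\rceil$), $g(u_{k/2})=(k-3)(n-k-2)$ in the single exceptional case in which $k$ is even and $u_{k/2}$ starts with $00$, and $h(v_j)=2(n-k-2j)$ uniformly for $v_j\in O_{n-k}$. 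Consequently the third piece has size $\bigl(\sum_{u_i}q^{g(u_i)}\bigr)\bigl(\sum_{v_j}q^{h(v_j)}\bigr)$, and reindexing (replacing $i$ by $i-1$ in the first factor and $j$ by $\lfloor(n-k)/2\rfloor - j$ in the second) converts these two sums into exactly the two expressions printed in the theorem, with $\epsilon(k-1)$ isolating the exceptional $u_{k/2}$ contribution and $\epsilon(n-k)$ encoding the parity of $n-k$ in the suffix sum.

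The main technical step is the additive decomposition $N(u_i,v_j)=g(u_i)+h(v_j)$. It rests on the observation that rows attached to the leading prefix ones (those before the zero-block of $u_i$) each see every zero of the identifying vector, rows attached to the trailing prefix ones each see only the $n-k-2$ suffix zeros, and the two rows attached to the suffix ones each see only zeros strictly to the right of their pivot; so the prefix- and suffix-depending contributions split cleanly once the first-row count is removed. Once this decomposition is in place, the remainder is parity bookkeeping and presents no genuine obstacle.
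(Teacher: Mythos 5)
Your proposal is correct and follows essentially the same route as the paper: the distance argument via Propositions~\ref{prop3} and~\ref{pr:equal id} applied to the weight-$k$ identifying vectors, and the cardinality via the ``dots outside the first row'' count (Lemma~\ref{trm:k-2 ones}, Corollary~\ref{cor:remove_row}, Theorem~\ref{thm2}) factored into a prefix sum over $\bar O_k$ and a suffix sum over $O_{n-k}$. Your explicit additive decomposition $g(u_i)+h(v_j)$ and the reindexing just spell out in detail the product structure that the paper states directly, and all the individual counts check out.
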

\begin{proof}
The minimum distance for elements with different identifying vectors follows by Proposition~\ref{prop3} from the Hamming distance of the identifying vectors, which is always at least $4$. For elements with the same identifying vector it follows from the minimum rank distance of the FDMRD code, by Proposition~\ref{pr:equal id}.

The cardinality can be shown as follows.
 From Theorem~\ref{thm2}, Lemma~\ref{trm:k-2 ones} and Corollary~\ref{cor:remove_row}
 we know that the number of dots not in the first row of the FD is the dimension of the FDMRD code. Hence, the subcode arising from $\cA_0^k$ has dimension $(k-1)(n-k)$. The number of matrix fillings for the height-$2$ Ferrers diagrams corresponding to $O_{n-k}$ is equal to $\sum_{i=0}^{\lfloor \frac{n-k}{2}\rfloor -1} q^{2(2i + \epsilon(n-k))}$ (where the empty matrix is also counted).
The number of fillings for the Ferrers diagrams corresponding to $\bar O_{k}$ without the first rows is equal to $\sum_{i=0}^{\lfloor \frac{k-3}{2}\rfloor} q^{(k-3)(n-k)-4i} + \epsilon(k-1) q^{(k-3)(n-k-2)}$.
Hence the formula follows.
\end{proof}
\vspace{-0.2cm}
\begin{cor}
Let $n\geq 2k+2$. Then
\[A_q(n,2,k) \geq
\sum_{i=1}^{\lfloor\frac{n-2}{k}\rfloor -1}\left( q^{(k-1)(n-ik)}+ \frac{(q^{2(k-2)}-1)(q^{2(n-ik-1)}-1)}{(q^4-1)^2}q^{(k-3)(n-ik-2)+4}\right).
\]
\end{cor}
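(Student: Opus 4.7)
My plan is to prove this bound by recursively iterating Construction~B. Inspecting Theorem~\ref{thm:cardinalityB}, the cardinality contains the term $A^*_q(n-k,2,k)$, corresponding to any known $(n-k,\cdot,2,k)_q$ code embedded into $\F_q^n$ via a $k$-column zero prefix; its identifying vectors all begin with $k$ zeros, so they are automatically at Hamming distance at least $4$ from every identifying vector of the freshly constructed part. In particular, the embedded subcode may itself be the output of Construction~B at length $n-k$, which has its own $A^*_q(n-2k,2,k)$-slot, and so on. This recursion is legal at the $i$-th level provided $n-(i-1)k\geq 2k+2$, which rearranges to $i \leq \lfloor(n-2)/k\rfloor-1$; unrolling and discarding the nonnegative residual $A^*_q(n-Jk,2,k)$ at the bottom gives
\[A_q(n,2,k) \;\geq\; \sum_{i=1}^{\lfloor(n-2)/k\rfloor-1} \Bigl( q^{(k-1)(n-ik)} + C_k^{(n-(i-1)k)}\cdot S_{n-ik}\Bigr),\]
where $C_k^{(m)}$ and $S_t$ abbreviate the two geometric factors appearing in the middle term of Theorem~\ref{thm:cardinalityB}.

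It then remains to show, term-by-term, that $C_k^{(n-(i-1)k)}\cdot S_{n-ik}$ dominates the simpler closed form $\frac{(q^{2(k-2)}-1)\bigl(q^{2(n-ik-1)}-1\bigr)}{(q^4-1)^2}\,q^{(k-3)(n-ik-2)+4}$ that appears in the corollary. Evaluating the defining sums yields $C_k^{(m)} = q^{(k-3)(m-k-2)}\cdot\frac{q^{2(k-1)}-1}{q^4-1}$ for odd $k$ (with a strict refinement for even $k$ due to the $\epsilon(k-1)$-correction), and $S_t$ equals $\frac{q^{2t}-1}{q^4-1}$ or $q^2\cdot\frac{q^{2(t-1)}-1}{q^4-1}$ depending on the parity of $t$. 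Combining these closed forms and applying the elementary inequalities $q^{2(k-1)}-1\geq q^{2}(q^{2(k-2)}-1)$ and $q^{2t}-1\geq q^{2}(q^{2(t-1)}-1)$ in whichever exponents are needed produces the desired dominance, and summing over $i$ closes the argument.

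The main obstacle is the parity bookkeeping in the term-by-term estimate: the four subcases arising from the parities of $k$ and $n-ik$ each have slightly different normal forms, and one must check that the two $q^2$-losses generated by the dominance inequalities combine to give exactly the exponent $(k-3)(n-ik-2)+4$ appearing in the target expression. Each individual case is mechanical, but keeping track of the extra $q^{(k-3)(m-k-2)}$ contribution in the even-$k$ refinement of $C_k^{(m)}$ requires some care.
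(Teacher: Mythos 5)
Your proposal is correct and follows essentially the same route as the paper: iterate the bound of Theorem~\ref{thm:cardinalityB} by plugging the length-$(n-k)$ Construction~B code into the $A^*_q(n-k,2,k)$ slot, stop when $n-(i-1)k<2k+2$, and lower-bound the parity-dependent middle term by its worst case over the parities of $k-1$ and $n-ik$. The paper packages your four-case check more compactly by writing the product of geometric sums as a single closed form in $\epsilon(k-1)$ and $\epsilon(n-k)$ and observing it is minimized when both equal $1$, which is exactly your two inequalities $q^{2(k-1)}-1\geq q^2(q^{2(k-2)}-1)$ and $q^{2t}-1\geq q^2(q^{2(t-1)}-1)$; also note the even-$k$ correction term $\epsilon(k-1)q^{(k-3)(m-k-2)}$ can simply be discarded since it is nonnegative and only a lower bound is needed.
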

\begin{proof}
From Theorem~\ref{thm:cardinalityB} it follows that the value for $A_q(n,2,k)-A^*_q(n-k,2,k)- q^{(k-1)(n-k)}$ is lower bounded by
\[\left(\sum_{i=0}^{\lfloor\frac{k-3}{2}\rfloor} q^{(k-3)(n-k)-4i}\right) \left(\sum_{i=0}^{\lfloor\frac{n-k-2}{2}\rfloor} q^{4i + 2\epsilon(n-k)}\right) = q^{(k-3)(n-k)+2\epsilon(n-k)} \left(\sum_{i=0}^{\lfloor\frac{k-3}{2}\rfloor} q^{-4i}\right) \left(\sum_{i=0}^{\lfloor\frac{n-k-2}{2}\rfloor} q^{4i }\right) .\]
Solving the sums and then using the equality $4\lfloor \frac{x}{2} \rfloor = 2x-2\epsilon(x)$ we get that this expression is equal to
\[q^{(k-3)(n-k)+2\epsilon(n-k)}  \frac{q^{-4 \lfloor\frac{k-3}{2} \rfloor} (q^{4(\lfloor\frac{k-3}{2}\rfloor+1)} -1)    (q^{4\lfloor\frac{n-k}{2}\rfloor} -1)   }{ (q^4-1)^2} \]
\[= q^{(k-3)(n-k)+2(\epsilon(n-k) + \epsilon(k-1))} \frac{ (q^{2(k-1-\epsilon(k-1))} -1)    (q^{2(n-k-\epsilon(n-k))} -1)   }{ (q^4-1)^2} \]
This expression takes its minimum for $\epsilon(n-k)=\epsilon(k-1)=1$, hence
$$A_q(n,2,k)\geq A^*_q(n-k,2,k)+ q^{(k-1)(n-k)} +q^{(k-3)(n-k-2)+4}  \frac{(q^{2(k-2)}-1)(q^{2(n-k-1)}-1)}{(q^4-1)^2}.
$$
Applying this bound recursively yields the desired formula.
\end{proof}
\vspace{-.2cm}
\begin{rem} Here we derived a closed cardinality formula for the special instance of the multilevel construction for $d=2$. Note that one can also apply this idea to obtain a bound on the cardinality for constant dimension codes with other values for the minimum injection distance.
\end{rem}

\subsection{New $(n,M,2,4)_q$- and $(n,M,2,5)_q$ Codes from One-Factorizations and Pending Dots}
The construction  presented in this subsection is based on a one-factorization of a complete graph which is used to construct a set of identifying vectors for the proposed codes, by generalizing the Pending Dots construction to  $
k>3$.
However, in contrast to  the Pending Dots construction, here we use not all but specifically chosen perfect matchings which result in a large constant dimension code.  First, we consider one-factorizations and the Ferrers diagrams arising from them.
\subsubsection{Ferrers Diagrams from One-Factorizations of the Complete Graph}
\label{sub:OF}
We will now present some results on Ferrers diagrams arising from the weight-$2$ vector representation of matchings of the complete graph $K_n$.
To do so we will use some graph theoretic results (see e.g.~\cite{tu84,li01b}) that will be useful for our choice of identifying vectors later on.
We start by with the existence proof of (near) one-factorizations, (see also Lemma~\ref{lm:1-factorization} in Section~\ref{sec:preliminaries}),
since we need the idea of this proof for our following results.

\begin{thm}[\cite{tu84,li01b}]\label{circleconstruction}
\mbox{}
\begin{enumerate}
\item If $n$ is odd there always exists a near one-factorization (NOF) of $K_n$.
\item If $n$ is even there always exists a one-factorization (OF) of $K_n$.
\end{enumerate}
\end{thm}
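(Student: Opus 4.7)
The plan is to give the classical rotational (circle) construction, handling the odd case first and then reducing the even case to it.

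For the odd case, I would label the vertex set of $K_n$ as $\Z/n\Z = \{0,1,\dots,n-1\}$ and, for each $i \in \Z/n\Z$, define the matching
\[
M_i := \bigl\{\,\{i+j,\,i-j\} : 1 \leq j \leq \tfrac{n-1}{2}\,\bigr\},
\]
where arithmetic is modulo $n$. Since $n$ is odd, $2$ is invertible modulo $n$, so the map $j \mapsto \{i+j, i-j\}$ is well-defined and produces $(n-1)/2$ pairwise disjoint edges, none of which contains vertex $i$. Thus each $M_i$ is a near-perfect matching omitting exactly the vertex $i$. The key step is to check that the $M_i$ partition the edge set of $K_n$. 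For an arbitrary edge $\{a,b\}$ with $a \neq b$, one solves $a = i+j$, $b = i-j$ modulo $n$: adding gives $2i \equiv a+b$, which has a unique solution $i$ because $2$ is a unit; then $j$ is determined up to sign, which yields the same unordered pair. Hence each edge lies in exactly one $M_i$, and since there are $n$ matchings each of size $(n-1)/2$, we recover all $\binom{n}{2}$ edges.

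For the even case, I would reduce to the odd case by distinguishing one vertex $\infty$ and using the $(n-1)$-gon on the remaining vertices $\Z/(n-1)\Z$. For each $i \in \Z/(n-1)\Z$, set
\[
F_i := \{\{\infty,\,i\}\} \;\cup\; \bigl\{\,\{i+j,\,i-j\} : 1 \leq j \leq \tfrac{n-2}{2}\,\bigr\}.
\]
Since $n-1$ is odd, the second set consists of $(n-2)/2$ disjoint edges omitting the vertex $i$ (by the odd-case argument applied to $K_{n-1}$), and adjoining the edge $\{\infty,i\}$ covers $i$ and $\infty$. Thus $F_i$ is a perfect matching of $K_n$. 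To see that $\{F_0,\dots,F_{n-2}\}$ is a one-factorization, I would check separately that the edges incident to $\infty$, namely $\{\infty, i\}$, each appear in exactly one $F_i$, and that the remaining edges among $\Z/(n-1)\Z$ are partitioned exactly by the near-factorization of $K_{n-1}$ established in the odd case.

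The main obstacle is essentially bookkeeping — verifying that the rotational scheme really partitions the edges — but this is handled cleanly by the invertibility of $2$ modulo an odd number, which is the single algebraic fact making both cases work. No further machinery is needed beyond this elementary rotation argument.
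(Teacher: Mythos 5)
Your proof is correct and is essentially the same rotational (circle) construction the paper uses: the paper's ``edge and all its parallels'' in the regular $n$-gon is exactly your family $\{i+j,i-j\}$, and the center-vertex trick for even $n$ matches your use of $\infty$. Your version just makes the bookkeeping explicit via the invertibility of $2$ modulo an odd number, which the paper leaves implicit.
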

\begin{proof}
\mbox{}
\begin{enumerate}
\item
If $n$ is odd we can draw the nodes of $K_n$ as a circle. Then we can choose one edge and all its parallels, which will give us a nearly perfect matching 
of $K_n$. We can repeat this step for any edge that is not covered yet
and get a NOF of $K_n$.
\item
If $n$ is even we can use $n-1$ nodes of $K_n$ as a circle, just like before, and use the remaining node as the center of the circle. Then we use again the set of parallel edges plus the edge that connects the remaining node on the circle with the center of the circle, which is a perfect matching.
The set of all these different perfect matchings is an OF of $K_n$.
\end{enumerate}
\end{proof}

Then one can easily count the number of elements in the sets of a NOF or an OF of $K_n$ (see also Lemma~\ref{lm:1-factorization}):

\begin{lem}
\mbox{}
\begin{enumerate}
\item
For a given odd $n$ the NOF of $K_n$ has $n$ many nearly perfect matchings and each one of them contains $\frac{n-1}{2}$ elements.
\item
For a given even $n$ the OF of $K_n$ has $n-1$ many perfect matchings and each one of them  contains $\frac{n}{2}$ elements.
\end{enumerate}
\end{lem}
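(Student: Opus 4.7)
My plan is to establish both parts by a direct double-counting argument on the edges of $K_{n}$, using the fact that a (near) one-factorization partitions the edge set into perfect (respectively nearly perfect) matchings.

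First, recall that the complete graph $K_{n}$ has exactly $\binom{n}{2}=\frac{n(n-1)}{2}$ edges. The size of each matching in the factorization is forced by the definition: a perfect matching of $K_{n}$ (which exists only when $n$ is even) covers every node exactly once and therefore consists of $n/2$ edges, while a nearly perfect matching (used when $n$ is odd) covers all but one node and therefore consists of $(n-1)/2$ edges.

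For part (2), assume $n$ is even. By Theorem~\ref{circleconstruction} an OF of $K_{n}$ exists; it partitions the $\frac{n(n-1)}{2}$ edges into perfect matchings of size $n/2$ each, so the number of classes is $\frac{n(n-1)/2}{n/2}=n-1$. For part (1), assume $n$ is odd. A NOF partitions the $\frac{n(n-1)}{2}$ edges into nearly perfect matchings of size $(n-1)/2$ each, giving $\frac{n(n-1)/2}{(n-1)/2}=n$ classes.

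Since the argument only uses the partition property of a (near) one-factorization together with the forced matching sizes, there is no real obstacle; one could equivalently read off the counts directly from the circle construction in the proof of Theorem~\ref{circleconstruction}, where each of the $n-1$ (resp.\ $n$) positions of the chosen edge around the circle produces one matching of the factorization.
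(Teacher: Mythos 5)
Your proof is correct. The paper itself offers no explicit argument for this lemma; it simply states that the counts can be read off from the circle construction of Theorem~\ref{circleconstruction}, where for odd $n$ each nearly perfect matching is indexed by the one uncovered node (giving $n$ classes) and for even $n$ each perfect matching is indexed by the choice of center edge (giving $n-1$ classes). Your primary argument is instead a double count: the matching sizes $n/2$ and $(n-1)/2$ are forced by the covering condition, and dividing $\binom{n}{2}$ by these sizes yields the number of classes. This is a genuinely different (if equally elementary) route, and it buys a little more: it shows that \emph{every} one-factorization, not just the specific circle-construction one used later for the labelling of the classes $P_i$, has exactly $n-1$ (resp.\ $n$) classes of the stated size. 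The paper's reading-off approach has the minor advantage of directly exhibiting the classes that are actually used in Lemma~\ref{thm6} and Constructions C-4 and C-5. Both arguments are complete; your closing remark correctly identifies the paper's intended shortcut as an alternative.
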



As in Section~\ref{sec:preliminaries},
we denote the different (nearly) perfect matchings of a (near) one-factorization in the vector representation by $P_i$ and call them \emph{classes}.

In the following construction we want to use the matchings which contribute the largest possible FDRM codes. So we need the following lemma, which gives the sizes of the corresponding Ferrers diagrams and, as a consequence, the cardinality of the FDRM codes.
We use  the construction of matchings described in the proof of Theorem~\ref{circleconstruction}. We denote $n':=n-k$ and label all the outside nodes counter-clock-wise from $1$ to $n'-1$ if $n'$ is even, and from $1$ to $n'$ if $n'$ is odd. If $n'$ is even, the center node is labeled by $n'$ and we name $P_i$ the perfect matching that contains the edge $(n',i)$ as the center edge (i.e. all other edges are orthogonal to this one). If $n'$ is odd, there is no center node and we name $P_i$ the nearly perfect matching that corresponds to the matching that does not cover node $i$.

\begin{lem}\label{thm6}
 For a given $P_i$, the size of the respective FDRM code with rank distance $1$ (i.e.\ the number of different matrix fillings for the corresponding Ferrers diagrams) is given by
\begin{itemize}
 \item
 $\left(\frac{n'}{2} - i\right) q^{(n'-2i)} + (i-1)q^{(2(n'-i)-1)} + q^{(n'-i-1)}$\\  if $i\leq \frac{n'}{2}$ and $n'$ is even,
 \item
$\left(i - \frac{n'}{2} \right) q^{(3n'-2(i+1))} + (n'-i-1)q^{(2(n'-i)-1)} + q^{(n'-i-1)} $  if $ i>  \frac{n'}{2}$
and $n'$ is even,
\item
$\left(\frac{n'+1}{2} - i\right) q^{(n'-2i-1)} + (i-1)q^{(2(n'-i)-1)} $\\ if $ i\leq \frac{n'+1}{2}$ and $n'$ is odd,
\item
$\left(i - \frac{n'+1}{2} \right) q^{(3n'-2i-1)} + (n'-i)q^{(2(n'-i)-1)}$\\  if $ i>  \frac{n'+1}{2}$
and $n'$ is odd.
\end{itemize}
\end{lem}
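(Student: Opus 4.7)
The plan is to derive the count directly from the explicit circle construction of the matchings given in the proof of Theorem~\ref{circleconstruction}. The key preliminary observation is that a weight-$2$ vector of length $n'$ with ones at positions $a<b$ gives a Ferrers diagram with exactly $(n'-a-1)+(n'-b)=2n'-a-b-1$ dots, so the trivial FDRM code of rank distance $1$ on it has $q^{2n'-a-b-1}$ matrix fillings. The statement therefore reduces to summing $q^{2n'-a-b-1}$ over all edges $(a,b)$ of $P_i$.

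First I would write $P_i$ as an explicit edge list. When $n'$ is even, $P_i$ consists of the center edge $(i,n')$ together with the ``orthogonal'' edges $(\{i-j\},\{i+j\})$ for $j=1,\dots,(n'-2)/2$, where the endpoints are reduced modulo $n'-1$ into $\{1,\dots,n'-1\}$. When $n'$ is odd, $P_i$ is the nearly perfect matching avoiding node $i$ and consists of the pairs $(\{i-j\},\{i+j\})$ for $j=1,\dots,(n'-1)/2$ with endpoints reduced modulo $n'$ into $\{1,\dots,n'\}\setminus\{i\}$.

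Next I would split the edges into three types according to whether wrap-around occurs. For a pair $(i-j,i+j)$ with no wrap, $a+b=2i$ and the contribution is $q^{2(n'-i)-1}$. For a pair with left-wrap (when $j\geq i$), the smaller endpoint becomes $i-j+(n'-1)$ in the even case or $i-j+n'$ in the odd case, so $a+b$ increases by $n'-1$ or $n'$ respectively, and the contribution becomes $q^{n'-2i}$ or $q^{n'-2i-1}$. For a pair with right-wrap (when $j\geq n'-i$), $a+b$ decreases by $n'-1$ or $n'$, and the contribution becomes $q^{3n'-2(i+1)}$ or $q^{3n'-2i-1}$. The center edge contributes the separate term $q^{n'-i-1}$ in the even case.

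Finally I would count how many $j$ fall into each type: when $i\leq n'/2$ (resp.\ $i\leq(n'+1)/2$) only left-wrap can occur, yielding $i-1$ non-wrap edges and the complementary number of left-wrap edges, while when $i>n'/2$ (resp.\ $i>(n'+1)/2$) only right-wrap can occur. Assembling these contributions directly produces the four stated formulas. The only real obstacle is keeping the wrap-around thresholds and the off-by-one counts straight across the four parity/range combinations, but no non-routine argument is needed once the bookkeeping is set up.
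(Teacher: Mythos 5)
Your proposal is correct and follows essentially the same route as the paper's proof: both work from the circle construction, reduce the problem to summing $q^{\#\text{dots}}$ over the edges of $P_i$, and group the edges by whether they stay on one side of node $i$ or wrap around. Your explicit dot-count formula $2n'-a-b-1$ (which shows the contribution depends only on $a+b$) is just a cleaner packaging of the paper's exchange argument that $(x,y)\mapsto(x-1,y+1)$ preserves the dot count, and the remaining case analysis checks out.
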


\begin{proof}
Can be found in Appendix~A.
\end{proof}

\subsubsection{Code Construction}

%
%
%
%
%
%
%
%
%
%
%


We will now describe a construction for constant dimension codes with $k=4$ and $k=5$.
The idea in both cases is similar to the multilevel construction: To construct the identifying vectors, we start with $(1\dots 1 0\dots 0)$ and then construct sets of identifying vectors with prefixes of length $k$ and weight $k-2$, and suffixes of length $n':=n-k$ and weight $2$. The suffixes will be chosen from some of the (nearly) one-factors $P_i$ of $K_{n-k}$. We choose the prefixes and suffixes that contribute the largest FDRM codes, using Lemma~\ref{thm6}. In addition, we use pending dots to allow for a choice of identifying vectors with a smaller Hamming distance.



\vspace{.3cm}

\noindent{\textbf{Construction C-4.}}
\\
Let $n\geq 10$ and $n'= n-4$. Hence, $n'$ is even if and only if $n$ is even.
We use the following sets of identifying vectors
\begin{align*}
\cA_0^4 &= \{(1111||0 \dots 0) \}\\
\cA_1^4 &= \{(1100|| v), (0011|| v) \mid v\in P_{\lceil\frac{n'}{2}\rceil+1}\}\\
\cA_2^4 &= \{(1001|| v), (0110|| v) \mid v\in P_2\}\\
\cA_3^4 &= \Big\{(1010|| v), (0101|| v) \mid  v\in \bigcup_{i=2}^{\min\{\lceil\frac{q}{2}\rceil+1,\lfloor\frac{n'}{2}\rfloor\}} P_{\lceil\frac{n'}{2}\rceil+i} \cup \bigcup_{i=3}^{\min\{\lfloor\frac{q}{2}\rfloor+2,\lceil\frac{n'}{2}\rceil\}} P_i  \Big\}
\end{align*}
and construct the corresponding lifted FDMRD codes  with injection distance $2$, where we use the pending dot in $\cA_3^4$. Note that the code corresponding to $\cA_0^4$ is the conventional lifted MRD code. Furthermore, we add the largest known $(n-4,M,2,4)_q$ code, with $4$ zero columns appended in front of every codeword, to obtain a constant dimension code $\C^4$.

\begin{thm}\label{thm:c4}
The code $\C^4$ obtained by Construction C-4 has minimum subspace distance $4$ and cardinality given by
\[ q^{3(n-4)} + (q^{(n-4)}+ q^{(n-6)}) \left[q^{2(n-6)}+ (\frac{n}{2}-4)q^{(n-7)} +q^{(\frac{n}{2}-4)} \right] + (q^{(n-5)}+ q^{(n-6)})\times \]
\[\left[    \sum_{i=2}^{\min\{\lceil\frac{q}{2}\rceil+1,\lfloor\frac{n'}{2}\rfloor\}} (iq^{2n-2i-10} + (\frac{n-6}{2} -i) q^{n-2i-5}   + q^{\frac{n-6}{2} - i}) + \right.  \]
\[ \left.
\sum_{i=1}^{\min\{\lfloor\frac{q}{2}\rfloor+1,\lceil\frac{n'}{2}\rceil -1\}} (iq^{2n-2i-11} + (\frac{n-6}{2} -i) q^{n-2i-6}   + q^{n - i-6} )  \right] \]
\[   +A_q^*(n-4,2,4)    \]
if $n$ is even, and
\[q^{3(n-4)} +   (q^{(n-4)}+ q^{(n-6)}) \left[q^{2(n-6)}+ (\frac{n-3}{2})q^{(n-8)} \right] + (q^{(n-5)}+ q^{(n-6)}) \times \]
\[\left[\sum_{i=2}^{\min\{\lceil\frac{q}{2}\rceil+1,\lfloor\frac{n'}{2}\rfloor\}} (iq^{2n-2i-10} + (\frac{n-5}{2} -i) q^{n-2i-6} )   \right.   \]
\[ +   \left.
\sum_{i=1}^{\min\{\lfloor\frac{q}{2}\rfloor+1,\lceil\frac{n'}{2}\rceil -1\}} (iq^{2n-2i-11} + (\frac{n-5}{2} -i) q^{n-2i-7} )    \right]\]
\[  +A_q^*(n-4,2,4)    \]
if $n$ is odd.

\end{thm}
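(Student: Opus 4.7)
My plan is to establish the minimum distance and the cardinality as two separate case analyses, both parallel to the arguments for Construction~B and the Pending Dots construction. For the distance, I split the code into five pieces: the lifted MRD code from $\cA_0^4$, the three lifted FDMRD families attached to $\cA_1^4,\cA_2^4,\cA_3^4$, and the appended $(n-4,M,2,4)_q$ code. Two codewords sharing the same identifying vector lie in one lifted FDMRD family of rank distance~$2$, so Proposition~\ref{pr:equal id} gives $d_S\geq 4$ directly. Codewords of the appended recursive code have identifying vectors of the form $(0000||w)$ with $\wt(w)=4$, so their Hamming distance to every identifying vector of the four $\cA_i^4$ is at least~$4$ and Proposition~\ref{prop3} applies; internally their pairwise distance is~$\geq 4$ by the definition of that code.

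What remains is to control Hamming distances among the identifying vectors of $\cA_0^4\cup\cA_1^4\cup\cA_2^4\cup\cA_3^4$. The six weight-$2$ prefixes are exactly the six edges of $K_4$, partitioned into the three perfect matchings that label $\cA_1^4,\cA_2^4,\cA_3^4$; two prefixes inside the same $\cA_i^4$ have Hamming distance~$4$, while two prefixes from different $\cA_i^4$'s share one coordinate and have Hamming distance~$2$. Likewise, two suffixes in a common class $P_j$ of the one-factorization of $K_{n-4}$ have disjoint supports, whereas suffixes from distinct classes share at most one position. Combining these observations with the trivial distance from $\cA_0^4$ (always~$4$), the only pairs with total Hamming distance below~$4$ occur in $\cA_3^4$, when two identifying vectors share the prefix but draw their suffixes from different classes; the suffix contribution is then exactly~$2$. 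Since the number of matching classes used in $\cA_3^4$ is bounded by $\lceil q/2\rceil+\lfloor q/2\rfloor=q$, I can fix a different $\F_q$-value in the pending-dot position---position~$2$ for prefix $(1010)$, position~$3$ for prefix $(0101)$---for every class, so Lemma~\ref{lm:pending dots} closes this last case.

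For the cardinality, Lemma~\ref{trm:k-2 ones} and Corollary~\ref{cor:remove_row} imply that each lifted FDMRD code has size $q^{w_1}$, where $w_1$ is the number of dots not in the top row of the corresponding Ferrers diagram. For each of the six prefixes, a direct inspection of the reduced row echelon form shows that, given a suffix with pivot positions $p_3<p_4$ inside the length-$(n-4)$ suffix, $w_1$ splits as a prefix-dependent exponent---namely $n-4,n-6,n-6,n-5,n-5,n-6$ for $(1100),(0011),(1001),(0110),(1010),(0101)$ respectively---plus the common suffix exponent $2(n-4)-p_3-p_4-1=2n-9-p_3-p_4$. Grouping the two prefixes of each $\cA_i^4$ then yields the prefactors $(q^{n-4}+q^{n-6})$ for $\cA_1^4$ and $(q^{n-5}+q^{n-6})$ for both $\cA_2^4$ and $\cA_3^4$, while summing $q^{2n-9-p_3-p_4}$ over $v$ in a single class $P_j$ of the one-factorization is precisely what Lemma~\ref{thm6} computes. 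Plugging in $j=\lceil n'/2\rceil+1$ for $\cA_1^4$, $j=2$ for $\cA_2^4$, and letting $j$ run through the explicit ranges of $\cA_3^4$ reproduces the three bracketed expressions in the statement (the two parities of $n$ produce the two displayed formulas, since $n'=n-4$ has the same parity as $n$). Adding $q^{3(n-4)}$ for $\cA_0^4$ and $A_q^*(n-4,2,4)$ for the appended code completes the count. The main technical obstacle is the bookkeeping: verifying that, as $j$ runs over the union of classes used by $\cA_2^4$ and $\cA_3^4$, the per-class contributions from Lemma~\ref{thm6} reassemble into the two displayed sums with their index ranges $\min\{\lceil q/2\rceil+1,\lfloor n'/2\rfloor\}$ and $\min\{\lfloor q/2\rfloor+1,\lceil n'/2\rceil-1\}$.
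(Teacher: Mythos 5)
Your proposal is correct and follows essentially the same route as the paper: the minimum distance is handled via Proposition~\ref{prop3}, Lemma~\ref{lm:pending dots} (for the same-prefix, different-class pairs in $\cA_3^4$) and Proposition~\ref{pr:equal id}, while the cardinality is obtained by factoring each lifted FDMRD code size into a prefix-dependent exponent times the per-class suffix sums of Lemma~\ref{thm6}, exactly as in the paper's Appendix~B. Your distance analysis is somewhat more explicit than the paper's one-line justification (and your prefix exponents $n-4,n-6,n-6,n-5,n-5,n-6$ check out), but it is the same argument; the only tiny imprecision is that suffixes from different classes have Hamming distance \emph{at least} $2$ rather than exactly $2$, which does not affect the conclusion.
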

\begin{proof}
The minimum distance for elements with different identifying vectors follows from the Hamming distance of the identifying vectors, together with the pending dots, i.e, from Proposition~\ref{prop3} and Lemma~\ref{lm:pending dots}. For elements with the same identifying vector it follows from the minimum rank distance of the FDMRD code, by Proposition~\ref{pr:equal id}.

The proof for the cardinality can be found in Appendix~B.
\end{proof}
\begin{ex}
Let $q=2$, $n=10$. Then we have $A_2(10,2,4)\geq 2^{18}+37456+21$, where $A_2(6,2,4)=21$. The largest previously known code obtained by the multilevel construction~\cite{et09} has cardinality $2^{18}+34768$.
\end{ex}
\vspace{-0.4cm}
\begin{ex}
Let $q=2$, $n=12$. Then we have $A_2(12,2,4)\geq 2^{24}+2333568+701+2^{12}=2^{24}+2338365$, where $A_2(8,2,4)\geq 701+2^{12}$. The largest previously known code obtained by the multilevel construction~\cite{et09} has cardinality $2^{24}+2290845$.
\end{ex}

\noindent{\textbf{Construction C-5.}}
\\
Let $n\geq 12$ and $n'= n-5$. Hence, $n'$ is even if and only if $n$ is odd.
We use the following sets of identifying vectors
\begin{align*}
\cA_1^5 &= \{(11100|| v), (10011|| v) \mid v\in P_{\lceil\frac{n'}{2}\rceil+1}\}\\
\cA_2^5 &= \{(11010|| v), (01101|| v) \mid v\in P_2 \}\\
\cA_3^5 &= \{(01110|| v), (10101|| v) \mid v\in P_{\lceil\frac{n'}{2}\rceil+2} \}\\
\cA_4^5 &= \{(00111|| v), (11001|| v) \mid v\in P_3\}\\
\cA_5^5 &= \Big\{(10110|| v), (01011|| v) \mid v\in \bigcup_{i=3}^{\min\{\lceil\frac{q}{2}\rceil+2,\lfloor\frac{n'}{2}\rfloor\}} P_{\lceil\frac{n'}{2}\rceil + i}  \cup \bigcup_{i=4}^{\min\{\lfloor\frac{q}{2}\rfloor+3,\lceil\frac{n'}{2}\rceil\}} P_i  \Big\}
\end{align*}
and construct the corresponding lifted FDMRD codes with injection distance 2, where we use the pending dot in $\cA_5^5$. Note that the code corresponding to $\cA_0^5$ is the conventional lifted MRD code. Furthermore, we add the largest known $(n-5,M,2,5)_q$ code, with $5$ zero columns appended in front of every codeword to obtain a constant dimension code $\C^5$.

\begin{thm}\label{thm:c5}
The code $\C^5$ obtained by Construction C-5 has minimum subspace distance $4$ and cardinality given by
\begin{itemize}
\item
$q^{4(n-5)}  + (q^{2n-10}+ q^{2n-14}) (q^{2(n-7)}+ (\frac{n-8}{2})q^{(n-9)})  +
(q^{2n-11}+ q^{2n-13}) (\frac{n-8}{2}  q^{(n-10)} + q^{(2n-15)} ) +
(q^{2n-12}+ q^{2n-13}) (2q^{2(n-8)}+ \frac{n-10}{2}q^{(n-11)} )+ (q^{2n-12}+ q^{2n-14}) 
\left[\sum_{i=3}^{\min\{\lceil\frac{q}{2}\rceil+2,\lfloor\frac{n'}{2}\rfloor\}} (iq^{2n-2i-12} + (\frac{n-6}{2} -i) q^{n-2i-7} )       + \right. \\ \left.     \sum_{i=2}^{\min\{\lfloor\frac{q}{2}\rfloor+2,\lceil\frac{n'}{2}\rceil -1\}} (iq^{2n-2i-13} + (\frac{n-6}{2} -i) q^{n-2i-8} )  \right]
+  A^{*}_q(n-5,2,5)   $\\
if $n$ is even.\\
\item
$q^{4(n-5)}  + (q^{2n-10}+ q^{2n-14}) (q^{2n-14}+ (\frac{n-9}{2})q^{(n-8)} + q^{\frac{n-9}{2}})  +
(q^{2n-11}+ q^{2n-13}) (\frac{n-9}{2}  q^{(n-9)} + q^{(2n-15)} + q^{n-8} ) +
(q^{2n-12}+ q^{2n-13}) (q^{2n-16}+ (\frac{n-11}{2})q^{(n-10)} + q^{\frac{n-11}{2}}  )  + (q^{2n-12}+ q^{2n-14})  \left[\sum_{i=3}^{\min\{\lceil\frac{q}{2}\rceil+2,\lfloor\frac{n'}{2}\rfloor\}} (iq^{2n-2i-12} +\right. \\  (\frac{n-7}{2} -i) q^{n-2i-6} +
 q^{\frac{n-7}{2} -i})       +  \sum_{i=2}^{\min\{\lfloor\frac{q}{2}\rfloor+2,\lceil\frac{n'}{2}\rceil -1\}}(iq^{2n-2i-13} +  (\frac{n-7}{2} -i) q^{n-2i-7} + q^{n -i-7})  \Big]
+  A^*_q(n-5,2,5 )   $\\
if $n$ is odd.
\end{itemize}
\end{thm}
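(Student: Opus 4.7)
The proof mirrors that of Theorem \ref{thm:c4} and splits into verifying the minimum subspace distance and computing the cardinality. For two codewords sharing an identifying vector, Proposition \ref{pr:equal id} combined with the minimum rank distance $2$ of the underlying FDMRD filling yields $d_S \geq 4$. For two codewords with different identifying vectors, the Hamming distance of the identifying vectors is at least $4$ in every case except pairs in $\C^5_5$ that share a prefix but draw suffixes from different matching classes, so Proposition \ref{prop3} suffices. Indeed, within any $\cA_i^5$ with $i \leq 4$ the suffixes lie in a single matching and are pairwise disjoint, contributing suffix Hamming distance $4$; across different $\cA_i^5$'s the suffixes live in distinct (edge-disjoint) matchings, so the at-least-$2$ prefix Hamming contribution and the at-least-$2$ suffix Hamming contribution sum to at least $4$. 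In the remaining delicate case the Hamming distance can drop to $2$, but the pending dots in each prefix of $\cA_5^5$ are assigned distinct $\F_q$-values across matching classes, so Lemma \ref{lm:pending dots} restores $d_S \geq 4$.

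For the cardinality, $\C^5$ decomposes as the lifted MRD code on $v_{00}^5 = (11111 \| 0 \cdots 0)$ contributing $q^{4(n-5)}$, the five lifted FDMRD codes on $\cA_i^5$, and the appended $(n-5, M, 2, 5)_q$ code contributing $A^*_q(n-5, 2, 5)$. Every identifying vector in every $\cA_i^5$ has weight $3 = k - 2$ in its first $k = 5$ positions, so Lemma \ref{trm:k-2 ones}, Corollary \ref{cor:remove_row} and Theorem \ref{thm2} imply that the lifted FDMRD code attached to each such vector has size $q^N$, where $N$ is the total number of dots in rows $2$ through $5$ of the associated Ferrers diagram. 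Because the first three pivots always lie within the first five positions, $N$ splits cleanly as a prefix count (dots in rows $2$--$5$ in columns $1$--$5$, depending only on the prefix) plus a suffix count (dots in the remaining $n - 5$ columns, depending only on the weight-$2$ suffix $v_s$).

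Consequently, each $\C^5_i$ contributes $(q^{\alpha_{i,1}} + q^{\alpha_{i,2}}) \cdot q^{2(n-7)} \cdot \Sigma_i$, where $\alpha_{i,1}, \alpha_{i,2}$ are the prefix dot counts for the two prefixes of $\cA_i^5$, the factor $q^{2(n-7)} = q^{2n-14}$ accounts for the $n - 7$ dots that rows $2$ and $3$ always contribute in the suffix columns independently of $v_s$, and $\Sigma_i$ is the sum over the used matching classes of the FDRM code size from Lemma \ref{thm6}. Inspecting the ten prefixes and absorbing the $q^{2n-14}$ factor yields the prefactors $q^{2n-10}+q^{2n-14}$ for $\cA_1^5$, $q^{2n-11}+q^{2n-13}$ for $\cA_2^5$, $q^{2n-12}+q^{2n-13}$ for $\cA_3^5$, and $q^{2n-12}+q^{2n-14}$ shared by $\cA_4^5$ and $\cA_5^5$; the corresponding $\Sigma_i$ for $i \leq 4$ is the Lemma \ref{thm6} expression for the single matching $P_{\lceil n'/2 \rceil + 1}$, $P_2$, $P_{\lceil n'/2 \rceil + 2}$, $P_3$ used by $\cA_i^5$ respectively.

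The contribution of $\cA_5^5$ sums over the up-to-$q$ matching classes distinguished by the pending dot, which fixes one dot per class to a distinct $\F_q$-value. This splits into two nested sums over the upper-half classes $P_{\lceil n'/2 \rceil + i}$ ($i \geq 3$) and the lower-half classes $P_i$ ($i \geq 4$) of Construction C-5, into the second of which the single $P_3$ matching of $\cA_4^5$ is absorbed (via the index shift $i = 2 \leftrightarrow P_3$) since $\cA_4^5$ shares the prefix factor of $\cA_5^5$. The parity split between even and odd $n$ is dictated by the parity of $n' = n - 5$: Lemma \ref{thm6} applies its odd-$n'$ cases for even $n$ and its even-$n'$ cases for odd $n$, the latter contributing the extra $q^{\lfloor n'/2 \rfloor - i}$ summands (from the central-node edges of the OF of $K_{n-5}$) visible as $q^{(n-9)/2}$ terms in the odd-$n$ formula. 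The principal obstacle is the bookkeeping: ten prefix dot counts, the parity-dependent outputs of Lemma \ref{thm6}, the pending-dot reduction in $\cA_5^5$, and the merger of $\cA_4^5$ into the second nested sum must all be tracked and combined consistently to recover the stated closed forms.
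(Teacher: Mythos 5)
Your proposal is correct and follows essentially the same route as the paper: the distance argument via Proposition~\ref{prop3}, Lemma~\ref{lm:pending dots} (for the same-prefix, different-class pairs in $\cA_5^5$) and Proposition~\ref{pr:equal id}, and the cardinality via the factorization of each $\cA_i^5$ contribution into a prefix dot-count factor times $q^{2(n-7)}$ times the Lemma~\ref{thm6} sums, with $\cA_4^5$ absorbed into the second sum of $\cA_5^5$ by an index shift — exactly as in Appendix~B. The only (harmless) imprecision is the exponent you quote for the central-node extra summand, which is $q^{n'-i-1}$ in Lemma~\ref{thm6} rather than $q^{\lfloor n'/2\rfloor-i}$, though it evaluates to the stated $q^{(n-9)/2}$-type terms for the classes actually used.
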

\begin{proof}
The minimum distance for elements with different identifying vectors follows from the Hamming distance of the identifying vectors, together with the pending dots, by Proposition~\ref{prop3} and Lemma~\ref{lm:pending dots}. For elements with the same identifying vector it follows from the minimum rank distance of the FDMRD code, by Proposition~\ref{pr:equal id}.

The proof for the cardinality can be found in Appendix~B.
\end{proof}

\begin{rem}
One can easily generalize Constructions C-4 and C-5 to larger values of $k$ by choosing the prefixes for the sets $\cA_i^k$ as follows: Choose an OF (or NOF) of $K_{k}$, look at its vector representation and add the all-one vector to all these vectors (i.e.\ bitflip all coordinates). Thus, the prefixes in a given set $\cA_i^k$ form a code with constant weight $k-2$ and minimum Hamming distance $4$ in $\F_2^k$. But one can then prove that there is no such set with pending dots in all its elements. Hence, this generalization would not improve the multilevel construction from \cite{et09}. This is why we only describe the construction for $k=4$ or $k=5$ in this work.
\end{rem}

The comparison between the multilevel construction and the codes obtained by Constructions B and C can be found in Section~\ref{sec:Tables}, Table~\ref{table2}.
One can see that Constructions C-4 and C-5 improve Construction B, but remember that Construction B works for general $k$ and thus for more parameters than Construction C-4 or C-5. Furthermore, Construction C-4 yields larger codes than the multilevel construction and hence results the largest known codes for some parameter sets. On the other hand, Construction C-5 does not improve the cardinality of the codes arising from the multilevel construction. The advantage still is that we have a closed formula for all constructions explained in this section, in contrast to the multilevel construction.



\section{Construction for a New $(n,M,d,k)_q$ Code from an Old Code}
\label{sec:new-old}

In the following we discuss a way for constructing a  new constant dimension code with minimum injection distance $d$ (or subspace distance $2d$) from a given one. This approach is fairly simple, but surprisingly, for some families of parameters it provides the largest known codes.

\vspace{0.2cm}

\noindent\textbf{Construction D.}
\\
Let $\C\in \Gr$ be an $(n,M,d,k)_q$ code, let $\Delta$ be an integer such that $\Delta\geq k$, and let $\cC$ be an $[\cF,\Delta(k-d+1), d]$ FDMRD code with a full $k\times \Delta$ rectangular Ferrers diagram.
Define
$$\C'=\{X'\in\cG_q(k,n'):\mbox{RE}(X')=[\mbox{RE}(X)A], X\in \C, A\in \cC\}.$$

\begin{thm}
\label{trm:code extension}
The code $\C'$ obtained by Construction D is an $(n'=n+\Delta , M', d,k)_{q}$ code in $\cG_q(k,n')$, such that
$$M'= Mq^{\Delta (k-d+1)}.$$
\end{thm}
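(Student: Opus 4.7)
My plan is to verify three things in sequence: that each element of $\C'$ is a well-defined $k$-dimensional subspace of $\F_q^{n+\Delta}$ with the displayed block as its reduced row echelon form, that the announced cardinality $M' = M q^{\Delta(k-d+1)}$ is correct, and that the minimum injection distance equals $d$. The first point is essentially free: appending $\Delta$ columns on the right of $\mbox{RE}(X)$ preserves linear independence of the $k$ rows, and since the pivots of $\mbox{RE}(X)$ are already located in the first $n$ columns, the matrix $[\mbox{RE}(X)\,A]$ is automatically in reduced row echelon form, hence it is the unique RE-form of the subspace it generates.

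For the cardinality, the RE-form just discussed is a canonical representative of a subspace, so distinct pairs $(X,A)\neq (X',A')$ in $\C\times \cC$ yield distinct subspaces in $\C'$. Therefore $|\C'|=|\C|\cdot|\cC|=M\cdot q^{\Delta(k-d+1)}$ since $\cC$ is a full rectangular $k\times\Delta$ FDMRD code of dimension $\Delta(k-d+1)$.

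The main content is the distance statement. I will split into two cases based on whether $X = Y$ or not. If $X = Y$ but $A \neq B$, then the identifying vectors of $X'$ and $Y'$ coincide, so Proposition~\ref{pr:equal id} gives $d_I(X',Y') = d_R(\mbox{RE}(X'),\mbox{RE}(Y')) = \rk(A-B) \geq d$ since $\cC$ has minimum rank distance $d$. If $X\neq Y$, I will use the coordinate projection $\pi\colon\F_q^{n+\Delta}\to\F_q^n$ onto the first $n$ entries. Since the rows of $\mbox{RE}(X)$ form a basis of $X$, the restriction $\pi|_{X'}\colon X'\to X$ is a linear isomorphism (and similarly $\pi|_{Y'}\colon Y'\to Y$). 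Thus $\pi(X'\cap Y')\subseteq X\cap Y$, which gives
\[
\dim(X'\cap Y') \leq \dim(X\cap Y) \leq k - d,
\]
so $d_I(X',Y') = k - \dim(X'\cap Y') \geq d$.

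To show that the minimum distance is exactly $d$ (and not larger), I will pick any pair $(X,Y)\in\C\times\C$ achieving $d_I(X,Y)=d$ and consider the codewords of $\C'$ obtained with $A=B=0$. Then $\{(v,0)\mid v\in X\cap Y\}$ sits inside $X'\cap Y'$, so $\dim(X'\cap Y')\geq k-d$, and combined with the upper bound from the projection argument we obtain equality, hence $d_I(X',Y')=d$. I do not foresee a serious obstacle in this proof; the one delicate point is ensuring in Case~2 that the projection really is injective on each $X'$, which is exactly where the RE-form hypothesis (pivots all in the first $n$ columns) is used.
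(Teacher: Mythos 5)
Your proof is correct and follows essentially the same two-case argument as the paper: for $X=Y$, $A\neq B$ you invoke Proposition~\ref{pr:equal id} exactly as the paper does, and for $X\neq Y$ your projection argument is just an explicit justification of the paper's inequality $\dim(X'\cap Y')\leq\dim(X\cap Y)$. Your extra verification that the minimum distance is exactly $d$ (via $A=B=0$) is a welcome detail the paper omits, but it does not change the underlying approach.
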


\begin{proof}
Since $|\cC|=q^{\Delta(k-d+1)}$, it follows from Theorem~\ref{thm1} that $M'= Mq^{\Delta (k-d+1)}$.
To prove the minimum distance we distinguish between two cases:
\begin{enumerate}
\item
Let $X',Y'\in\C'$, such that $\mbox{RE}(X')=[\mbox{RE}(X)A]$, $\mbox{RE}(Y')=[\mbox{RE}(X)B]$, for $X\in\C$ and $A,B\in \cC$, $A\neq B$. Then $v(X')=v(Y')$ since all the ones of the identifying vectors of the codewords from $\C'$ appear in the first $k$ coordinates. Hence, by Proposition~\ref{pr:equal id}, $d_I(X',Y')=d_R(\mbox{RE}(X'),\mbox{RE}(Y'))$. Since $\mbox{RE}(X')-\mbox{RE}(Y')=[0 A-B]$, where $0$ is a $k\times n$ zeros matrix, we have $d_I(X',Y')=d_R(A,B)\geq d$, since $A,B\in\cC$.
\item
Let $X',Y'\in\C'$, such that $\mbox{RE}(X')=[\mbox{RE}(X)A]$, $\mbox{RE}(Y')=[\mbox{RE}(Y)B]$, for $X,Y\in\C$, $X\neq Y$, and $A,B\in \cC$. Then $d_I(X',Y')=k-\dim(X'\cap Y')\geq k-\dim(X\cap Y)\geq d$, since $X,Y\in \C$.
\end{enumerate}
\end{proof}


\begin{ex}
We take the  $(8,2^{12}+701,2,4)_2$  code $\C$ constructed in~\cite{et12} and apply on it Construction~D
with $\Delta =4$.
Then the new code $\C'$ has cardinality $|\C'|= 2^{24}+701\cdot 2^{12}=2^{24} + 2871296$ and has parameters $(12,|\C'|, 2,4)_2$.
The largest previously known code of these parameters of size $2^{24}+2290845$ was obtained in~\cite{et09}.
\end{ex}

Like in the constructions before we can then also add codes of shorter length with zeros appended in front to these codes. Hence we get a new lower bound as follows.

\begin{cor}
Let $n\geq 3k$. Then for any positive integer  $\Delta$, such that
$n\geq \Delta  \geq k$, it holds that
\[A_q(n,d,k) \geq q^{\Delta (k-d+1)} A_q(n-\Delta ,d,k) + A_q(\Delta,d,k)  .\]
In particular, for $\Delta =k$, we get
\[A_q(n,d,k) \geq q^{k(k-d+1)} A_q(n-k,d,k) + 1\]
and, for $\Delta  = n-k$, we get
\[A_q(n,d,k) \geq q^{(n-k)(k-d+1)} +  A_q(n-k,d,k) \]
which, if recursively solved, corresponds exactly to the formula of the multi-component lifted MRD codes from \cite{tr13phd}.
\end{cor}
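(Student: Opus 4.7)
The plan is to obtain the bound by combining Construction D with a shifted shorter code and taking the union, then verifying that the two pieces interact well enough to preserve the minimum distance.

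First, I apply Construction D with parameter $\Delta$ to an optimal $(n-\Delta,A_q(n-\Delta,d,k),d,k)_q$ code (or, if $n-\Delta<k$, take the empty code so the corresponding term vanishes). By Theorem~\ref{trm:code extension} this yields a constant dimension code $\C_1\subseteq\mathcal{G}_q(k,n)$ of cardinality $q^{\Delta(k-d+1)}\,A_q(n-\Delta,d,k)$ and minimum injection distance $d$. Separately, I take an optimal $(\Delta,A_q(\Delta,d,k),d,k)_q$ code $\C_2$ and prepend $n-\Delta$ zero columns to the reduced row echelon generator matrix of each codeword, producing a code $\tilde{\C}_2\subseteq\mathcal{G}_q(k,n)$ of the same cardinality and the same minimum injection distance (this prepending is an isometric embedding: the relevant pivots and row echelon structure are unchanged, and Proposition~\ref{pr:equal id} applies to any pair with identical identifying vectors).

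Next I form $\C := \C_1\cup\tilde{\C}_2$ and verify its minimum injection distance. By construction, every $X_1\in\C_1$ has its $k$ pivots in the first $n-\Delta$ columns (inherited from the underlying code in $\mathcal{G}_q(k,n-\Delta)$), so $v(X_1)$ is supported in $\{1,\dots,n-\Delta\}$. Every $X_2\in\tilde{\C}_2$ has its $k$ pivots in the last $\Delta$ columns, so $v(X_2)$ is supported in $\{n-\Delta+1,\dots,n\}$. These supports are disjoint, giving $d_H(v(X_1),v(X_2))=2k$, and by Proposition~\ref{prop3} we get $d_I(X_1,X_2)\ge k\ge d$. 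Distances within $\C_1$ are $\ge d$ by Theorem~\ref{trm:code extension}, and distances within $\tilde{\C}_2$ are $\ge d$ by the isometry noted above. In particular $\C_1\cap\tilde{\C}_2=\varnothing$ since no subspace can have pivots in two disjoint column sets simultaneously, so $|\C|=|\C_1|+|\tilde{\C}_2|=q^{\Delta(k-d+1)}A_q(n-\Delta,d,k)+A_q(\Delta,d,k)$, proving the main inequality.

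The two special cases follow by specialization: for $\Delta=k$, use the trivial bound $A_q(k,d,k)\ge 1$; for $\Delta=n-k$, the cardinality of $\C_1$ becomes $q^{(n-k)(k-d+1)}A_q(k,d,k)\ge q^{(n-k)(k-d+1)}$ and the $A_q(\Delta,d,k)$ term becomes $A_q(n-k,d,k)$. Iterating this last form (replacing $n$ by $n-k$, then $n-2k$, and so on, as long as the argument remains $\ge 2k$) unfolds into the geometric-type sum $\sum_{i} q^{(n-ik)(k-d+1)}$, which coincides with the multi-component lifted MRD cardinality formula from~\cite{tr13phd}.

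The only genuine subtlety is the support/pivot bookkeeping across the two pieces; everything else is an immediate application of Theorem~\ref{trm:code extension} and Proposition~\ref{prop3}. Edge cases where $n-\Delta<k$ or $\Delta<k$ make one of the two summands trivially zero, so the stated bound continues to hold without further adjustment.
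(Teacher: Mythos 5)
Your proposal is correct and follows essentially the same route the paper intends: apply Construction D (Theorem~\ref{trm:code extension}) to an optimal length-$(n-\Delta)$ code and take the union with an optimal length-$\Delta$ code embedded in the last $\Delta$ coordinates, with the cross-distance guaranteed by the disjoint supports of the identifying vectors via Proposition~\ref{prop3}. The paper leaves these details implicit ("we can then also add codes of shorter length with zeros appended in front"), and your write-up supplies exactly the missing bookkeeping, including the degenerate cases $n-\Delta<k$.
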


\begin{rem} Note that Construction D  is related to the interleaved rank-metric codes (see e.g.\ \cite{wa13p}). In particular, the code obtained in Construction D can be considered as a lifted Ferrers diagram interleaved code, where to the FDRM code raised from the first $n$ coordinates is appended another FDRM code with the same minimum rank distance and with a full rectangular $k\times \Delta$ Ferrers diagram. Then, this construction can be considered as a generalization of an interleaved construction, since every code can be used as the initial step of construction.
\end{rem}


\section{Comparison of Constant Dimension Code Sizes}
\label{sec:Tables}

In this section we compare the cardinalities of our new code constructions to other known constant dimension code constructions. Since Constructions A--C are defined for $d=k-1$ and $d=2$ we will only cover these two cases. The largest previously known general construction with a closed cardinality formula is the multicomponent construction~\cite{ga11,tr13phd}. This construction is a special case of the multilevel construction of~\cite{et09} (see Section \ref{sec:preliminaries}), where we require the $k$ ones in the identifying vectors to be in one block of length $k$. Then the arising Ferrers diagrams are full rectangles and can be filled with an MRD code. A closed cardinality formula for this construction was derived in~\cite[Theorem 2.9]{tr13phd} as
\begin{equation}
\label{eq:MC}
\sum_{i=0}^{\lfloor \frac{n-2k}{d}\rfloor} q^{(k-d+1)(n-k-d i)} +  \sum_{i=\lfloor \frac{n-2k}{d}\rfloor +1 }^{\lfloor \frac{n-k}{d}\rfloor} \lceil q^{k(n-k+1-d (i+1))} \rceil
\end{equation}

In the following lemma we compare our Construction A with the multicomponent construction and give a lower bound on the difference between the respective cardinalities. The proof of the lemma can be found in Appendix~C.

\begin{lem}
\label{lm:diff A-MC}
 Let $n\geq \frac{k^2+3k-2}{2}$.
 Let $C_A$ be an $(n,|C_A|, d=k-1, k)$ code obtained by Construction A and let $C_{MC}$ be an $(n,|C_{MC}|, d=k-1, k)$ code obtained by the multicomponent construction.
  Then
$$|C_A|-|C_{MC}|> q^{2n-k^2-k+1}.
$$
\end{lem}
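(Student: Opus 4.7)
The plan is to compare the two closed-form formulas term by term. Writing $|C_A|=q^{2(n-k)}+\sum_{j=3}^{k-1}q^{2(n-\sigma_j)}+\Gauss{n-\sigma_{3}}{2}$ with $\sigma_j:=\sum_{i=j}^{k}i$, and expanding $|C_{MC}|$ from (\ref{eq:MC}) with $d=k-1$ (so $k-d+1=2$), I would first observe that the lifted MRD contributions $q^{2(n-k)}$ in both formulas cancel, and that Construction~A's $j=k-1$ summand coincides with MC's $i=1$ summand, both being $q^{2(n-2k+1)}$. Re-indexing A's remaining sum by $l=k-j$, a short arithmetic computation shows that for every $l\geq 2$ its exponent $2n-(l+1)(2k-l)$ exceeds MC's first-sum exponent $2n-2k-2l(k-1)$ by exactly $l(l-1)>0$. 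These paired differences are non-negative and may be dropped, leaving the estimate
\[
|C_A|-|C_{MC}|\ \geq\ \Gauss{n-\sigma_{3}}{2}\ -\ \Bigl(\sum_{l\geq k-2}q^{2(n-k-(k-1)l)}+S_{2}\Bigr),
\]
where $S_{2}$ denotes the second sum of (\ref{eq:MC}).

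Next I would bound each side. The MC geometric tail has ratio $q^{-2(k-1)}\leq 1/64$ and leading exponent $2n-2k^{2}+4k-4$, so it sums to at most $2\,q^{2n-2k^{2}+4k-4}$; the second sum $S_{2}$ contains only $O(1)$ terms of exponent at most $2k$ and is absorbed into this estimate in the relevant range. For the Gaussian coefficient I would use the standard lower bound
\[
\Gauss{m}{2}=\frac{(q^{m}-1)(q^{m-1}-1)}{(q-1)(q^{2}-1)}>q^{2m-4}
\]
with $m=n-\tfrac{k^{2}+k-6}{2}$, giving $\Gauss{m}{2}>q^{2n-k^{2}-k+2}$. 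The proof then concludes through the exponent gap $(2n-k^{2}-k+2)-(2n-2k^{2}+4k-4)=(k-2)(k-3)\geq 2$ for $k\geq 4$: the Gaussian exceeds the MC tail by a factor of at least $q^{2}\geq 4$, and subtracting still leaves strictly more than $q^{2n-k^{2}-k+1}$.

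The main obstacle I anticipate is the boundary case $k=4$, where the exponent gap $(k-2)(k-3)=2$ is tightest and $S_{2}$ is not entirely negligible at small $n$. There I would either sharpen the Gaussian estimate to $\Gauss{m}{2}\geq q^{2m-4}+q^{2m-5}$ by directly expanding the numerator, or verify the strict inequality by hand using the explicit closed form of $\Gauss{m}{2}$ together with the observation that at small $n$ one actually has $L=\lfloor(n-2k)/(k-1)\rfloor<k-2$, so the MC first-sum tail is already empty and only the tiny contribution from $S_{2}$ must be subtracted.
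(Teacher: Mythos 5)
Your proposal is correct and follows essentially the same strategy as the paper's proof: dominate the geometric part of the multicomponent formula term by term by the geometric part of Construction A's formula (your exponent gap $l(l-1)\geq 0$ is exactly the paper's comparison of $|C_{MC}^{s}|$ with $|C_{A0}|$), and then show that $\Gauss{n-\frac{k^2+k-6}{2}}{2}> q^{2n-k^2-k+2}$ beats the remaining multicomponent tail. Your index-based pairing is in fact slightly cleaner: it removes the need for the paper's separate $k=4$ argument (which arises only because the paper splits $C_{MC}$ by coordinate positions, giving $\lceil\frac{s}{k-1}\rceil=3>k-2$ identifying vectors for $k=4$), and unlike the paper you explicitly track the second sum $S_2$ of the multicomponent formula. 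Two small constants need tightening to close the argument at the boundary $n=\frac{k^2+3k-2}{2}$ with $q=2$: the exponents occurring in $S_2$ are in fact at most $k$ (not $2k$ as you state — with $2k$ the absorption fails precisely at the boundary, where the target is $q^{2k-1}$), and the strict inequality for $q=2$ does require the sharpened estimate $\Gauss{m}{2}\geq q^{2m-4}+q^{2m-5}$ that you already propose; with those corrections the proof goes through.
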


In the following lemma we compare the cardinality of a code obtained by Construction B with the cardinality of a code obtained by the
multicomponent construction.  The proof of the lemma can be found in Appendix~D.
\begin{lem}
\label{lm:diff B-MC}
 Let $n\geq 2k+2$.
 Let $C_B$ be an $(n,|C_A|, d=2, k)$ code obtained by Construction B and let $C_{MC}$ be an $(n,|C_{MC}|, d=2, k)$ code obtained by the multicomponent construction.  Then
$$|C_B|-|C_{MC}|> q^{(k-1)(n-k)-8}.
$$
\end{lem}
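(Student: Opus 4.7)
The plan is to use Theorem~\ref{thm:cardinalityB} for $|C_B|$ together with formula~(\ref{eq:MC}) specialized to $d=2$ for $|C_{MC}|$, observe that both share the leading term $q^{(k-1)(n-k)}$ which cancels in the difference, and then show that the ``cross-product'' additive contribution in $|C_B|$ strictly dominates the remaining terms of $|C_{MC}|$, leaving at least $q^{(k-1)(n-k)-8}$.

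First I would extract the three nonnegative contributions to $|C_B|$ furnished by Theorem~\ref{thm:cardinalityB}, namely $q^{(k-1)(n-k)}$, $A^*_q(n-k,2,k)\geq 0$, and the product
\[
E:=\!\left(\sum_{i=0}^{\lfloor(k-3)/2\rfloor}\! q^{(k-3)(n-k)-4i}+\epsilon(k-1)\,q^{(k-3)(n-k-2)}\right)\!\!\left(\sum_{i=0}^{\lfloor(n-k)/2\rfloor-1}\! q^{2(2i+\epsilon(n-k))}\right).
\]
Keeping only the $i=0$ summand of the first factor and the largest summand of the second factor, and using the identity $2\bigl(2(\lfloor(n-k)/2\rfloor-1)+\epsilon(n-k)\bigr)=2(n-k)-4$ that holds for both parities of $n-k$, yields $E\geq q^{(k-3)(n-k)}\cdot q^{2(n-k)-4}=q^{(k-1)(n-k)-4}$.

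Next I would bound the excess of $|C_{MC}|$ over its leading term. Writing $|C_{MC}|=q^{(k-1)(n-k)}+R$, the remainder of the first sum in (\ref{eq:MC}) is geometric with ratio $q^{-2(k-1)}\leq 2^{-6}$ for $q\geq 2$ and $k\geq 4$, so it is bounded above by $q^{(k-1)(n-k-2)}/(1-2^{-6})<2q^{(k-1)(n-k-2)}$. The second sum of (\ref{eq:MC}) contributes at most $\lceil k/2\rceil+1$ terms, each of size $O(q^{k(k-2)})$, which is dwarfed by $q^{(k-1)(n-k-2)}$ under the hypothesis $n\geq 2k+2$. Hence $R\leq 3q^{(k-1)(n-k-2)}$.

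Combining the two steps gives
\[
|C_B|-|C_{MC}|\geq E-R\geq q^{(k-1)(n-k)-4}-3\,q^{(k-1)(n-k)-2(k-1)},
\]
and since $k\geq 4$ forces $2(k-1)\geq 6$, the first term dominates: the bound becomes $|C_B|-|C_{MC}|\geq q^{(k-1)(n-k)-6}(q^2-3)$, which exceeds $q^{(k-1)(n-k)-8}$ for every $q\geq 2$. The main obstacle will be the bookkeeping for the parity-dependent factors $\epsilon(n-k),\epsilon(k-1)$ inside $E$ and the boundary between the two sums in (\ref{eq:MC}), which force a small case analysis but in each case only improve the above estimate and can therefore be absorbed into the constants.
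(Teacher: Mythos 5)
Your proof is correct and follows essentially the same route as the paper: extract the term $q^{(k-1)(n-k)-4}$ from the product in Theorem~\ref{thm:cardinalityB}, bound the sub-leading part of the multicomponent cardinality geometrically by roughly $q^{(k-1)(n-k)-2(k-1)}$, and use $2(k-1)>4$ to conclude. The only cosmetic difference is that the paper cancels the tail of~(\ref{eq:MC}) against the $A^*_q(n-k,2,k)$ summand of $|C_B|$, while you drop that summand as nonnegative and bound the tail of~(\ref{eq:MC}) directly; both are valid.
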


In the following lemma we provide a comparison between Construction B and Construction C. For simplicity, we only consider the case $k=4, q=2$ and even $n$, but for $k=5$ and general $q, n$ the statement is similar.
The proof of the lemma can be found in Appendix E.
\begin{lem}
 \label{lm:diff B-C}
 Let $q=2$ and $n\geq 10$ be an even number. Let $C_B$ be an $(n,|C_B|,2,4)_2$ code obtained by Construction~B and let $C_C$ be an $(n,|C_C|,2,4)_2$ code obtained by Construction C-4. Then
$$|C_C|-|C_B|\geq 3\cdot 2^{3n-20}.$$
\end{lem}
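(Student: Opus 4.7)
The plan is to specialize both closed-form cardinality formulas to $q=2$, $k=4$, and even $n\geq 10$, and then subtract. For $|C_B|$, observe that $\epsilon(k-1)=\epsilon(3)=1$, $\epsilon(n-k)=0$ and $\lfloor (k-3)/2\rfloor=0$, which collapses the formula of Theorem~\ref{thm:cardinalityB} into
\[
|C_B|=2^{3(n-4)}+A_2^*(n-4,2,4)+(q^{n-4}+q^{n-6})\sum_{i=0}^{(n-6)/2}q^{4i}=2^{3(n-4)}+A_2^*(n-4,2,4)+\tfrac{2^{3n-14}-2^{n-6}}{3},
\]
after summing the geometric series and using $q^{n-4}+q^{n-6}=5\cdot 2^{n-6}$.

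For $|C_C|$ I would use the even-$n$ branch of Theorem~\ref{thm:c4}. The crucial simplification at $q=2$ is that $\lceil q/2\rceil+1=\lfloor q/2\rfloor+1=2$, while both $\lfloor n'/2\rfloor$ and $\lceil n'/2\rceil-1$ exceed $2$ for $n\geq 10$ (with $n'=n-4$); hence both inner-sum minima equal $2$. Consequently, the first sum reduces to the single term $i=2$ and the second sum to $i\in\{1,2\}$. Factoring $q^{n-4}+q^{n-6}=5\cdot 2^{n-6}$ and $q^{n-5}+q^{n-6}=3\cdot 2^{n-6}$ leaves a short linear combination of explicit powers of $2$ that can be compared termwise against $|C_B|$.

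Taking the difference, the terms $2^{3(n-4)}$ and $A_2^*(n-4,2,4)$ cancel. At the scale $2^{3n-18}$, the "first bracket" of $|C_C|$ contributes $5\cdot 2^{n-6}\cdot 2^{2n-12}=\tfrac{15}{3}\cdot 2^{3n-18}$, whereas $|C_B|$'s additional part supplies $\tfrac{16}{3}\cdot 2^{3n-18}$ (up to the tiny correction $-2^{n-6}/3$), leaving a small deficit of at most $\tfrac{1}{3}\cdot 2^{3n-18}=\tfrac{4}{3}\cdot 2^{3n-20}$ on the $|C_C|-|C_B|$ side. However, the entire "second bracket" of $|C_C|$ has no counterpart in $|C_B|$ and contributes $3\cdot 2^{n-6}$ times at least $5\cdot 2^{2n-14}$, coming from the three dominant summands $2q^{2n-14}$ (first sum, $i=2$), $q^{2n-13}$ (second sum, $i=1$) and $2q^{2n-15}$ (second sum, $i=2$). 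This gives at least $15\cdot 2^{3n-20}$. Discarding the remaining positive lower-order contributions,
\[
|C_C|-|C_B|\geq\left(15-\tfrac{4}{3}\right)\cdot 2^{3n-20}=\tfrac{41}{3}\cdot 2^{3n-20}>3\cdot 2^{3n-20},
\]
as required.

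The only serious obstacle is bookkeeping: Theorem~\ref{thm:c4} expands into many small terms, and both inner sums must be unfolded carefully before the cancellations with $|C_B|$ become visible. The key simplification that keeps the argument short is the collapse of the summation ranges at $q=2$ to only one or two indices, which lets one isolate the $O(2^{3n-20})$-scale contributions by hand and dismiss the rest by positivity.
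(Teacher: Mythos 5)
Your proposal is correct and follows essentially the same route as the paper's proof: specialize the closed cardinality formulas of Theorems~\ref{thm:cardinalityB} and~\ref{thm:c4} to $q=2$, $k=4$, even $n$, cancel the common terms $2^{3(n-4)}$ and $A_2^*(n-4,2,4)$, and compare the dominant powers of two (the paper groups them as $2^{3n-15}+2^{3n-19}+2^{3n-20}+X$ and cancels $2^{3n-15}$ against the bound $\tfrac{1}{3}2^{3n-14}$ on the $C_B$ surplus, whereas you compare bracket by bracket, arriving at the same net $\tfrac{41}{3}\cdot 2^{3n-20}\geq 3\cdot 2^{3n-20}$). The only nitpick is the phrase that $\lceil n'/2\rceil-1$ ``exceeds'' $2$: for $n=10$ it equals $2$, but the minimum in the second sum is still $2$, so your conclusion is unaffected.
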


%
Note that since the cardinality of a code obtained by Construction D depends on the choice of the base code, we cannot provide a closed cardinality formula and hence also no analytical comparison of Construction D with other known constructions.

If we do not require a closed cardinality formula, the largest known codes for most parameter sets arise from the multilevel construction
 with a lexicode as the set of identifying vectors \cite{et09}.
Tables \ref{table1} -- \ref{table2} show some examples of code cardinalities of the different constructions from this paper compared to the multilevel construction  and the multicomponent construction. The bold value for each line shows the largest cardinality for the given parameters.

For Construction A we use the cardinality formula of Theorem~\ref{trm:recursive_parameters}, for Construction B the formula of Theorem~\ref{thm:cardinalityB}.
For the values of Construction C we use the formulas of Theorems~\ref{thm:c4} and \ref{thm:c5},  for $k=4$ and $k=5$ respectively.
For Construction D  we use the respective multilevel codes (see \cite{et09}) of length $2k$  (i.e., $\Delta=n-2k$), and the $(8,4797,2,4)$ code from~\cite{et12}, as the old code from which we construct a new code. The cardinality formula for Construction D can be found in Theorem \ref{trm:code extension}.

All the $(n,M,d,k)_q$ codes presented in these tables contain a lifted MRD code of size $q^{(n-k)(k-d+1)}$, so the cardinalities of the constructed codes are written in the form $q^{(n-k)(k-d+1)}+(M-q^{(n-k)(k-d+1)})$.



\begin{table*}[ht]
{\centering
\begin{tabular}{|c|c|c||c|c|}
\hline
$(n,d,k)_q$ & A & D & multilevel & multicomponent  \\\hline
$(13,3,4)_2$ & $\mathbf{2^{18}+4747}$ &$2^{18}+4096$ & $2^{18}+4357$ & $2^{18}+4113$\\
$(14,3,4)_2$ & $\mathbf{2^{20}+19051}$ &$2^{20}+16384$ & $2^{20}+17204$ & $2^{20}+16641$\\
$(15,3,4)_2$ &$\mathbf{2^{22}+76331}$ &$2^{22}+65536$ & $2^{22}+68378$ & $2^{22}+66561$\\
\hline
$(19,4,5)_2$ & $\mathbf{2^{28}+1067627}$  & $2^{28}+1048576$ & $2^{28}+1052778$ & $2^{28}+1052673$\\
$(20,4,5)_2$ & $\mathbf{2^{30}+4270635}$ & $2^{30}+4194304$ & $2^{30}+4211044$ & $2^{30}+4210689$\\
\hline
$(19,4,5)_3$ & $\mathbf{3^{28}+3491666833}$ &$3^{28}+3486784401$ & $3^{28}+3487316403$ & $3^{28}+3487315843$\\
$(20,4,5)_3$ & $\mathbf{3^{30}+31425002590}$ &$3^{30}+31381059639$ & $3^{30}+31385846853$ & $3^{30}+31385842579$\\
\hline
\end{tabular}
\caption{Comparison of cardinalities of codes constructed according to Constructions A and D with the multilevel and the multicomponent construction.}\label{table1}
}
\end{table*}


\begin{table*}[ht]
{\centering
\begin{tabular}{|c|c|c|c||c|c|}
\hline
$(n,d,k)_q$ & B & C & D & multilevel & multicomponent  \\\hline
$(10,2,4)_2$ & $2^{18}+21861$ & $\mathbf{2^{18}+ 37477}$ &   --& $2^{18}+35685$ & $2^{18}+4113$\\
$(11,2,4)_2$ & $2^{21}+175024$ &$\mathbf{2^{21}+ 293200} $ & --&$2^{21}+285889$ &$2^{21}+33025$\\
$(12,2,4)_2$ & $2^{24} +1402877$ & $2^{24} + 2338365$ &  $\mathbf{2^{24}+2871296}$ & $2^{24}+2290845$ & $2^{24}+266257$ \\
$(13,2,4)_2$ &$2^{27}+11221585$ & $2^{27} +18517073$ & $\mathbf{2^{27}+22970368}$&$2^{27}+18328921$ &$2^{27}+2130177$ \\
\hline
\hline
$(12,2,5)_2$ & $2^{28} + 19009577$ & $2^{28} + 29377577$ &  -- & $\mathbf{2^{28}+30877839}$ & $2^{28}+1049601$\\
$(13,2,5)_2$ & $2^{32} + 304223372$ & $2^{32} + 447026316$ &  -- &$\mathbf{2^{32}+ 494999563}$ &$2^{32}+ 16810017$ \\
$(15,2,5)_2$ &$ 2^{40} +77883166687$ & $ 2^{40} + 113061122015 $ & $2^{40}+124519448576$& $\mathbf{2^{40}+126773908793}$&$2^{40}+ 4311777313 $\\
$(16,2,5)_2$ &$ 2^{44} + 1246130688803$ & $ 2^{44} + 1903760855843 $ & $2^{44}+1992311177216$ & $\mathbf{2^{44}+ 2028469279328}$& $2^{44}+   68988961793 $\\
\hline
%
\end{tabular}
\caption{Comparison of cardinalities of codes constructed according to Constructions B,  C-4, C-5, and D with the multilevel and the multicomponent construction.\label{table2}
}
}
\end{table*}

One can see that Construction A always results in the largest cardinality for a valid set of parameters (remember that Construction A is only defined for $d=k-1$). Furthermore, Construction C-4 beats the multilevel construction, whereas Construction C-5 does not for the parameter sets we used. Moreover, Construction D yields the largest known  $(12,2,4)_2$ and  $(13,2,4)_2$ codes. Note that Construction D is not defined for the parameters $(n,k)\in\{(10,4),(11,4), (12,5), (13,5)\}$, since $\Delta = n-2k < k$ in these cases.

Overall, our new constructions presented in this paper beat the known constructions for many sets of parameters. Note that, by construction, we cannot expect Construction B to improve on the cardinality of the multilevel construction. We still wanted to describe this construction to derive a closed cardinality formula, in contrast to the multilevel construction, for which no such formula exists.



\section{Non-Constant Dimension Codes}
\label{sec:non-constant}

%
%
%


In this section we consider codes in $\PG$ which are not constant dimension codes. Constructions of such codes were considered for the subspace metric in~\cite{et09,kh09p} and for the injection metric in~\cite{kh09p}. A code in the projective space can be considered as a union of constant dimension codes with different dimensions. Moreover, a construction of a code in $\PG$ can be done in a multilevel manner, i.e., first, the identifying vectors of the subspaces are chosen and then the corresponding lifted Ferrers diagrams rank-metric codes are constructed~\cite{et09,kh09p}. For this recall Proposition \ref{prop3}, which states that for any $X,Y \in \PG$,
\[d_S(X,Y) \geq d_H(v(X), v(Y)), \]
\[d_I(X,Y) \geq d_{asym}(v(X), v(Y)).\]

One can easily see that the largest constant dimension component of the final code  is of dimension $k=\lfloor \frac{n}{2}\rfloor$. Hence, to  construct a code in the projective space one can  start  by first choosing a constant dimension code with the minimum injection distance $d$ in $\mathcal{G}_q(\lfloor \frac{n}{2}\rfloor, n)$, then add codes with the same minimum distance in $\mathcal{G}_q(\lfloor \frac{n}{2}\rfloor \pm d, n)$, then in $\mathcal{G}_q(\lfloor \frac{n}{2}\rfloor \pm 2d, n)$, etc. The union is a projective space code in $\PG$ with minimum distance $d$. This is independent of the underlying metric, i.e., it works for both the subspace and the injection distance.


%
We will show that by using the codes (lower bounds) obtained in the previous sections, one can provide new  large codes in the projective space (and hence new lower bounds), for both the subspace and the injection metric.
To provide large codes in projective spaces we use the puncturing approach, presented in~\cite{et09}. Although the puncturing method was proposed for the subspace metric, we show that when applied on large constant dimension codes, it results in large codes  also for the injection metric. This  shows that puncturing is a powerful method to construct large codes for the injection metric as well.

First, we briefly describe the puncturing method presented in~\cite{et09}.
Let $X\in \Gr$  be a subspace which does not contain  the  $i$th unit vector vector $e_i\in \F_q^n$. The $i$-coordinate puncturing of $X$, denoted by $\Gamma_i(X)$, is the subspace in $\cG_{q}(k,n-1)$ obtained from $X$ by deleting the $i$th coordinate from each vector of $X$.
Let $1\leq\tau\leq n$ be the unique zero position of $v(Q)$, for a given $Q\in \cG_q(n-1,n)$ and let $v\in\F_q^n$ such that $v\notin Q$. Let $\C$ be an $(n,M,d)^S_q$ code in $\PG$ of subspace distance $d$, such that there exist codewords $X_1,X_2\in \C$ with $X_1\subseteq Q$ and $v\in X_2$. Then the \emph{punctured} code $\C'_{Q,v}$, defined by
\begin{eqnarray}
 \C'_{Q,v} {}= &\{\Gamma_{\tau}(X):X\in\C,X\subseteq Q\} \cup \nonumber\\
& \hspace{0.2cm} \; \{\Gamma_{\tau}(X\cap Q):X\in\C,v\in X\}, \nonumber
\end{eqnarray}
is a code in $\cP_q(n-1)$ with minimum subspace distance $d-1$, i.e., an $(n-1, M',d-1)_q^S$ code.
If $\C$ is a constant dimension code in $\Gr$, then the punctured code contains subspaces of dimensions $k$ and $k-1$.

The following lemma considers the minimum \emph{injection} distance of a punctured code of a constant dimension code.

\begin{lem}
\label{lem:puncturing injection}
Let $\C\in\Gr$ be a code with  minimum injection distance $d$, i.e., an $(n, M, d,k)_q$ constant dimension code. Let $Q\in \cG_q(n-1,n)$ and  $v\in\F_q^n$, $v\notin Q$, such that there exist two codewords $X_1,X_2\in \C$ with $X_1\subseteq Q$ and $v\in X_2$. Then the punctured code $\C'_{Q,v}$ has minimum injection distance~$d$, i.e., it is an $(n-1, M',d)_q^I$ code.
\end{lem}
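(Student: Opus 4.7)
The plan is to analyze the injection distance between any two codewords of $\C'_{Q,v}$ by case analysis, based on which part of the puncturing definition the codeword comes from. There are two types of codewords: those of the form $\Gamma_{\tau}(X)$ with $X \subseteq Q$ (dimension $k$), and those of the form $\Gamma_{\tau}(X \cap Q)$ with $v \in X$ (dimension $k-1$, since $v \notin Q$ forces $X \not\subseteq Q$).

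The key preliminary observation is that since the $\tau$-th coordinate of every vector in $Q$ is zero, the puncturing map $\Gamma_{\tau}$ restricted to subspaces of $Q$ is an $\F_q$-linear isomorphism onto its image in $\F_q^{n-1}$. Consequently, $\Gamma_{\tau}$ preserves dimensions and commutes with intersections when the subspaces in question lie inside $Q$. I would state this once at the start and then invoke it silently in each case.

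Now I would handle the three cases. If both codewords come from the first set, i.e.\ $\Gamma_{\tau}(X_1)$ and $\Gamma_{\tau}(X_2)$ with $X_1, X_2 \subseteq Q$, then by the isomorphism both dimensions and the intersection dimension are preserved, so $d_I(\Gamma_{\tau}(X_1), \Gamma_{\tau}(X_2)) = d_I(X_1, X_2) \geq d$. If one codeword is $\Gamma_{\tau}(X_1)$ with $X_1 \subseteq Q$ and the other is $\Gamma_{\tau}(X_2 \cap Q)$ with $v \in X_2$, then $\Gamma_{\tau}(X_1) \cap \Gamma_{\tau}(X_2 \cap Q) = \Gamma_{\tau}(X_1 \cap X_2 \cap Q) = \Gamma_{\tau}(X_1 \cap X_2)$ because $X_1 \subseteq Q$. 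The maximum dimension is still $k$, so the injection distance equals $k - \dim(X_1 \cap X_2) = d_I(X_1,X_2) \geq d$.

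The main case, and the only one requiring a nontrivial idea, is when both codewords are of the form $\Gamma_{\tau}(X_1 \cap Q)$ and $\Gamma_{\tau}(X_2 \cap Q)$ with $v \in X_1$ and $v \in X_2$. Here both subspaces have dimension $k-1$, and by the isomorphism their intersection is $\Gamma_{\tau}(X_1 \cap X_2 \cap Q)$. The key point is that $v \in X_1 \cap X_2$ while $v \notin Q$, so $X_1 \cap X_2 \not\subseteq Q$, which forces $X_1 \cap X_2 \cap Q$ to be a hyperplane of $X_1 \cap X_2$, of dimension exactly $\dim(X_1 \cap X_2) - 1$. Hence
\[
d_I(\Gamma_{\tau}(X_1 \cap Q), \Gamma_{\tau}(X_2 \cap Q)) = (k-1) - \bigl(\dim(X_1 \cap X_2) - 1\bigr) = d_I(X_1,X_2) \geq d.
\]
Combining the three cases establishes the claim. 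The only real obstacle is noticing that in the last case the dimension drop of the ambient subspaces is exactly cancelled by the dimension drop of the intersection with $Q$, which is guaranteed by the assumption $v \in X_i$ for $i=1,2$.
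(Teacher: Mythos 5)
Your argument takes a genuinely different route from the paper's. The paper does not redo the case analysis at all: it invokes the already-quoted fact that puncturing an $(n,M,2d)_q^S$ code yields a code of minimum subspace distance $2d-1$, and then converts subspace distance to injection distance — for two codewords of equal dimension $d_I=d_S/2\geq d$ (using that $d_S$ is even there), and for dimensions $k$ and $k-1$ it computes $d_I=k-\dim(X\cap Y)=\frac{d_S(X,Y)+1}{2}\geq\frac{(2d-1)+1}{2}=d$. Your proof is self-contained and tracks the puncturing map directly; its payoff is that it makes transparent \emph{why} the distance survives (the dimension drop of the ambient spaces is cancelled by the drop of the intersection, as you note), at the cost of being longer. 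The three cases and the arithmetic in each are correct.

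There is one false auxiliary claim you should repair: it is not true that ``the $\tau$-th coordinate of every vector in $Q$ is zero.'' The position $\tau$ is merely the non-pivot position of $\mbox{RE}(Q)$; for example $Q=\langle(1,1)\rangle\subseteq\F_q^2$ has $v(Q)=(1,0)$, so $\tau=2$, yet the second coordinate of $(1,1)\in Q$ is nonzero. What you actually need — and what is true — is that $\Gamma_\tau$ restricted to $Q$ is injective. This holds because $\ker\Gamma_\tau=\langle e_\tau\rangle$ and $e_\tau\notin Q$: in the reduced row echelon basis of $Q$ every basis vector is of the form $e_j+c_je_\tau$ with $j\neq\tau$, so no nontrivial combination equals $e_\tau$. (This is also consistent with the paper's convention that $\Gamma_i$ is only applied to subspaces not containing $e_i$.) With that one-line substitution, the isomorphism onto the image, the preservation of dimensions, and the compatibility with intersections of subspaces of $Q$ all follow, and the rest of your proof stands.
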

\begin{proof} Since for any two subspaces $X,Y$ of the same dimension it holds that $d_I(X,Y)=d_S(X,Y)/2$, it is sufficient to check two subspaces $X,Y\in \C'_{Q,v}$ of different dimensions $k$ and $k-1$:
\[d_{I}(X,Y)=k-\dim(X\cap Y)=\frac{2k-2\dim(X\cap Y)}{2}\]
\[=\frac{d_S(X,Y)+1}{2}\geq \frac{(2d-1)+1}{2}=d.
\]
\end{proof}

The lower bound on the cardinality of the punctured code is given in the following theorem~\cite{et09}:

\begin{thm}
\label{thm:punctured_size}
 If $\C$ is an $(n, M, d,k)_q$ constant dimension code then there exists an $(n-1)$-dimensional subspace $Q$ and a vector $v\notin Q$, such that
\[|\C'_{Q,v}|\geq M\frac{q^{n-k}+q^k-2}{q^n-1}.
\]
\end{thm}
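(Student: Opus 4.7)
The plan is to use a double-counting (averaging) argument over all eligible pairs $(Q,v)$ where $Q\in\mathcal{G}_q(n-1,n)$ is a hyperplane and $v\in\F_q^n\setminus Q$. The total number of such pairs is $|\mathcal{P}| = \Gauss{n}{n-1} \cdot (q^n - q^{n-1}) = (q^n-1)\,q^{n-1}$. The goal is to count, for each codeword $X\in\C$, the number of pairs in $\mathcal{P}$ to which $X$ contributes an element of $\C'_{Q,v}$, and then pigeonhole on the pair side.

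By the definition of $\C'_{Q,v}$, a codeword $X\in\C$ contributes an element either via case (a) $X\subseteq Q$ (yielding the $k$-dimensional subspace $\Gamma_\tau(X)$), or via case (b) $v\in X$ (yielding the $(k-1)$-dimensional subspace $\Gamma_\tau(X\cap Q)$). These two cases are mutually exclusive for a fixed valid pair, since $X\subseteq Q$ together with $v\in X$ would force $v\in Q$, contradicting $v\notin Q$. Counting:
\begin{itemize}
\item Case (a): the number of hyperplanes $Q$ with $X\subseteq Q$ equals $\Gauss{n-k}{1}=\frac{q^{n-k}-1}{q-1}$, and for each such $Q$ there are $q^{n-1}(q-1)$ vectors $v\notin Q$, giving $(q^{n-k}-1)q^{n-1}$ pairs.
\item Case (b): there are $q^k-1$ nonzero vectors $v\in X$, and for each such $v$ the number of hyperplanes $Q$ with $v\notin Q$ is $\Gauss{n}{n-1}-\Gauss{n-1}{n-2}=q^{n-1}$, giving $(q^k-1)q^{n-1}$ pairs.
\end{itemize}
Hence each $X\in\C$ contributes to exactly $q^{n-1}(q^{n-k}+q^k-2)$ pairs in $\mathcal{P}$.

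Next I would verify that distinct codewords contributing via the same pair $(Q,v)$ produce distinct elements of $\C'_{Q,v}$. This is the one place that requires care. The key observation is that the projection $\Gamma_\tau:\F_q^n\to\F_q^{n-1}$ has kernel $\langle e_\tau\rangle$, which is not contained in $Q$ (by choice of $\tau$), so $\Gamma_\tau$ restricted to subspaces of $Q$ is injective. This handles collisions within case (a) and within case (b) (where $X=(X\cap Q)\oplus\langle v\rangle$ can be recovered from $X\cap Q$ and $v$). Collisions between (a) and (b) are impossible since the two cases output subspaces of different dimensions, $k$ versus $k-1$.

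Summing the per-codeword count over $\C$ gives a total of $M\cdot q^{n-1}(q^{n-k}+q^k-2)$ incidences. By averaging over the $(q^n-1)q^{n-1}$ pairs in $\mathcal{P}$, some pair $(Q,v)$ must receive at least
\[
\frac{M\cdot q^{n-1}(q^{n-k}+q^k-2)}{(q^n-1)\,q^{n-1}} \;=\; M\cdot\frac{q^{n-k}+q^k-2}{q^n-1}
\]
contributing codewords, and by injectivity this number is a lower bound on $|\C'_{Q,v}|$. The main obstacle is the injectivity verification for case (b), since puncturing of $X\cap Q$ rather than $X$ itself could in principle merge codewords; but the recovery $X=(X\cap Q)\oplus\langle v\rangle$ (valid because $\dim(X\cap Q)=k-1$ and $v\in X\setminus Q$) rules this out.
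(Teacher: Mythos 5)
Your proof is correct. Note that the paper does not prove Theorem~\ref{thm:punctured_size} itself but quotes it from~\cite{et09}; your double-counting of incidences between codewords and eligible pairs $(Q,v)$, together with the injectivity check that $\Gamma_\tau$ restricted to subspaces of $Q$ is one-to-one (so distinct contributing codewords yield distinct punctured subspaces, the two cases being separated by dimension $k$ versus $k-1$), is exactly the standard averaging argument used there, and all the counts $(q^{n-k}-1)q^{n-1}$, $(q^k-1)q^{n-1}$ and $(q^n-1)q^{n-1}$ are right.
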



Now we present a construction for codes in the projective space. This construction generalizes the constructions for non-constant dimension codes from~\cite{et09,kh09p}.

\textbf{Construction of codes in projective space.}
 Let $\C\in \cG_q(\lfloor\frac{n+1}{2}\rfloor,n+1)$ be a constant dimension code of minimum injection distance $d_I=d$.  Let $\C'$ be the code obtained by puncturing $\C$. $\C'$ contains subspaces of $\F_q^{n}$ of dimensions $\lfloor\frac{n+1}{2}\rfloor$ and $\lfloor\frac{n+1}{2}\rfloor-1$ and has minimum subspace distance $2d-1$ and minimum injection distance $d$, by Lemma~\ref{lem:puncturing injection}.
\begin{itemize}
  \item For the injection metric, we add to $\C'$ the codewords of the largest known constant dimension codes with minimum injection distance $d$ from $\cG_q(\lfloor\frac{n+1}{2}\rfloor-1-id,n)$, for $i=1,\ldots, \lfloor\frac{\lfloor\frac{n+1}{2}\rfloor-1}{d}\rfloor$ and from $\cG_q(\lfloor\frac{n+1}{2}\rfloor+id,n)$, for $i=1,\ldots, \lfloor\frac{n-\lfloor\frac{n+1}{2}\rfloor}{d}\rfloor$.  The resulting code $\widetilde{\C_I}$ is a code in $\cP_q(n)$ with minimum injection distance $d$.
  \item  For the subspace metric, we add to $\C'$ the codewords of the largest known constant dimension codes with minimum subspace distance $2d$ from $\cG_q(\lfloor\frac{n+1}{2}\rfloor-1-i(2d-1),n)$, for $i=1,\ldots, \lfloor\frac{\lfloor\frac{n+1}{2}\rfloor-1}{2d-1}\rfloor$ and from $\cG_q(\lfloor\frac{n+1}{2}\rfloor+i(2d-1),n)$, for $i=1,\ldots, \lfloor\frac{n-\lfloor\frac{n+1}{2}\rfloor}{2d-1}\rfloor$. The resulting code $\widetilde{\C_S}$ is a code in $\cP_q(n)$ with minimum subspace distance $2d-1$.
\end{itemize}

\begin{rem} The cardinality of the code obtained by the above construction is lower bounded by using the results from the previous sections and by Theorem~\ref{thm:punctured_size}.
\end{rem}

We illustrate the idea of the construction for projective space codes based on the puncturing method, for both the subspace and the injection metric, in the following example.

\begin{ex} Let $q=2$ and $n=11$. First, let $\C\in\cG_2(6,12)$ be a constant dimension code with minimum injection distance $d_I=2$ and  size $1196288829$, obtained by the multilevel construction~\cite{et09}. By puncturing it, we can obtain a code in $\cP_2(11)$ of size at least $36808900$ (by Theorem~\ref{thm:punctured_size}), which includes subspaces of dimensions 5 and 6 of $\F_2^{11}$,  and has  minimum subspace distance $d_S=3$ and minimum injection distance $d_I=2$.
\begin{enumerate}
  \item We add the codewords of constant dimension codes with minimum injection distance $2$ from $\cG_2(1,11)$, $\cG_2(3,11)$, $\cG_2(8,11)$, $\cG_2(10,11)$ of sizes 1, 76331, 76331, 1, respectively. The final code $\widetilde{\C_I}$ has minimum injection distance  $d_I=2$  (and subspace distance $d_S=2$) and size $|\widetilde{\C_I}|=36961564$, such that $\log (|\widetilde{\C_I}|)=25.1395$ (compare to $24.63210$ in~\cite{kh09master}).
  \item We add the codewords of constant dimension codes with minimum injection distance $2$ from $\cG_2(2,11)$, $\cG_2(9,11)$ of size $681$ each. The final code $\widetilde{\C_S}$ has minimum subspace distance $d_S=3$ (and injection distance $d_I=2$) and size $|\widetilde{\C_S}|=36810200 $, such that $\log (|\widetilde{\C_S}|)=25.1336$.
\end{enumerate}

\end{ex}

%

%
%
%

Table~\ref{tabNonConst}  shows some examples of cardinalities of our codes based on puncturing (for both the subspace and the injection metric) in $\PG$ compared to the codes of~\cite{kh09master} (for the injection metric), for $q=2$. To make the comparison easier we present the cardinalities in the logarithmic form.

One can see that for odd $n$ our codes are larger than the known ones, while for even $n$ this is not the case.

\begin{table}[ht]
{\centering
\begin{tabular}{|c|c|c||c|c|c|}
\hline
$n$ & $d_S$ & $\log(\widetilde{\C_S})$ & $d_I$ & $\log(\widetilde{\C_I})$ & $\log(\widetilde{\C})$ from~\cite{kh09master} \\\hline
11 & 3 & 25.1336 &2& \textbf{ 25.1395}&24.6321 \\
11 & 5 & 18.9806 &3& \textbf{ 18.9806}& 18.0298\\
\hline
12 & 3 & 29.728 & 2 & 29.7586&\textbf{ 30.3372} \\
12 & 5 & 20.6101 &3& 20.6107 &\textbf{ 24.0054}\\
\hline
13 & 3 & 36.1454 &2& \textbf{ 36.1511}&35.6303\\
13 & 5 & 28.9917 &3& \textbf{ 28.9924}& 28.0265\\
\hline
14 & 3 & 41.7352 &2& 41.7651&\textbf{ 42.33625} \\
14 & 5 & 33.5804 &3& 33.5806&\textbf{ 35.00464} \\
\hline

\end{tabular}
\caption{Comparison of cardinalities of our codes in $\PG$ based on puncturing
with the codes in~\cite{kh09master}
.\label{tabNonConst}
}
}
\end{table}



\section{Conclusion and Open Problems}
\label{sec:conclusion}

In this work we presented new constructions for constant dimension codes, and based on these also new constructions for non-constant dimension codes. To do so we used the known techniques of the multilevel construction and pending dots, as well as new results on Ferrers diagrams arising from matchings of the complete graph. Moreover, we derived a way of constructing new codes from old codes. The new constructions give rise to the largest known codes for most sets of parameters, as shown in the tables of Section \ref{sec:Tables} and \ref{sec:non-constant}. This means that these codes have the best known transmission rate for a given error-correction capability.

For future research it would be interesting to derive bounds analogous to the one of Theorem \ref{trm:upper bound from Steiner Structure} for other values of $d$, and see if any of our constructions attain such a bound (asymptotically). Furthermore, we would like to develop results of Ferrers diagrams rank metric codes related to the complete graph for codes of minimum rank distance $d\neq 2$, and investigate if we could use such results for constant dimension code constructions with minimum injection distance $d$ (and respective non-constant dimension codes).

Another open question is how these codes can be decoded efficiently. Due to their similarity to the multilevel construction the codes constructed by our new constructions can be decoded with an analogous decoding algorithm but the structure of the identifying vectors might be useful and could be exploited for a more efficient algorithm.


\section*{Appendix}

\subsection{Proof of Lemma \ref{thm6}}
\label{ap:A}

\begin{proof}
We will prove the first statement for $n'$ even and $i\leq \frac{n'}{2}$. The other statements can be proven analogously.

Let us look at the graph of the proof of Theorem~\ref{circleconstruction} again, labeled as mentioned before. Choose some center edge $(n',i)$ where $i\leq \frac{n'}{2}$. Remember that $(n',i)$ corresponds to the length $n'$ binary vector with a $1$ in positions $i$ and $n'$ and zeros elsewhere. Hence the arising Ferrers diagram has only one row with exactly $(n'-i-1)$ many dots.

Now we look at all edges whose smaller entry $i'$ satisfies $1\leq i'<i$. Such an edge will always be of the form $(i-j,i+j)$ for $1\leq j < i$, thus there are $(i-1)$ of these edges. One can see by induction that all of these edges give rise to Ferrers diagrams of the same size, since a FD corresponding to $(x-1,y+1)$ can be obtained from the FD corresponding to $(x,y)$ by adding a point in the first row and deleting a point in the second row. We can count the dots e.g. in the FD corresponding to $(1,2i-1)$: There are $n'-2$ dots in the first row and $n'-(2i-1)$ in the second, hence a sum of $2n'-2i-1$ dots for the whole FD.

The edges that are left are of the form $(\frac{n'}{2}+i-1-j, \frac{n'}{2}+i+j)$ for $0\leq j< \frac{n'}{2}-i$.
With the same argument as in the paragraph before, all of these FD have the same number of dots and there are $\frac{n'}{2}-i$ many of them. We can count the dots in the FD arising from $(i, n'-1+i)$: There are $n'-1-i$ dots in the first row and $n'-(n'-1+i)$ in the second, hence a sum of $n'-2i$ dots for the whole FD.
%
\end{proof}



\subsection{Proof of the cardinalities in Theorems \ref{thm:c4} and \ref{thm:c5}}
\label{ap:B}

\begin{proof}
We derive the cardinalities of each component of the set of identifying vectors from Theorems~\ref{thm:c4} and~\ref{thm:c5}.

Let $n$ be even.
The FDRM code with rank distance $d =2$ arising from the identifying vectors of
\begin{itemize}
\item $\cA_1^4$ has cardinality $(q^{(n-4)}+ q^{(n-6)}) (q^{2(n-6)}+ (\frac{n}{2}-4)q^{(n-7)}+q^{(\frac{n}{2}-4)} ) $.
\item $\cA_2^4$ has cardinality $(q^{(n-5)}+ q^{(n-6)}) (\left(\frac{n}{2} - 4\right) q^{(n-8)} + q^{(2n-13)} + q^{(n-7)}) $.
\item $\cA_3^4$ has cardinality $(q^{(n-5)}+ q^{(n-6)})\times \\ \left[\sum_{i=2}^{\lceil\frac{q}{2}\rceil+1} (iq^{2n-2i-10} + (\frac{n-6}{2} -i) q^{n-2i-5} + q^{\frac{n-6}{2} -i})       + \right. \\ \left.      \sum_{i=2}^{\lfloor\frac{q}{2}\rfloor+1} (iq^{2n-2i-11} + (\frac{n-6}{2} -i) q^{n-2i-6} + q^{n -i-6})  \right]$.
\end{itemize}
\begin{itemize}
\item $\cA_1^5$ has cardinality $(q^{2n-10}+ q^{2n-14}) (q^{2(n-7)}+ (\frac{n-8}{2})q^{(n-9)}) $.
\item $\cA_2^5$ has cardinality $(q^{2n-11}+ q^{2n-13}) (\frac{n-8}{2}  q^{(n-10)} + q^{(2n-15)} ) $.
\item $\cA_3^5$ has cardinality $(q^{2n-12}+ q^{2n-13}) (2q^{2(n-8)}+ \frac{n-10}{2}q^{(n-11)} ) $.
\item $\cA_4^5$ has cardinality $(q^{2n-12}+ q^{2n-14}) (\frac{n-10}{2}q^{(n-12)} + 2q^{(2n-17)} ) $.
\item $\cA_5^5$ has cardinality $(q^{2n-12}+ q^{2n-14})\times \\ \left[\sum_{i=3}^{\lceil\frac{q}{2}\rceil+2} (iq^{2n-2i-12} + (\frac{n-6}{2} -i) q^{n-2i-7} )       + \right. \\ \left.      \sum_{i=3}^{\lfloor\frac{q}{2}\rfloor+2} (iq^{2n-2i-13} + (\frac{n-6}{2} -i) q^{n-2i-8} )  \right] $.
\end{itemize}

Let $n$ be odd.
The FDRM code with rank distance $d =2$ arising from the identifying vectors of
\begin{itemize}
\item $\cA_1^4$ has cardinality $(q^{(n-4)}+ q^{(n-6)}) (q^{2(n-6)}+ (\frac{n-3}{2})q^{(n-8)}) $.
\item $\cA_2^4$ has cardinality $(q^{(n-5)}+ q^{(n-6)}) (\frac{n-3}{2}  q^{n-9} + q^{2n-13}) $.
\item $\cA_3^4$ has cardinality $(q^{(n-5)}+ q^{(n-6)}) \times \\ \left[\sum_{i=2}^{\lceil\frac{q}{2}\rceil+1} (iq^{2n-2i-10} + (\frac{n-5}{2} -i) q^{n-2i-6} )       + \right. \\ \left.      \sum_{i=2}^{\lfloor\frac{q}{2}\rfloor+1} (iq^{2n-2i-11} + (\frac{n-5}{2} -i) q^{n-2i-7} )  \right]$.
\end{itemize}
\begin{itemize}
\item $\cA_1^5$ has cardinality $(q^{2n-10}+ q^{2n-14}) (q^{2n-14}+ (\frac{n-9}{2})q^{(n-8)} + q^{\frac{n-9}{2}}) $.
\item $\cA_2^5$ has cardinality $(q^{2n-11}+ q^{2n-13}) (\frac{n-9}{2}  q^{(n-9)} + q^{(2n-15)} + q^{n-8}) $.
\item $\cA_3^5$ has cardinality $(q^{2n-12}+ q^{2n-13}) (q^{2n-16}+ (\frac{n-11}{2})q^{(n-10)} + q^{\frac{n-11}{2}} ) $.
\item $\cA_4^5$ has cardinality $(q^{2n-12}+ q^{2n-14}) (\frac{n-11}{2}  q^{(n-11)} + 2q^{(2n-17)} + q^{n-9}   ) $.
\item $\cA_5^5$ has cardinality $(q^{2n-14}+ q^{2n-12})\times \\ \left[\sum_{i=3}^{\lceil\frac{q}{2}\rceil+2} (iq^{2n-12} + (\frac{n-7}{2} -i) q^{n-2i-6} + q^{\frac{n-7}{2} -i})       + \right. \\ \left.      \sum_{i=3}^{\lfloor\frac{q}{2}\rfloor+2} (iq^{2n-2i-13} + (\frac{n-7}{2} -i) q^{n-2i-7} + q^{n -i-7})  \right]$.
\end{itemize}

These formulas imply the cardinality formulas of  Theorems \ref{thm:c4} and \ref{thm:c5} by summing them up and adding the largest known code of length $n-k$ with zeros appended in front. Note that when summing them up we can merge the cardinalities of $\cA_2^4$ and $\cA_3^4$, as well as $\cA_4^5$ and $\cA_5^5$, respectively. An index shift in the second sums results in the formulas of Theorems~\ref{thm:c4} and~\ref{thm:c5}.
\end{proof}


\subsection{Proof of Lemma~\ref{lm:diff A-MC}}
\label{ap:C}

\begin{proof}
 Let $s=\sum_{i=3}^k i$ be as defined in Construction A and let $n':=n-s$.
 According to Construction A and by Theorem~\ref{trm:recursive_parameters} we have that the part of the code corresponding to the identifying vectors in $\cA_{0}^{k}\cup \{v_{00}^k\}$, denoted by $C_{A0}$,
is of cardinality $q^{2(n-k)}+q^{2(n-(k+(k-1)))}+\ldots+q^{2(n'+3)}+q^{2n'}$
and the cardinality of $C_A\setminus C_{A0}$  is $\Gauss{n'}{2}$.

The number of identifying vectors for $C_{A0}$ is $k-2$.
Let $N_s$ be the set of identifying vectors  for the multicomponent construction  with the first nonzero coordinate in the first $s=n-n'$ positions. Then $|N_s|=\lceil\frac{n-n'}{k-1}\rceil=\lceil\frac{k^2+k-6}{2(k-1)}\rceil\leq k-2$, for $k\geq 5$.
Hence, $|N_s|$ is at most the number of identifying vectors for $C_{A0}$.
 Denote the subcode of $C_{MC}$ corresponding to the identifying vectors in $N_{s}$ by $C_{MC}^{s}$. Then
$$ |C_{MC}^{s}| = q^{2(n-k)} + q^{2(n-2k+1)} + q^{2(n-3k+2)} + q^{2(n-4k+3)} + \dots$$
and
$$ |C_{A0}| = q^{2(n-k)} + q^{2(n-2k+1)} + q^{2(n-3k+3)} + q^{2(n-4k+6)}+ \dots$$
Since the number of summands in the former is at most the number of summands in the latter, we get that $ |C_{MC}^{s}| \leq  |C_{A0}| $.

Now we consider the identifying vectors of $C_{MC}^{n'}:=C_{MC}\backslash C_{MC}^{s}$, i.e.\ the identifying vectors with all nonzero entries contained in the last $n'$ coordinates. Let $x:=\left\lfloor\frac{n'-k}{k-1}\right\rfloor +1=\left\lfloor\frac{n'-1}{k-1}\right\rfloor$ be an upper bound on the number of identifying vectors in $C_{MC}^{n'}$. Then
\[|C_{MC}^{n'}|\leq
\sum_{i=0}^{x-1}q^{2(n'-k-i(k-1))}
=\sum_{i=1}^{x}q^{2(n'-1-i(k-1))}\]
\[=q^{2(n'-1)}\sum_{i=1}^{x}q^{-2i(k-1)}
=q^{2(n'-1)-2x(k-1)}\frac{(q^{2x(k-1)}-1)}{(q^{2(k-1)}-1)}\]
\[\leq
q^{2(n'-1)-2(k-1)+1}=q^{2n'-2k+1}.
\]
Then
\[|C_A\setminus C_{A0}|-|C_{MC}^{n'}|\geq \Gauss{n'}{2}-q^{2n'-2k+1}\]
\[\geq q^{2n'-4}-q^{2n'-2k+1}=q^{2n'-2k+1}(q^{2k-5}-1).
\]
Hence,
\[ |C_A|-|C_{MC}|\geq q^{2n'-2k+1}(q^{2k-5}-1)> q^{2n'-2k+1}q^{2k-6}\]
\[>q^{2n'-5}=q^{2n-k^2-k+1},
\]
and the statement of the lemma follows for $k\geq 5$.

For $k=4$, $\cA_{0}^{4}\cup \{v_{00}^4\}$ contains only two different identifying vectors which are identical to the first two vectors of $C_{MC}$.
Now we consider the identifying vectors of $C_{MC}$ with all nonzero entries contained in the last $n'+1$ coordinates. Note that with the first two identifying vectors these vectors are all the vectors of $C_{MC}$.
We denote by $C_{MC}^{n'+1}$ the part of the code $C_{MC}$ which corresponds to these vectors. Denote by $x:=\left\lfloor\frac{n'}{3}\right\rfloor$ an upper bound on the number of identifying vectors in $C_{MC}^{n'+1}$. Then
$$|C_{MC}^{n'+1}|\leq \sum_{i=1}^{x}q^{2(n'-3i)}=q^{2n'}\sum_{i=1}^{x}q^{-6i}$$
$$=q^{2n'-6x}\frac{(q^{6x}-1)}{(q^{6}-1)}\leq
q^{2n'-5}.
$$
Then
$$|C_A|-|C_{MC}|=|C_A\setminus C_{A0}|-|C_{MC}^{n'+1}|\geq \Gauss{n'}{2}-q^{2n'-5}$$
$$\geq q^{2n'-4}-q^{2n'-5}=q^{2n'-5}(q-1)\geq q^{2n'-5}=q^{2n-k^2-k+1}
$$
and the statement of the lemma follows for $k=4$.
\end{proof}

\subsection{Proof of Lemma~\ref{lm:diff B-MC}}
\label{ap:D}
\begin{proof}
By equation~(\ref{eq:MC}), the cardinality of the code $C_{MC}$ obtained by the multicomponent construction with distance $d=2$ is upper bounded by
$$|C_{MC}|\leq A^*_q(n-k,2,k)+\sum_{i=0}^{\lceil\frac{k}{2}\rceil-1}q^{(k-1)(n-k-2i)}
$$
$$= A^*_q(n-k,2,k)+q^{(k-1)(n-k)}+q^{(k-1)(n-k)}\sum_{i=1}^{\lceil\frac{k}{2}\rceil-1}q^{-2(k-1)i}
$$
$$=A^*_q(n-k,2,k)+q^{(k-1)(n-k)}+$$
$$q^{(k-1)(n-k)}\frac{q^{2(k-1)(\lceil\frac{k}{2}\rceil-1)}-1}{q^{2(k-1)(\lceil\frac{k}{2}\rceil-1)}(q^{2(k-1)}-1)}
$$
$$<A^*_q(n-k,2,k)+q^{(k-1)(n-k)}+q^{(k-1)(n-k-2)+1}$$
$$=A^*_q(n-k,2,k)+q^{(k-1)(n-k)}+q^{(k-1)(n-k)-2k+3}.
$$
On the other hand, by Theorem~\ref{thm:cardinalityB}, the cardinality of $C_B$ is bounded by
$$
|C_B|\geq
A^*_q(n-k,2,k) +q^{(k-1)(n-k)}+$$
$$q^{(k-1)(n-k)-4}+q^{(k-3)(n-k)-4+4\lfloor\frac{n-k}{2}\rfloor-4+2\epsilon(n-k)}+Z
$$
$$=
A^*_q(n-k,2,k)+q^{(k-1)(n-k)}+q^{(k-1)(n-k)-4}$$
$$+q^{(k-1)(n-k)-8}+Z,
$$
where
$$Z=\left(\sum_{i=2}^{\lfloor \frac{k-3}{2}\rfloor} q^{(k-3)(n-k)-4i} + \epsilon(k-1) q^{(k-3)(n-k-2)}\right)$$ $$\times\sum_{i=0}^{\lfloor\frac{n-k}{2}\rfloor -2} q^{2(2i+\epsilon(n-k))}>0.$$
Hence,
$$|C_B|-|C_{MC}|\geq Z+q^{(k-1)(n-k)-8}>q^{(k-1)(n-k)-8}.
$$
\end{proof}


\subsection{Proof of Lemma~\ref{lm:diff B-C}}
\label{ap:E}

\begin{proof}
First, by Theorem~\ref{thm:cardinalityB}, for the given parameters we have
$$|C_B|-2^{3(n-4)}-A^*_q(n-4,2,4)=(2^{n-4}+2^{n-6})\sum_{i=0}^{\frac{n-4}{2}-1}2^{4i}$$
$$=(2^{n-4}+2^{n-6})\frac{2^{2n-8}-1}{2^4-1}$$
$$\leq (2^{n-4}+2^{n-6})\frac{2^{2n-8}}{2^4-1} = \frac{2^{3n-12}+2^{3n-14}}{15}.$$
By Theorem~\ref{thm:c4} we have
$$|C_C|-2^{3(n-4)}-A^*_q(n-4,2,4)=$$
$$(2^{n-4}+2^{n-6})\left(2^{2n-12}+(\frac{n}{2}-4)2^{n-7}+2^{\frac{n}{2}-4}\right)+$$
$$
(2^{n-5}+2^{n-6})\Big(2^{2n-13}+\frac{n-10}{2}2^{n-9}+       2^{\frac{n-10}{2}}   +    2^{2n-14}$$
$$ +\frac{n-10}{2}2^{n-10}+2^{n-8}+
2^{2n-13}+\frac{n-8}{2}2^{n-8}+2^{n-7}\Big)
$$
$$=2^{3n-15}+2^{3n-20}+2^{3n-19}+X,$$
where
 $$X=(2^{n-4}+2^{n-6})\left((\frac{n}{2}-4)2^{n-7}+2^{\frac{n}{2}-4}\right)+$$
 $$(2^{n-5}+2^{n-6})\left(\frac{n-10}{2}(2^{n-9}+2^{n-10})
+2^{\frac{n-10}{2}} \right.$$
$$\left. +\frac{n-6}{2}2^{n-8}+2^{n-7}\right).$$
%


Now we find a lower bound on $X$. For $n\geq 10$ we have
$$X\geq \left(2^{n-4}+2^{n-6}\right)\left(2^{n-7}+2\right)$$
$$+\left(2^{n-5}+2^{n-6}\right)\left(2^{n-7}+2^{n-7}\right)
$$
$$\geq \left(2^{n-4}+2^{n-6}\right)2^{n-7}+\left(2^{n-5}+2^{n-6}\right)2^{n-6}$$
$$=2\cdot 2^{2n-11}+2^{2n-12}+2^{2n-13}\geq 2^{2n-10}.
$$

Since it holds that
$$2^{3n-15}\geq \frac{2^{3n-12}+2^{3n-14}}{15}   =  \frac{1}{3} 2^{3n-14}  ,
$$
it follows that
$$|C_C|-|C_B|\geq X+2^{3n-20}+2^{3n-19} \geq 3\cdot 2^{3n-20},$$
which implies the statement.
%
\end{proof}


\section*{Acknowledgment}

The authors wish to thank Tuvi Etzion  and Antonia Wachter-Zeh for many
helpful discussions.
They also thank Joachim Rosenthal for hosting the first author at the University of Zurich, where part of this work was done.

\bibliography{network_coding_stuff,network_coding_stuff_2}
\bibliographystyle{IEEEtran}

\end{document}